\newcommand{\newcommandm}[2]{\newcommand{#1}{{\ensuremath{#2}}}}
\DeclareFontFamily{OT1}{pzc}{}
\DeclareFontShape{OT1}{pzc}{m}{it}{<-> s * [1.10] pzcmi7t}{}
\DeclareMathAlphabet{\mathpzc}{OT1}{pzc}{m}{it}
\newcommand{\powerset}[1]{\mathcal{P}({#1})}
\newcommand{\integers}{\mathbb{Z}}
\renewcommand{\phi}{\varphi}
\newcommandm{\agents}{A}
\newcommand{\agentA}{a}
\newcommand{\agentB}{b}
\newcommandm{\atoms}{P}
\newcommand{\atomP}{p}
\newcommand{\atomQ}{q}
\newcommand{\atomsQ}{Q}
\newcommand{\state}[2][\empty]{{{#2}^{#1}}}
\newcommand{\stateS}[1][\empty]{\state[#1]{s}}
\newcommand{\stateT}[1][\empty]{\state[#1]{t}}
\newcommand{\stateU}[1][\empty]{\state[#1]{u}}
\newcommand{\stateV}[1][\empty]{\state[#1]{v}}
\newcommand{\stateW}[1][\empty]{\state[#1]{w}}
\newcommand{\stateX}[1][\empty]{\state[#1]{x}}
\newcommand{\statesT}[1][\empty]{\state[#1]{T}}
\newcommand{\states}[1][\empty]{\ensuremath{{S^{#1}}}}
\newcommand{\accessibility}[2][\empty]{\ensuremath{{\sim^{#1}_{#2}}}}
\newcommand{\successors}[3][\empty]{[{#3}]^{#1}_{#2}}
\newcommand{\valuation}[1][\empty]{\ensuremath{{V^{#1}}}}
\newcommand{\interpretation}[2][\empty]{[\![{#2}]\!]_{#1}}
\newcommand{\modelTuple}[1][\empty]{\ensuremath{{(\states[#1], \accessibility[#1]{}, \valuation[#1])}}}
\newcommand{\model}[1][\empty]{\ensuremath{{M^{#1}}}}
\newcommand{\modelAndTuple}[1][\empty]{\model[#1] = \modelTuple[#1]}
\newcommand{\pointedModel}[2][\empty]{\ensuremath{{M^{#1}_{#2}}}}
\newcommand{\pointedModelTuple}[2][\empty]{\ensuremath{{(\modelTuple[#1], {#2})}}}
\newcommand{\pointedModelAndTuple}[2][\empty]{\pointedModel[#1]{#2} = \pointedModelTuple[#1]{#2}}
\newcommand{\bisimilar}[1][\empty]{\simeq^{#1}}
\newcommand{\refines}[1][\empty]{\preceq_{#1}}
\newcommand{\simulates}[1][\empty]{\succeq_{#1}}
\newcommand{\bisimulation}{\mathfrak{R}}
\newcommand{\refinement}{\bisimulation}
\newcommand{\entails}{\models}  % changed 
\newcommand{\nentails}{\not\models} % changed
\renewcommand{\implies}{\rightarrow}
\renewcommand{\iff}{\leftrightarrow}
\newcommand{\necessary}[1][\empty]{{K_{#1}}}
\newcommand{\possible}[1][\empty]{{L_{#1}}}
\newcommand{\knows}{\necessary}
\newcommand{\suspects}{\possible}
\newcommand{\restrict}[2]{{{#1} | {#2}}}
\newcommand{\announceE}[1]{\langle #1 \rangle}
\newcommand{\announceA}[1]{[ #1 ]}
\newcommand{\somepas}{\Diamond}
\newcommand{\allpas}{\Box}
\newcommand{\someppas}{{\ensurestackMath{\stackinset{c}{-0.01ex}{c}{0.225ex}{\scriptscriptstyle+}{\raisebox{-0.225ex}{$\Diamond$}}}}}
\newcommand{\allppas}{{\ensurestackMath{\stackinset{c}{}{c}{}{\scriptscriptstyle+}{\raisebox{-0.225ex}{$\square$}}}}}
\newcommandm{\lang}{\mathpzc{L}} % Modal
\newcommandm{\langPl}{\lang_\mathit{pl}} % Propositional
\newcommandm{\langMl}{\lang_\mathit{el}} % Modal logic
\newcommandm{\langMlPlus}{\lang_\mathit{el}^{+}} % Positive
\newcommandm{\langPal}{\lang_\mathit{pal}} % Public announcement logic
\newcommandm{\langApal}{\lang_\mathit{apal}} % Arbitrary public announcement logic
\newcommandm{\langPapal}{\lang_\mathit{apal}^{+}} % Positive arbitrary public announcement logic
\newcommandm{\langPapalt}{\lang_\mathit{apal^{+*}}} % Positive arbitrary public announcement logic
\newcommandm{\logic}{\mathit{L}}
\newcommandm{\classS}{\mathpzc{S5}}
\newcommandm{\logicS}{\mathit{S5}}
\newcommandm{\logicPal}{\mathit{PAL}}
\newcommandm{\logicApal}{\mathit{APAL}}
\newcommandm{\logicBapal}{\mathit{BAPAL}}
\newcommandm{\logicPapal}{\mathit{APAL}^+}
\newcommandm{\axiomS}{\mathbf{S5}}
\newcommandm{\axiomPal}{\mathbf{PAL}}
\newcommandm{\axiomApal}{\mathbf{APAL}}
\newcommandm{\axiomPapal}{\mathbf{APAL}^+_\omega}
\newcommandm{\axiomPapalF}{\mathbf{APAL}^+_1}
\newtheorem{theorem}{Theorem}[section]  
\newtheorem{corollary}[theorem]{Corollary}
\newtheorem{definition}[theorem]{Definition}
\newtheorem{lemma}[theorem]{Lemma}
\newtheorem{proposition}[theorem]{Proposition}
\newcommand{\eq}{\leftrightarrow}
\newcommand{\Eq}{\Leftrightarrow}
\newcommand{\imp}{\rightarrow}
\newcommand{\vel}{\vee}
\newcommand{\Vel}{\bigvee}
\newcommand{\Dia}{\Diamond}
\newcommand{\dia}[1]{\langle #1 \rangle}
\renewcommand{\phi}{\varphi}
\newcommand{\union}{\cup}
\newcommand{\inter}{\cap}
\newcommand{\et}{\ensuremath{\land}}
\newcommand{\weg}[1]{}
\newcommand{\Naturals}{\mathbb N}
\newcommand{\domain}{\mathcal{D}}
\begin{document}

\author{Hans van Ditmarsch\thanks{CNRS, LORIA, Univ.\ of Lorraine, France \& IMSc, Chennai, India, {\tt hans.van-ditmarsch@loria.fr}. Corresponding author. We kindly acknowledge support from ERC project EPS 313360. A preliminary version of this work was presented at the 2014 Software Engineering Research Conference organized by the UWA doctoral school, see {\tt http://jameshales.org/manuscripts/csse2014.pdf}.}, Tim French\thanks{University of Western Australia, Perth, Australia, {\tt tim.french@uwa.edu.au}}, James Hales\thanks{University of Western Australia, Perth, Australia, {\tt james@jameshales.org}}}
\title{Positive Announcements}
\date{}
%\date{\today}

\maketitle

\begin{abstract}
    {\em Arbitrary public announcement logic} (\logicApal) reasons about
    how the knowledge of a set of agents changes after true public announcements and after arbitrary announcements of true epistemic formulas. We consider a
    variant of arbitrary public announcement logic called {\em positive arbitrary
    public announcement logic} (\logicPapal), which restricts arbitrary public
    announcements to announcement of {\em positive formulas}. Positive formulas prohibit statements
    about the ignorance of agents. The positive formulas correspond to the universal fragment in first-order logic. As two successive announcements of positive formulas need not correspond to the announcement of a positive formula, \logicPapal{} is rather different from \logicApal.  We show that \logicPapal{} is more
    expressive than public announcement logic \logicPal, and that \logicPapal\ is
    incomparable with \logicApal. We also provide a sound and complete infinitary axiomatisation. \\ % The decidability of \logicPapal{} is shown in a companion paper \cite{papaldec:2018}.\\ 

\noindent {{\bf Keywords}: 
Dynamic Epistemic Logic,
Multi-agent Systems,
Universal Formulas
}

\end{abstract}

\section{Introduction and overview} 

Public announcement logic (\logicPal{})~\cite{gerbrandyetal:1997,plaza:1989} extends
epistemic logic with operators for reasoning about the effects of specific
public announcements. The formula $\announceA{\psi} \phi$ means that ``$\phi$
is true after the truthful announcement of $\psi$''. This means that, when interpreted in an epistemic model with designated state, after submodel restriction to the states where $\psi$ is true (this includes the designated state, and `truthful' here means true), $\phi$ is true in that restriction. Arbitrary public
announcement logic (\logicApal{})~\cite{balbianietal:2008} augments this with
operators for quantifying over public announcements. The formula
$\allpas \phi$ means that ``$\phi$ is true after the truthful announcement of
any formula that does not contain $\Box$''. 

Quantifying over the communication of information as in
\logicApal{} has applications to epistemic protocol synthesis, where
we wish to achieve epistemic goals by communicating information to agents, but
where we do not know of a specific protocol that will achieve the goal, and where we may not
even know if such a protocol exists. In principle, synthesis problems can be solved
by specifying them as formulas in the logic, and applying model-checking or
satisfiability procedures. However in the case of \logicApal{}, while there is
a PSPACE-complete model-checking procedure~\cite{agotnesetal.jal:2010}, the 
satisfiability problem is undecidable in the presence of multiple agents~\cite{frenchetal:2008}. 

We consider a variant of \logicApal{} called {\em positive arbitrary public
announcement logic} (\logicPapal{}), we obtain various semantic results relating refinements to positive formulas, we give various rather surprising expressivity results, and we give a non-surprising axiomatization. In \logicApal{} the arbitrary public announcements quantify over quantifier-free formulas, that are equivalent to epistemic formulas (basic modal logic). Whereas in \logicPapal{} the arbitrary public announcements quantify over quantifier-free {\em positive} formulas: formula $\allppas \phi$ means that
``$\phi$ is true after the truthful public announcement of any {\em positive}
formula''. A formula is {\em positive} if, roughly, the knowledge modalities are never bound by negations. Positive formulas consist only of positive knowledge statements,
such as ``it is known that'', and prohibit negative knowledge statements such as
``it is not known that'' and ``it is uncertain that''. In the standard translation, such formulas correspond to the universal fragment \cite{andrekaetal:1998}. 

The restriction to positive formulas is natural in view of possible applications. There are many protocols wherein the messages convey that an 
agent {\em knows} an atomic proposition and wherein only the invariants or postconditions require that an agent {\em does not know} an atomic proposition. Knowledge of atomic propositions is stable and easy to verify whereas absence of knowledge is fragile and, typically, hard to verify. For example, verifying knowledge is done by direct observation such as witnessing a communication, or by message passing between principals in a security protocol (where messages are considered atomic components), or by reading a time-stamped blockchain ledger \cite{blockchain}. However, verifying that an agent does not know a proposition requires an assumption that there are no private communication channels or clandestine messages, and thus negative knowledge cannot be verified in the same way as positive knowledge. Consequently, quantifying over positive announcements can often be viewed as quantifying over protocols consisting of straightforwardly verifiable information. The decidability of positive arbitrary public announcement logic therefore means that we can answer the question whether it is possible to achieve a particular knowledge state by means of such protocols. 

Let us give some other concrete examples. In the 
alternating bit protocol \cite{halpernzuck:1992} the communicating agents achieve partial correctness of message transfer by stacking acknowledgements (where `acknowledge' means `know'). The internet protocol TCP/IP manages package transfer, taking into account of missing packages and time-outs, again by means of stacked knowledge \cite{stulpetal:2002}. In those case there are no concerns involving ignorance, it is a matter of guaranteeing (partial) knowledge. In various security protocols the worst-case scenario is that all messages between principals are intercepted, in other words, that they become public announcements (all aspects of the protocol except private keys may be assumed public). For example, in cards cryptography two communicating agents attempt to learn the card deal without other players (eavesdroppers) learning the card deal (or even any single card other than their own) \cite{fischeretal:1996,hvd.studlog:2003,cordonetal.tcs:2013}. The dining cryptographers protocol \cite{Chaum:1988,meyden:2004} has semi-public (coin tossing, observed by an agent and its neighbour) and public aspects (announcing bits, depending on the outcome of the coin toss and whether the agent paid for the meal), in order to guarantee an ignorance epistemic goal (who paid for the meal?). The public part consists of positive announcements (namely of known values of bits).

The logic \logicPapal{} is decidable. The proof of this result is substantial and of a fairly technical nature and it is therefore reported in a companion paper \cite{papaldec:2018}. As this result puts \logicPapal{} in perspective to similar logics, let us summarily sketch the picture. For an in-depth discussion we refer to \cite{papaldec:2018}. With respect to other logics with quantification over announcements,
\logicApal, the related {\em group announcement logic}, and {\em coalition announcement logic} are all undecidable \cite{agotnesetal:2016} (and all three are only known to have infinitary axiomatisations), whereas the {\em `mental model'} arbitrary public announcement logic of \cite{charrieretal:2015} and {\em Boolean arbitrary public announcement logic} (\logicBapal) \cite{hvdetal.bapal:2017} are decidable. 

As the name suggests, \logicBapal{} has quantification over Boolean announcements \cite{hvdetal.bapal:2017}. This form of quantification is therefore even more restricted than in \logicPapal. Its axiomatisation is finitary, unlike \logicPapal{}, for which we only report an infinitary axiomatisation.

From the dynamic epistemic logics that are quantifying over non-public information change, {\em arbitrary arrow update logic} \cite{hvdetal.undecidable:2017} is undecidable, whereas the already mentioned {\em refinement modal logic} \cite{bozzellietal.inf:2014} and {\em arbitrary action model logic} \cite{hales2013arbitrary} are decidable. For the last two logics this is an elementary consequence of the fact that they are as expressive as the base modal logic. This is shown with respect to ${\mathcal K}$ models (models for arbitrary accessibility relations). In \cite{hales.aiml:2012} it is also shown that refinement modal logic interpreted on models of the class \classS{} (where all accessibility relations are equivalence relations; the logic is then called {\em refinement epistemic logic}) is as expressive as the modal logic $S5$.

We hope that the logic \logicPapal{} offers a valuable contribution to this already diverse landscape of logics with quantification over information change.

In Section \ref{technical-preliminaries} we give an overview of structures and structural notions, such as epistemic model, bisimulation, and refinement, and we present public announcement logic and arbitrary public announcement logic. In Section \ref{syntax-semantics} we give the syntax and semantics of positive arbitrary public announcement logic  \logicPapal. In Section \ref{sec.modelchecking} we show that \logicPapal{} model checking is PSPACE-complete. In Section \ref{sec.expressivity} we demonstrate that \logicApal{} and \logicPapal{} are incomparable. In Section \ref{axiomatisation} we give the complete infinitary axiomatisation of \logicPapal.

\section{Public announcement logics}\label{technical-preliminaries}

We recall definitions and technical results from epistemic logic,
public announcement logic~\cite{gerbrandyetal:1997,plaza:1989} and
arbitrary public announcement logic~\cite{balbianietal:2008}. Throughout this contribution, let $\agents$ be a countable set of \emph{agents} and
let $\atoms$ be a countable set of \emph{propositional atoms} (or \emph{atoms}, or \emph{propositional variables}).

\subsection{Structural notions}

In this subsection we define {\em epistemic models}, {\em model restrictions}, and various types of {\em bisimulation}.
\begin{definition}\label{epistemic-model}
    An {\em epistemic model} $\modelAndTuple$ consists of 
    a {\em domain}  \states{}, 
    which is a non-empty set of states, 
    a set of {\em accessibility relations} \accessibility{}, indexed by agents
    $\agentA \in \agents$, where $\accessibility{\agentA} \subseteq \states
    \times \states$ is an equivalence relation on states  (a relation that is
    reflexive, transitive and symmetric), and a {\em valuation} $\valuation :
    \states \to \powerset \atoms$, which is a function from states to subsets of propositional atoms (namely those true in that state).

    The {\em class of all epistemic models} is called \classS{}.
    A {\em pointed epistemic model} $\pointedModelAndTuple{\stateS}$ 
    consists of an epistemic model \model{} along with a designated state $\stateS \in \states$. A pointed epistemic model will often also be called an epistemic model.
\end{definition}

Given two states $\stateS, \stateT \in \states$, 
we write $\stateS \accessibility{\agentA} \stateT$ to denote that 
$(\stateS, \stateT) \in \accessibility{\agentA}$. 
We write $\successors{\agentA}{\stateS}$ to denote the $\agentA$-equivalence
class of $\stateS$, which is the set of states $\successors{\agentA}{\stateS} =
\{\stateT \in \states \mid \stateS \accessibility{\agentA} \stateT \}$.
As we will often be required to discuss several models at once, we will use
the convention that 
$\pointedModel{\stateS} = \pointedModelTuple{\stateS}$,
$\pointedModel[\prime]{\stateS[\prime]} = \pointedModelTuple[\prime]{\stateS[\prime]}$,
$\pointedModel[\gamma]{\stateS[\gamma]} = \pointedModelTuple[\gamma]{\stateS[\gamma]}$,
etc. If $s \sim_a t$, we say that there is an $a$-\emph{link} ($a$-\emph{step}) between $s$ and $t$. An epistemic model is {\em connected} if between any two states in its domain there is a path consisting of such links, i.e., if for any states $s,t$ there are states $s = s_1, s_2, \dots, s_n = t$ and agents $a_1, \dots, a_{n-1}$ such that for all $1 \leq i \leq n-1$, $s_i \sim_{a_i} s_{i+1}$.

\begin{definition}\label{model-restriction}
    Let $\modelAndTuple \in \classS$ be an epistemic model and $\statesT \subseteq \states$ where $\emptyset\neq T$.
    We define the {\em restriction of $\model$ to $\statesT$} 
    as $\restrict{\model}{\statesT} = (\restrict{\states}{\statesT}, \restrict{\accessibility{}}{\statesT}, \restrict{\valuation}{\statesT})$ where:
    \begin{eqnarray*}
        \restrict{\states}{\statesT} &=& \statesT\\
        \restrict{\accessibility{\agentA}}{\statesT} &=& \accessibility{\agentA} \cap (\statesT \times \statesT)\\
        \restrict{\valuation}{\statesT}(\atomP) &=& \valuation(\atomP) \cap \statesT
    \end{eqnarray*}
\end{definition}
If $N$ is a restriction of $M$ we write $N \subseteq M$. A restriction $N$ of $M$ is also called a {\em submodel} of $M$.
\begin{definition}\label{bisimulation}
    Let $\modelAndTuple \in \classS$ 
    and $\model[\prime] = \modelTuple[\prime] \in \classS$
    be epistemic models. 
    A non-empty relation $\bisimulation \subseteq \states \times \states[\prime]$
    is a {\em bisimulation} if and only if for every 
    $(\stateS, \stateS[\prime]) \in \bisimulation$,
    $\atomP \in \atoms$, and
    $\agentA \in \agents$ 
    the conditions {\bf atoms-$\atomP$}, {\bf forth-$\agentA$} and {\bf back-$\agentA$} hold.
\begin{itemize}
\item    {\bf atoms-$\atomP$}: 
    $\stateS \in \valuation(\atomP)$ if and only if $\stateS[\prime] \in \valuation[\prime](\atomP)$.

\item    {\bf forth-$\agentA$}: 
    For every $\stateT \accessibility{\agentA} \stateS$ 
    there exists $\stateT[\prime] \accessibility[\prime]{\agentA} \stateS[\prime]$
    such that $(\stateT, \stateT[\prime]) \in \bisimulation$.

\item    {\bf back-$\agentA$}: 
    For every $\stateT[\prime] \accessibility[\prime]{\agentA} \stateS[\prime]$
    there exists $\stateT \accessibility{\agentA} \stateS$ 
    such that $(\stateT, \stateT[\prime]) \in \bisimulation$.
\end{itemize}
If $(\stateS, \stateS[\prime]) \in \bisimulation$ then we call
    $\pointedModel{\stateS}$ and $\pointedModel[\prime]{\stateS[\prime]}$
    {\em bisimilar} and write 
    $\pointedModel{\stateS} \bisimilar \pointedModel[\prime]{\stateS[\prime]}$ or (to indicate the relation) $\bisimulation: \pointedModel{\stateS} \bisimilar \pointedModel[\prime]{\stateS[\prime]}$. If for all $s \in S$ there is an $s' \in S'$ such that $\pointedModel{\stateS} \bisimilar \pointedModel[\prime]{\stateS[\prime]}$, and for all $s' \in S'$ there is an $s \in S$ such that $\pointedModel{\stateS} \bisimilar \pointedModel[\prime]{\stateS[\prime]}$, we write $M \bisimilar M'$.
\end{definition}
We note that the union of two bisimulations is a bisimulation, and that there is a maximal bisimulation between the states of an epistemic model, which is an equivalence relation, see \cite{blackburnetal:2001} for such standard notions. A model is {\em bisimulation minimal} iff for any $s,t \in S$ with $s \neq t$, $M_s$ is not bisimilar to $M_t$.

We will also require the notions of {\em restricted bisimulation} (restricted to a set of atoms $Q \subseteq P$) and {\em bounded bisimulation} (bounded to a depth $n \in \Naturals$).   
$\atomsQ$-Bisimulations are intended to preserve modal formulas that contain only atoms from $\atomsQ$, whereas $n$-bisimulations are intended to preserve the truth of formulas $\phi$ with wherein stacks of epistemic operators have maximal depth $n$ (this notion will be defined later).

\begin{definition}\label{q-bisimulation}
    Let $\model, \model[\prime] \in \classS$ be epistemic models and 
    let $\atomsQ \subseteq \atoms$ be a set of propositional atoms.
    A non-empty relation $\bisimulation \subseteq \states \times \states[\prime]$
    is a {\em $\atomsQ$-bisimulation} if and only if for every 
    $(\stateS, \stateS[\prime]) \in \bisimulation$ and $\agentA \in \agents$, {\bf forth-$\agentA$} and {\bf back-$\agentA$} hold, whereas {\bf atoms-$\atomP$} is only required to hold for all $\atomP \in \atomsQ$. If $(\stateS, \stateS[\prime]) \in \bisimulation$ then we call
    $\pointedModel{\stateS}$ and $\pointedModel[\prime]{\stateS[\prime]}$
    {\em $\atomsQ$-bisimilar} and write 
    $\pointedModel{\stateS} \bisimilar[\atomsQ] \pointedModel[\prime]{\stateS[\prime]}$.
\end{definition}
%
%%%%%%%%%%%%%%%%%%% Change by Tim%%%%%%%%%%%%%%%%%%%%%%%%%%%%%
%R^n+1 should be  a subset of R^n
The notion of $n$-bisimulation, for $n \in \Naturals$, is given by defining a set of relations $\bisimulation^0 \supseteq \dots \supseteq \bisimulation^n$. 
\begin{definition}\label{n-bisimulation}
Let $\model, \model[\prime] \in \classS$ be epistemic models, and $n \in \Naturals$. A non-empty relation $\bisimulation^0 \subseteq \states \times \states[\prime]$
    is a {\em $0$-bisimulation} if and only if for every 
    $(\stateS, \stateS[\prime]) \in \bisimulation^0$ and for every $p \in P$
    \begin{itemize}
\item    {\bf atoms-$\atomP$}: 
    $\stateS \in \valuation(\atomP)$ if and only if $\stateS[\prime] \in \valuation[\prime](\atomP)$.
\end{itemize}
A non-empty relation $\bisimulation^{n+1} \subseteq \states \times \states[\prime]$
    is an {\em $(n+1)$-bisimulation} if and only if for every 
    $(\stateS, \stateS[\prime]) \in \bisimulation^{n+1}$, for all $\atomP\in\atoms$, and
 for every 
    $\agentA \in \agents$, there is an $n$-bisimulation $\bisimulation^n \supseteq \bisimulation^{n+1}$ such that:
\begin{itemize}
\item    {\bf $(n+1)$-forth-$\agentA$}: 
    For every $\stateT \accessibility{\agentA} \stateS$ 
    there exists $\stateT[\prime] \accessibility[\prime]{\agentA} \stateS[\prime]$
    such that $(t,t') \in \bisimulation^n$;

\item    {\bf $(n+1)$-back-$\agentA$}: 
    For every $\stateT[\prime] \accessibility[\prime]{\agentA} \stateS[\prime]$
    there exists $\stateT \accessibility{\agentA} \stateS$ 
    such that $(t,t') \in \bisimulation^n$.
\end{itemize}
%where in {\bf $n$-forth-$\agentA$} and {\bf $n$-back-$\agentA$}, $\bisimulation^n \supseteq \bisimulation^{n+1}$ is an $n$-bisimulation.
%
If $(\stateS, \stateS[\prime]) \in \bisimulation^n$ for an $n$-bisimulation $\bisimulation^n$, then we call
    $\pointedModel{\stateS}$ and $\pointedModel[\prime]{\stateS[\prime]}$
    {\em $n$-bisimilar} and write 
    $\pointedModel{\stateS} \simeq^n \pointedModel[\prime]{\stateS[\prime]}$.
\end{definition}

\subsection{Syntax and semantics of public announcement logics}

We now define the syntax and semantics of {\em epistemic logic} \logicS, {\em public announcement logic} \logicPal, and {\em arbitrary public announcement logic} \logicApal.

\begin{definition}\label{apal-syntax}
    The {\em language of arbitrary public announcement logic \langApal{}} is the set of {\em formulas} generated by the following rule, where $\atomP \in \atoms$ and $\agentA \in \agents$. Typical members of \langApal{}\ are denoted by lower case Greek letters $\phi$, $\psi$, etc., possibly primed.
    $$
        \phi ::=
            \atomP \mid
            \neg \phi \mid
            (\phi \land \phi) \mid
            \necessary[\agentA] \phi \mid
            \announceA{\phi} \phi \mid
            \allpas \phi
    $$
\end{definition}
We will follow the usual rules for omission of parentheses. 
We use all of the standard abbreviations for propositional logic, and additionally the abbreviations 
$\possible[\agentA] \phi ::= \neg \necessary[\agentA] \neg \phi$,
$\announceE{\phi} \psi ::= \neg \announceA{\phi} \neg \psi$, and
$\somepas \phi ::= \neg \allpas \neg \phi$.
We also consider the language of {\em public announcement logic}, \langPal{}, consisting of \langApal{} without the $\allpas$ operator, the language of {\em epistemic logic}, \langMl{}, consisting of \langPal{} without $\announceA{\cdot}$ operators, and the language of {\em propositional logic}, \langPl, without any modalities. A formula in \langMl{} is an {\em epistemic formula}, and a formula in \langPl\ is a {\em Boolean}. The {\em epistemic depth} of a formula in \langApal{} counts the number of stacked $K_a$ operators (while ignoring the $\Box$ operators), i.e., $d(K_a\phi)=d(\phi)+1$, and $d(p)=0$, $d(\Box\phi)=d(\neg\phi)=d(\phi)$, $d(\phi\et\psi) = \max \{d(\phi),d(\psi)\}$, $d([\phi]\psi) = d(\phi)+d(\psi)$. We write $v(\phi)$ for the set of propositional variables occurring in $\phi$, where $v(\atomP) = \{\atomP\}$, $v(\necessary[\agentA] \phi) = v(\allpas\phi) = v(\neg\phi) = v(\phi)$, and $v([\phi]\psi) = v(\phi\et\psi) = v(\phi) \union v(\psi)$. %Lower case Greek letters, possibly primed, are used as formula variables.

% Let M=(S,\sim,V) be a model. Let \models be the binary relation between pointed 
% restrictions of M and formulas inductively defined as follows:

\begin{definition}\label{apal-semantics}
The binary {\em satisfaction} relation $\models$ between pointed epistemic models and $\langApal$ formulas is defined as follows by induction on formula structure. Let $\modelAndTuple \in \classS$ be an epistemic model. Then:
  \[  \begin{array}{lll}
        \pointedModel{\stateS} \entails \atomP &\text{ iff }& \stateS \in \valuation(\atomP)\\
        \pointedModel{\stateS} \entails \neg \phi &\text{ iff }& \pointedModel{\stateS} \nentails \phi\\
        \pointedModel{\stateS} \entails \phi \land \psi &\text{ iff }& \pointedModel{\stateS} \entails \phi \text{ and } \pointedModel{\stateS} \entails \psi\\
        \pointedModel{\stateS} \entails \necessary[\agentA] \phi &\text{ iff }& \text{for every } \stateT \sim_a \stateS : \pointedModel{\stateT} \entails \phi\\
        \pointedModel{\stateS} \entails \announceA{\phi} \psi &\text{ iff }& \text{if } \pointedModel{\stateS} \entails \phi \text{ then } (M|\phi)_s \entails \psi\\
        \pointedModel{\stateS} \entails \allpas \phi &\text{ iff }& \text{for every } \psi \in \langMl : \pointedModel{\stateS} \entails \announceA{\psi} \phi
    \end{array}\]
    where $M|\phi = M|\interpretation[\model]{\phi}$ with $\interpretation[\model]{\phi} = \{\stateS \in \states \mid \pointedModel{\stateS} \entails \phi\}$.
\end{definition}
When $M_{s}\models\phi$, we say that $\phi$ is {\em true} in $M_s$ (or in state $s$ of $M$), or that $M_s$ {\em satisfies} $\phi$. In the semantics of $\allpas$, a $\psi$ such that $\pointedModel{\stateS} \entails \announceA{\psi} \phi$ is called a {\em witness} of the {\em quantifier} $\Box$.

A model restriction $M|\phi$ to a formula $\phi$ restricts the domain of $M$ to those states
where $\phi$ is true. This is the basis of the semantics of public announcements. We note that $\phi$ may no longer be true in that model restriction. A typical counterexample is the Moore sentence $p \et \neg K_a p$: whenever true, after its announcement it is false. The restriction $M|\phi$ is also called the {\em result} of the announcement of $\phi$ in $M$.

Whenever $M_s \models \phi$ for all $s \in S$, we write $M \models \phi$ ($\phi$ is {\em valid on $M$}), and when $M \models \phi$ for all $M$ of class \classS, we write $\classS \models \phi$ and we say that $\phi$ is  {\em valid}. %The set of validities is called \logicApal. We will also continue to informally use \logicApal\ to denote {\em arbitrary public announcement logic}. 
Formula $\phi\in\langApal$ is {\em satisfiable} if there is an epistemic model $M_s$ such that $M_s \models \phi$.

Let $M_s$ and $M'_{s'}$ be given. If for all $\phi \in \langApal$, $M_s \models \phi$ if and only if $M'_{s'} \models \phi$, then $M_s$ and $M'_{s'}$ are {\em modally equivalent}, for which we write $M_s \equiv_\mathit{apal} M'_{s'}$. For modal equivalence for formulas up to modal depth $n$ we write $M_s \equiv^n_\mathit{apal} M'_{s'}$, and for modal equivalence for formulas in the language restricted to atoms in $Q \subseteq P$ we write $M_s \equiv^Q_\mathit{apal} M'_{s'}$.

Public announcement logic \logicPal{} and epistemic logic \logicS{}\ have the same semantics as \logicApal{} but defined on the languages \langPal{} and \langMl, respectively. The notation used for modal equivalence in \langMl\ is $\equiv_\mathit{el}$ (we do not need similar notation for \langPal{}, as every formula in \langPal{} is equivalent to a formula in \langMl{} \cite{plaza:1989}, see also the next subsection on expressivity); for the same up to modal depth $n$ it is $\equiv^n_\mathit{el}$, and in the language restricted to atoms in $Q \subseteq P$ it is $\equiv^Q_\mathit{el}$.

We continue with elementary results on the relation between bisimulation and modal equivalence.

\begin{lemma}[{\cite{HennessyM85}}]\label{bisimulation-preserves}
    Let $\pointedModel{\stateS}, \pointedModel[\prime]{\stateS[\prime]} \in \classS$ be epistemic models. Then $\pointedModel{\stateS} \bisimilar \pointedModel[\prime]{\stateS[\prime]}$ implies $\pointedModel{\stateS} \equiv_\mathit{el} \pointedModel[\prime]{\stateS[\prime]}$.
\end{lemma}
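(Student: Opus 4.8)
The plan is to prove the slightly stronger claim that underlies the statement: for any bisimulation $\bisimulation$ between $M$ and $M'$, every pair $(\stateS, \stateS[\prime]) \in \bisimulation$, and every epistemic formula $\phi \in \langMl$, we have $\pointedModel{\stateS} \entails \phi$ if and only if $\pointedModel[\prime]{\stateS[\prime]} \entails \phi$. The lemma then follows by taking the witnessing bisimulation for $\pointedModel{\stateS} \bisimilar \pointedModel[\prime]{\stateS[\prime]}$ and using the pair $(\stateS,\stateS[\prime])$. I would proceed by induction on the structure of $\phi$. Because $\langMl$ contains neither announcement operators nor the quantifier $\allpas$, the only cases to treat are atoms, negation, conjunction, and the knowledge modality $\necessary[\agentA]$.

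For the base case $\phi = \atomP$, the {\bf atoms-$\atomP$} clause of Definition~\ref{bisimulation} gives $\stateS \in \valuation(\atomP)$ iff $\stateS[\prime] \in \valuation[\prime](\atomP)$, which is exactly $\pointedModel{\stateS} \entails \atomP$ iff $\pointedModel[\prime]{\stateS[\prime]} \entails \atomP$. The Boolean cases $\neg\psi$ and $\psi_1 \land \psi_2$ are immediate from the inductive hypothesis together with the corresponding semantic clauses, since the same pair $(\stateS,\stateS[\prime])$ is used throughout.

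The only case requiring the structural conditions is $\phi = \necessary[\agentA]\psi$, and it is handled by {\bf forth-$\agentA$} and {\bf back-$\agentA$}. For the left-to-right direction, assume $\pointedModel{\stateS} \entails \necessary[\agentA]\psi$ and take an arbitrary $\stateT[\prime]$ with $\stateT[\prime] \accessibility[\prime]{\agentA} \stateS[\prime]$. By {\bf back-$\agentA$} there is a $\stateT \accessibility{\agentA} \stateS$ with $(\stateT,\stateT[\prime]) \in \bisimulation$; the assumption yields $\pointedModel{\stateT} \entails \psi$, and the inductive hypothesis, applied to the bisimilar pair $(\stateT,\stateT[\prime])$, gives $\pointedModel[\prime]{\stateT[\prime]} \entails \psi$. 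As $\stateT[\prime]$ was arbitrary, $\pointedModel[\prime]{\stateS[\prime]} \entails \necessary[\agentA]\psi$. The right-to-left direction is symmetric, using {\bf forth-$\agentA$} in place of {\bf back-$\agentA$}.

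There is no genuine obstacle here: this is the standard Hennessy--Milner preservation theorem, and the proof is a routine structural induction. The only point that requires care is invoking the inductive hypothesis on the correct bisimilar pair in the modal step -- one must pass from the chosen $\stateT[\prime]$ back to a matching $\stateT$ via {\bf back-$\agentA$}, rather than in the opposite direction -- and noting that since the accessibility relations are equivalence relations, the conditions $\stateT \accessibility{\agentA} \stateS$ and $\stateS \accessibility{\agentA} \stateT$ coincide, so the direction of quantification in the semantic clause for $\necessary[\agentA]$ matches that used in {\bf forth-$\agentA$} and {\bf back-$\agentA$}.
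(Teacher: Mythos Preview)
Your proof is correct and is the standard structural induction argument for this classical result. The paper itself does not prove this lemma; it simply cites \cite{HennessyM85} and treats the statement as well known, so there is no paper proof to compare against beyond observing that your argument is exactly the textbook one the citation points to.
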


\begin{lemma}[{\cite{HennessyM85}}]\label{bisimulation-hennessy-milner}
    Let $M_s, M'_{s'} \in \classS$ be image-finite epistemic models
    (each state has finitely many accessible states). Then $\pointedModel{\stateS} 
 \equiv_\mathit{el} \pointedModel[\prime]{\stateS[\prime]}$ implies 
 $\pointedModel{\stateS} \simeq \pointedModel[\prime]{\stateS[\prime]}$.
\end{lemma}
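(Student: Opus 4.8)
The plan is to use the modal-equivalence relation itself as the witnessing bisimulation, so that the content of the lemma reduces to checking the clauses of Definition~\ref{bisimulation}, with the image-finiteness hypothesis entering only at the \textbf{forth}/\textbf{back} clauses. Concretely, I would set
\[
  \bisimulation = \{ (\stateT, \stateT[\prime]) \in \states \times \states[\prime] \mid \pointedModel{\stateT} \equiv_\mathit{el} \pointedModel[\prime]{\stateT[\prime]} \},
\]
which is non-empty since $(\stateS, \stateS[\prime]) \in \bisimulation$ by assumption. It then suffices to verify that $\bisimulation$ is a bisimulation, for then $\pointedModel{\stateS} \bisimilar \pointedModel[\prime]{\stateS[\prime]}$ is immediate. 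The clause \textbf{atoms-$\atomP$} holds trivially, since each atom $\atomP$ is an $\langMl$ formula and $\pointedModel{\stateT} \equiv_\mathit{el} \pointedModel[\prime]{\stateT[\prime]}$ therefore forces $\stateT \in \valuation(\atomP)$ iff $\stateT[\prime] \in \valuation[\prime](\atomP)$.

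For \textbf{forth-$\agentA$}, I take $(\stateT, \stateT[\prime]) \in \bisimulation$ and $\stateU \accessibility{\agentA} \stateT$, and argue by contradiction that some $\agentA$-successor of $\stateT[\prime]$ is $\bisimulation$-related to $\stateU$. Were this to fail, then for each $\stateU[\prime] \in \successors{\agentA}{\stateT[\prime]}$ we would have $\pointedModel{\stateU} \nequiv_\mathit{el} \pointedModel[\prime]{\stateU[\prime]}$, hence a formula $\phi_{\stateU[\prime]} \in \langMl$ with $\pointedModel{\stateU} \entails \phi_{\stateU[\prime]}$ but $\pointedModel[\prime]{\stateU[\prime]} \nentails \phi_{\stateU[\prime]}$ (negating the distinguishing formula if needed). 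This is exactly where image-finiteness is used: because $\successors{\agentA}{\stateT[\prime]}$ is finite, and non-empty by reflexivity of $\accessibility[\prime]{\agentA}$, the conjunction $\phi = \bigwedge_{\stateU[\prime] \in \successors{\agentA}{\stateT[\prime]}} \phi_{\stateU[\prime]}$ is a legitimate $\langMl$ formula. Since $\pointedModel{\stateU} \entails \phi$ and $\stateU \accessibility{\agentA} \stateT$, we get $\pointedModel{\stateT} \entails \possible[\agentA] \phi$; but every $\agentA$-successor of $\stateT[\prime]$ falsifies some conjunct and hence $\phi$, so $\pointedModel[\prime]{\stateT[\prime]} \entails \necessary[\agentA] \neg \phi$, i.e.\ $\pointedModel[\prime]{\stateT[\prime]} \nentails \possible[\agentA] \phi$. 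As $\possible[\agentA] \phi \in \langMl$, this contradicts $\pointedModel{\stateT} \equiv_\mathit{el} \pointedModel[\prime]{\stateT[\prime]}$. The clause \textbf{back-$\agentA$} is entirely symmetric, now invoking image-finiteness of $\model$.

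The single delicate point is this passage from a potentially infinite family of distinguishing formulas to one finite conjunction that distinguishes $\stateU$ from \emph{all} of the $\agentA$-successors of $\stateT[\prime]$ at once; this is precisely what image-finiteness buys, and without it the argument breaks down, as the conjunction need not be a formula. Everything else — non-emptiness of $\bisimulation$, the atomic clause, and the symmetry of \textbf{back} — is routine.
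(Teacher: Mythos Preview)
Your proof is correct and is exactly the standard Hennessy--Milner argument: take modal equivalence as the candidate bisimulation and use image-finiteness to collapse the family of distinguishing formulas into a single finite conjunction, yielding a $\possible[\agentA]$-formula that separates $\stateT$ from $\stateT[\prime]$. The paper itself does not give a proof of this lemma; it simply cites \cite{HennessyM85} as a well-known result, so there is nothing further to compare against.
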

These are well-known results. 
We observe that Lemma \ref{bisimulation-preserves} can be generalised to the languages \langPal{} and \langApal{} (i.e., to modal equivalence of pointed epistemic models in the respective logics), as public announcements and arbitrary public announcements are bisimulation invariant operations. The latter was shown in \cite{agotnesetal.jal:2010} for the logic {\em GAL}, but the proof also applies to \logicApal; see also the similar proof for \logicPapal{} in Lemma \ref{bisimulation-preserves-papal}, later.

Analogous results to 
Lemma~\ref{bisimulation-preserves} \weg{and Lemma~\ref{bisimulation-hennessy-milner}} apply
to $\atomsQ$-bisimulations when we restrict the language of epistemic formulas 
to propositional atoms in $\atomsQ$, and analogous results also apply to $n$-bisimulations. 
\begin{lemma}[{\cite{dagostinoetal:2000,French06}}]\label{bisimulation-preserves3}
    Let $\pointedModel{\stateS}, \pointedModel[\prime]{\stateS[\prime]} \in \classS$ be epistemic models and let $Q \subseteq P$. Then $\pointedModel{\stateS} \simeq^Q \pointedModel[\prime]{\stateS[\prime]}$ implies $\pointedModel{\stateS} \equiv_\mathit{el}^Q \pointedModel[\prime]{\stateS[\prime]}$.
\end{lemma}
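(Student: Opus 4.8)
The plan is to prove that $Q$-bisimilarity implies $Q$-modal equivalence by induction on the structure of epistemic formulas whose propositional atoms all lie in $Q$. This is the standard Hennessy--Milner style argument (as in Lemma~\ref{bisimulation-preserves}), but carried out relative to the restricted atom set $Q$, so the only real adjustment is tracking where the atomic clause is used.

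First I would fix a $Q$-bisimulation $\bisimulation$ with $\bisimulation : \pointedModel{\stateS} \bisimilar[\atomsQ] \pointedModel[\prime]{\stateS[\prime]}$, and prove by induction on $\phi \in \langMl$ with $v(\phi) \subseteq Q$ that for every pair $(\stateT, \stateT[\prime]) \in \bisimulation$ we have $\pointedModel{\stateT} \entails \phi$ iff $\pointedModel[\prime]{\stateT[\prime]} \entails \phi$. The base case $\phi = \atomP$ is exactly where the hypothesis $\atomP \in Q$ is needed: since $v(\atomP) = \{\atomP\} \subseteq Q$, the condition {\bf atoms-$\atomP$} of Definition~\ref{q-bisimulation} applies and gives $\stateT \in \valuation(\atomP)$ iff $\stateT[\prime] \in \valuation[\prime](\atomP)$, i.e.\ $\pointedModel{\stateT} \entails \atomP$ iff $\pointedModel[\prime]{\stateT[\prime]} \entails \atomP$. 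The boolean cases $\neg\psi$ and $\psi \land \chi$ follow directly from the induction hypothesis and the semantic clauses, noting that $v(\neg\psi) = v(\psi)$ and $v(\psi\land\chi) = v(\psi)\cup v(\chi)$ so the relevant subformulas still only mention atoms in $Q$.

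The modal case $\phi = \necessary[\agentA]\psi$ is where {\bf forth-$\agentA$} and {\bf back-$\agentA$} enter, and it is the step that carries the substance of the argument. Suppose $\pointedModel{\stateT} \entails \necessary[\agentA]\psi$ and take any $\stateU[\prime] \accessibility[\prime]{\agentA} \stateT[\prime]$; by {\bf back-$\agentA$} there is some $\stateU \accessibility{\agentA} \stateT$ with $(\stateU, \stateU[\prime]) \in \bisimulation$. Since $\pointedModel{\stateU} \entails \psi$ (as $\stateU$ is $\agentA$-accessible from $\stateT$) and $v(\psi) = v(\necessary[\agentA]\psi) \subseteq Q$, the induction hypothesis yields $\pointedModel[\prime]{\stateU[\prime]} \entails \psi$; as $\stateU[\prime]$ was arbitrary, $\pointedModel[\prime]{\stateT[\prime]} \entails \necessary[\agentA]\psi$. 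The converse direction is symmetric, using {\bf forth-$\agentA$} instead. Applying the completed induction to the designated pair $(\stateS, \stateS[\prime]) \in \bisimulation$ gives agreement on all epistemic formulas over $Q$, which is precisely $\pointedModel{\stateS} \equiv_\mathit{el}^Q \pointedModel[\prime]{\stateS[\prime]}$.

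I do not expect a genuine obstacle here, since the result is a routine relativisation of the classical preservation lemma; the only point requiring care is the bookkeeping on $v(\phi)$, ensuring that at each inductive step the subformulas invoked still have all their atoms in $Q$ (so that the weakened atomic clause of Definition~\ref{q-bisimulation} suffices), and that the back-and-forth conditions are quantified correctly in the $\necessary[\agentA]$ case. Because $\langMl$ has no announcement or quantifier operators, there are no additional cases to handle.
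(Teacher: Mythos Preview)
Your argument is correct and is exactly the standard structural induction one would expect; the paper itself does not give a proof of this lemma but simply cites it from the literature, so there is nothing to compare against beyond noting that your proof is the canonical one.
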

\begin{lemma}[\mbox{\cite[Prop.\ 2.31]{blackburnetal:2001}}]\label{bisimulation-preserves2}
    Let $\pointedModel{\stateS}, \pointedModel[\prime]{\stateS[\prime]} \in \classS$ be epistemic models and let $n \in \Naturals$. Then $\pointedModel{\stateS} \simeq^n \pointedModel[\prime]{\stateS[\prime]}$ implies $\pointedModel{\stateS} \equiv_\mathit{el}^n \pointedModel[\prime]{\stateS[\prime]}$.
\end{lemma}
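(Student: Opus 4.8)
The plan is to prove the statement by induction on $n$, with an inner induction on the structure of epistemic formulas of bounded depth. First I would record a preliminary observation: if $\bisimulation^n$ is an $n$-bisimulation and $(\stateS,\stateS[\prime]) \in \bisimulation^n$, then $\stateS \in \valuation(\atomP) \iff \stateS[\prime] \in \valuation[\prime](\atomP)$ for every $\atomP \in \atoms$. This follows by unwinding the layered definition: for $n \geq 1$ the defining clause furnishes an $(n-1)$-bisimulation $\bisimulation^{n-1} \supseteq \bisimulation^n$ still containing $(\stateS,\stateS[\prime])$, and iterating down the chain one reaches a $0$-bisimulation containing $(\stateS,\stateS[\prime])$, where \textbf{atoms-$\atomP$} is explicitly required. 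The same unwinding shows that $\pointedModel{\stateS} \simeq^{n+1} \pointedModel[\prime]{\stateS[\prime]}$ implies $\pointedModel{\stateS} \simeq^{n} \pointedModel[\prime]{\stateS[\prime]}$, a fact I will use freely.

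For the base case $n = 0$, $\pointedModel{\stateS} \simeq^0 \pointedModel[\prime]{\stateS[\prime]}$ means $(\stateS,\stateS[\prime])$ lies in a $0$-bisimulation, so $\stateS$ and $\stateS[\prime]$ agree on all atoms. Every $\phi \in \langMl$ with $d(\phi) = 0$ is a boolean, so a routine induction on its propositional structure (base case: the atoms just established; steps: $\neg$ and $\land$ commute with the biconditional) yields $\pointedModel{\stateS} \entails \phi \iff \pointedModel[\prime]{\stateS[\prime]} \entails \phi$, i.e.\ $\pointedModel{\stateS} \equiv_\mathit{el}^0 \pointedModel[\prime]{\stateS[\prime]}$.

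For the inductive step, assume the lemma holds for $n$ and suppose $\bisimulation^{n+1} : \pointedModel{\stateS} \simeq^{n+1} \pointedModel[\prime]{\stateS[\prime]}$. I would prove, by induction on the structure of $\phi \in \langMl$ with $d(\phi) \leq n+1$, that $\pointedModel{\stateS} \entails \phi \iff \pointedModel[\prime]{\stateS[\prime]} \entails \phi$. The atomic case is the preliminary observation, and the $\neg$, $\land$ cases are immediate from the inner induction hypothesis (note $d$ does not increase under $\neg$ and $\land$). The essential case is $\phi = \necessary[a]\psi$, where $d(\necessary[a]\psi) = d(\psi)+1 \leq n+1$ gives $d(\psi) \leq n$. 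Suppose $\pointedModel{\stateS} \entails \necessary[a]\psi$ and let $\stateT[\prime] \sim_a \stateS[\prime]$ be arbitrary. The defining clause of $\bisimulation^{n+1}$ for the pair $(\stateS,\stateS[\prime])$ and agent $a$ supplies an $n$-bisimulation $\bisimulation^n \supseteq \bisimulation^{n+1}$ satisfying \textbf{$(n+1)$-back-$a$}, so there is $\stateT \sim_a \stateS$ with $(\stateT,\stateT[\prime]) \in \bisimulation^n$, whence $\pointedModel{\stateT} \simeq^n \pointedModel[\prime]{\stateT[\prime]}$. Since $\pointedModel{\stateT} \entails \psi$ and $d(\psi) \leq n$, the outer induction hypothesis gives $\pointedModel[\prime]{\stateT[\prime]} \entails \psi$. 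As $\stateT[\prime]$ was arbitrary, $\pointedModel[\prime]{\stateS[\prime]} \entails \necessary[a]\psi$; the converse direction uses \textbf{$(n+1)$-forth-$a$} symmetrically.

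The main obstacle is keeping the two nested inductions and the layered quantifier structure of $n$-bisimulation straight: the witnessing $n$-bisimulation $\bisimulation^n$ is not global but is furnished per pair $(\stateS,\stateS[\prime])$ and per agent $a$ by the definition, so one must invoke it only after fixing $a$ (as one does when treating $\necessary[a]\psi$) and use that $(\stateT,\stateT[\prime]) \in \bisimulation^n$ delivers genuine $n$-bisimilarity of the successors, which is exactly what the outer hypothesis consumes. Everything else is bookkeeping on the epistemic depth $d$.
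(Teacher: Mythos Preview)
Your proof is correct. The paper does not actually provide its own proof of this lemma; it merely cites it as a standard result from \cite[Prop.~2.31]{blackburnetal:2001}, so there is nothing to compare against beyond noting that your double induction (outer on $n$, inner on formula structure, with the $\necessary[a]$ case discharged via the outer hypothesis applied to the $n$-bisimilar successors) is precisely the textbook argument one finds in that reference.
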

Again, both generalise to the language \langPal. {\em However, they do not generalise to the language \langApal.} This is because in the restricted logical language the arbitrary announcement still quantifies over all propositional variables and not only over those in $\atomsQ$, and, respectively, because the arbitrary announcement quantifies over formulas of arbitrarily large epistemic depth, and not only over formulas of at most the epistemic depth of the formula bound by the arbitrary announcement. We will get back to this after presenting the expressivity results for public announcement logics, in the next section.

\medskip

A common epistemic model in our contribution is the $a$-$b$-chain. We therefore introduce it in this section, as well as results on distinguishing formulas for $a$-$b$-chains.

Consider the epistemic model $M = (S,\sim,V)$ for two agents $a,b$ and a set of atoms $P$ (often a singleton $P = \{p\}$) such that $S$ is a subset of the integers $\mathbb Z$, $\sim_a$ is the symmetric and reflexive closure of $S^2 \inter \{ (2n,2n+1) \mid n \in \mathbb Z \}$, $\sim_b$ is the symmetric and reflexive closure of $S^2 \inter \{ (2n,2n-1) \mid n \in \mathbb Z \}$, such that between any two states in the domain $S$ a path of  $a$-links and $b$-links exists (in other words, such that $M$ is {\em connected}), and without any requirement on the valuation. As the $a$-links and $b$-links between states alternate in the model, such a model is called an {\em $a$-$b$-chain}, or simply a {\em chain}. The names of the states are arbitrary; any isomorphic model will also be called a chain. A chain is finite iff the domain $S$ is finite. A finite chain has a largest and a smallest element (with respect to $\mathbb Z$) of the domain. These are called the {\em ends} or the {\em edges} of the chain. Observe that an edge is a singleton $\sim_a$-class or $\sim_b$-class, and that all other equivalence classes consist of two states. A chain with only a largest or smallest element has only one edge. Such a one-edged chain is isomorphic to $\mathbb N$. A {\em prefix} of a one-edged $a$-$b$-chain is a submodel that is an $a$-$b$-chain and that contains that edge.

\medskip

We now introduce the {\em distinguishing formula}. Given a logical language $\lang$ and a semantics, such as the above for $\langApal$, and given a model $M = (S,\sim,V)$ and a subset $T \subseteq S$, a {\em distinguishing formula} for $T$ is some $\delta \in \lang$ such that $M_t \models \delta$ for all $t \in T$ and $M_t \not\models\delta$ for all $t \not\in T$.

It is well-known that all subsets of a finite (bisimulation minimal) epistemic model are distinguishable in the language $\langMl$ of epistemic logic (see \cite{jfak.odds:1998} or the more recent \cite{hvdetal.jlc:2014} discussing it; an older source in a slightly different setting is \cite{browneetal:1987}). 
\begin{lemma}\label{lem:distinguishingFormula}
Let $M = (S, \sim, V)$ be a (bisimulation minimal) finite $a$-$b$-chain and let $B \subseteq S$. Then $B$ has a distinguishing formula.
\end{lemma}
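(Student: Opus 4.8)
The plan is to reduce the statement to the existence of \emph{characteristic formulas} for individual states and then take a disjunction. First I would fix the correspondence between the index set and the states: since $M$ is a finite $a$-$b$-chain, its domain $S$ is finite and (by the convention that such a chain is isomorphic to a prefix of $\mathbb{N}$) can be indexed by an initial segment of $\mathbb{N}$, so $B \subseteq \mathbb{N}$ picks out a subset $T = \{ s \in S : \mathrm{index}(s) \in B \}$ of states. It then suffices to produce a formula $\delta \in \langMl$ with $M_t \models \delta$ exactly for $t \in T$.

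The key step is that every state of $M$ admits a characteristic formula in $\langMl$. Since $M$ is finite it is image-finite, so Lemma~\ref{bisimulation-hennessy-milner} applies: modally equivalent states are bisimilar, and conversely by Lemma~\ref{bisimulation-preserves} bisimilar states are modally equivalent. Bisimulation minimality means that distinct states are never bisimilar; combining these, any two distinct states $s \neq t$ are \emph{not} modally equivalent, so there is some $\phi_{st} \in \langMl$ distinguishing them, which after possibly replacing $\phi_{st}$ by $\neg \phi_{st}$ we may assume satisfies $M_s \models \phi_{st}$ and $M_t \not\models \phi_{st}$. Because $S$ is finite, the conjunction $\chi_s := \bigwedge_{t \in S,\, t \neq s} \phi_{st}$ is a well-formed epistemic formula, and by construction $M_u \models \chi_s$ iff $u = s$.

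Assembling the pieces, I would set $\delta := \bigvee_{s \in T} \chi_s$ (with the empty disjunction read as $\bot$, covering the case $B = \emptyset$). This is again a finite disjunction since $T \subseteq S$ is finite, hence $\delta \in \langMl \subseteq \langApal$, and $M_t \models \delta$ iff $M_t \models \chi_s$ for some $s \in T$, iff $t \in T$; thus $\delta$ is the required distinguishing formula for $T$ (equivalently, for $B$).

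I do not expect a genuine obstacle here, as the lemma is essentially the chain-specialisation of the general fact (cited just before the lemma) that all subsets of a finite bisimulation-minimal epistemic model are distinguishable in $\langMl$. The only points demanding care are that finiteness is exactly what licenses both the appeal to Hennessy--Milner (Lemma~\ref{bisimulation-hennessy-milner}) and the finiteness of the conjunctions and disjunctions, and that bisimulation minimality is what guarantees distinct states are genuinely separated by a formula rather than merely non-bisimilar. A more constructive alternative, exploiting the linear structure of the chain to write each $\chi_s$ explicitly (using epistemic depth to encode the distance of $s$ from an edge), is possible but unnecessary for this existence claim.
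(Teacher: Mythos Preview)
Your argument is correct and is precisely the standard route: finiteness plus bisimulation minimality, via Hennessy--Milner, yields characteristic formulas for each state, and a finite disjunction then distinguishes any subset. The paper in fact gives no proof of this lemma at all; it simply records it as a well-known fact with references, so your proposal spells out exactly the argument the paper invokes by citation.
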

Similarly, if the edge of a one-edged infinite $a$-$b$-chain has a distinguishing formula, then all finite subsets of that chain can be distinguished. In order to prove this we first define: $L_{ab}^0 = L_{ba}^0 := \epsilon$, and for $n \geq 0$, $L_{ab}^{2n+1} \delta_0 := L_bL_{ab}^{2n}$,  $L_{ab}^{2n+2} := L_aL_{ab}^{2n+1}$, $L_{ba}^{2n+1} \delta_0 := L_aL_{ba}^{2n}$,  $L_{ba}^{2n+2} := L_bL_{ba}^{2n+1}$. Informally, $L_{ab}^n$ is a stack of $n$ alternating  $L_a$ and $L_b$ operators of which the last one, if any, is $L_b$, whereas $L_{ba}^n$ is a stack of $n$ alternating $L_a$ and $L_b$ operators of which the last one, if any, is $L_a$. Note that for any formula $\phi$, $L_{ab}^0\phi = L_{ba}^0\phi = \phi$. Similarly to $L_{ab}^n$ and $L_{ba}^n$, we define  $K_{ab}^n$ and $K_{ba}^n$.
\begin{lemma} \label{lemma.dist}
Let $M = (S, \sim, V)$ be a one-edged infinite $a$-$b$-chain such that the edge has a distinguishing formula and let $B \subseteq S$ be finite. Then $B$ has a distinguishing formula.
\end{lemma}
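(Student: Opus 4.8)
The plan is to use the hypothesised distinguishing formula $\delta_0$ for the edge as a fixed ``beacon'' and to reach each state of the chain from the edge by a suitable stack of $L_a$ and $L_b$ operators, namely the stacks $L_{ba}^n$ introduced just above the statement. Relabel the domain as $\mathbb{N}$ with the edge being state $0$; without loss of generality (swapping the roles of $a$ and $b$, and hence of $L_{ba}$ and $L_{ab}$, if necessary) we may assume the edge is a singleton $\sim_b$-class, so that the unique path leading from $0$ into the chain uses the links $a, b, a, b, \dots$ This is exactly the configuration in which $L_{ba}^n$ traces the path descending from state $n$ back to the edge $0$.

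First I would define, for each $n \in \mathbb{N}$, the prefix formula $\gamma_n := L_{ba}^n \delta_0$, and establish the key claim: $M_m \models \gamma_n$ if and only if $m \leq n$; that is, $\gamma_n$ is a distinguishing formula for the prefix $\{0, 1, \dots, n\}$. The proof is by induction on $n$. The base case $n = 0$ is immediate, since $\gamma_0 = \delta_0$ distinguishes the edge $\{0\}$ by hypothesis. For the inductive step I would use that, by definition of the stacks, $\gamma_{n+1} = L_x \gamma_n$, where $x = a$ when $n+1$ is odd and $x = b$ when $n+1$ is even --- precisely the label of the link joining state $n+1$ to state $n$. Applying $L_x$ has two effects on the truth set $\{0, \dots, n\}$ of $\gamma_n$: by reflexivity of $\sim_x$ every state $m \leq n$ still satisfies $L_x \gamma_n$, being its own $x$-witness; and the single $x$-link from $n+1$ to $n$ adds exactly state $n+1$. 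Crucially, no state $m \geq n+2$ is added, because the $\sim_x$-class of such a state lies entirely in $\{n+2, n+3, \dots\}$ and so contains no witness $\leq n$.

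Given the claim, single states are distinguished by $\delta^{(n)} := \gamma_n \wedge \neg \gamma_{n-1}$ for $n \geq 1$ and $\delta^{(0)} := \delta_0$, whose truth set is exactly $\{n\}$. Finally, for a finite $B \subseteq \mathbb{N}$ I would set $\delta_B := \bigvee_{n \in B} \delta^{(n)}$; since $B$ is finite this is a well-formed formula of $\langMl$ (note $\delta_0 \in \langMl$, and the construction only adds the operators $L_a, L_b$ and Boolean connectives), and $M_t \models \delta_B$ if and only if $t \in B$. Hence $\delta_B$ is the required distinguishing formula.

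I expect the one genuine obstacle to be the inductive claim about $\gamma_n$: one must confirm both that reflexivity keeps the \emph{whole} prefix satisfying $\gamma_n$ (rather than only the newest state $n$), and that there is no leakage upward past $n$, which relies on the precise one-state overlap pattern of the $\sim_a$- and $\sim_b$-classes along the chain together with the alternation matching the defining recursion of $L_{ba}^n$. Everything downstream --- the Boolean combination for single states and the finite disjunction for $B$ --- is routine.
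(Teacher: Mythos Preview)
Your proof is correct and follows essentially the same route as the paper: both use the stacks $L_{ba}^n\delta_0$ to pin down the prefix $\{0,\dots,n\}$, take the Boolean difference with the previous prefix to isolate state $n$, and then disjoin over $B$. Your version is in fact slightly more explicit, giving a clean induction on $n$ for the prefix claim (including the ``no upward leakage'' check), where the paper only sketches this step.
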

\begin{proof}
Without loss of generality we assume that $S = \mathbb N$ (so that $B \subseteq \mathbb N$ is a finite set of natural numbers), that the edge is state $0$, and that $0 \sim_a 1$. Let $\delta_0\in\langMl$ be the assumed distinguishing formula of edge $0$. In other words, $M_0 \models \delta_0$ and for all $i>0$, $M_i \not \models \delta_0$.\footnote{We did not require that $M$ is bisimulation minimal in the formulation of the lemma. The minimality follows from the existence of $\delta_0$. Further note that $M|B$ need not be an $a$-$b$-chain, it may be disconnected.} Obviously, $M_n \models L_{ba}^n \delta_0$. However, also, for all states $i \leq n$, $M_i \models L_{ba}^n \delta_0$, as all states are $a$-accessible and $b$-accessible to themselves. Now let for $n > 0$, $\delta_n := L_{ba}^n \delta_0 \et \neg L_{ba}^{n-1} \delta_0$. From $M_i \models L_{ba}^i \delta_0$ for all $i \leq n$ and $M_j \models \neg L_{ba}^{j-1} \delta_0$ for all $j > n$ it follows that $M_n \models \delta_n$ and that $M_k \not \models \delta_n$ for any $k \neq n$. Therefore $\delta_n$ is a distinguishing formula for state $n \in \mathbb N$, and thus the distinguishing formula $\delta_B$ for $B$ is $\Vel_{i \in B} \delta_i$. 
\end{proof}
We will use this result frequently in subsequent proofs. %Note that the distinguishing formula is in $\langMl$. %Whether a distinguishing formula also exists in $\langMlPlus$ depends on the valuation. 

\subsection{Expressivity of public announcement logics} \label{sec.expal}

Given logical languages $\lang$ and $\lang'$, and a class of models in which $\lang$ and $\lang'$ are both interpreted (employing a satisfaction relation $\models$ resp.\ $\models'$), we say that \emph{$\lang$ is at least as expressive as $\lang'$}, if every formula in $\lang'$ is equivalent to a formula in $\lang$ (where `\emph{$\phi'\in\lang'$ is equivalent to $\phi\in\lang$}' means: for all $M_s$, $M_s \models' \phi'$ if and only if $M_s \models \phi$). If $\lang$ is not at least as expressive as $\lang'$ and $\lang'$ is not at least as expressive as $\lang$, then $\lang$ is \emph{incomparable} to $\lang'$ ($\lang$ and $\lang'$ are  incomparable). If $\lang$ is at least as expressive as $\lang'$, and $\lang'$ is at least as expressive as $\lang$, then $\lang$ is \emph{as expressive as} $\lang'$ ($\lang$ and $\lang'$ are equally expressive). Finally, if $\lang$ is at least as expressive as $\lang'$ but $\lang'$ is not at least as expressive as $\lang$, then $\lang$ is {\em more expressive} than $\lang'$. So, `more' means `strictly more'. The combination of a language with a semantics given a class of models determines a logic. In this work we only consider model class $\mathcal{S}5$. Also, in this work the clause of the satisfaction relation for a modality is the same for all languages containing that modality, so that it suffices only to employ $\models$. ``Given logic $L_1$ with language $\lang_1$ interpreted on model class $X_1$ by way of satisfaction relation $\models_1$, and logic $L_2$ with language $\lang_2$ interpreted on model class $X_2$ by way of satisfaction relation $\models_2$, $\lang_1$ is more expressive than $\lang_2$,'' therefore becomes ``given language $\lang$, model class $X$ and satisfaction relation $\models$, logic $L_1$ with language $\lang_1 \subseteq \lang$, and logic $L_2$ with language $\lang_2 \subseteq \lang$, $\lang_1$ is more expressive than $\lang_2$.'' We therefore abbreviate the latter by ``$L_1$ is more expressive than $L_2$,'' and similarly for other expressivity terminology.

The following expressivity results are shown by Plaza~\cite{plaza:1989} (Proposition~\ref{prop.ex1}), and by Balbiani {\em et al.}~\cite{balbianietal:2008} (Propositions~\ref{apal-single-expressiveness} and \ref{apal-expressiveness}). We give the proof of Proposition~\ref{apal-expressiveness} in detail, including an alternative proof (that is not known from the literature), as we will use these methods later when obtaining additional expressivity results for positive arbitrary public announcement logic.

\begin{lemma} \label{prop.ex1}
    \logicPal{} is as expressive as \logicS{} (for single or multiple agents).
\end{lemma}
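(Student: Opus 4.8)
The plan is to establish both directions of equal expressivity. One direction is immediate: since $\langMl \subseteq \langPal$, every epistemic formula is already a public announcement formula, so $\logicPal$ is at least as expressive as $\logicS$. The substantial direction is the converse, namely that every formula of $\langPal$ is equivalent, on the class $\classS$, to a formula of $\langMl$.

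For this I would use the Plaza-style reduction axioms, which allow one to push an announcement operator inward past each connective and modality until it can be discharged against an atom. The validities I would verify directly from Definition~\ref{apal-semantics} are $\announceA{\phi}\atomP \iff (\phi \imp \atomP)$, $\announceA{\phi}\neg\psi \iff (\phi \imp \neg\announceA{\phi}\psi)$, $\announceA{\phi}(\psi \land \chi) \iff (\announceA{\phi}\psi \land \announceA{\phi}\chi)$, and $\announceA{\phi}\necessary[\agentA]\psi \iff (\phi \imp \necessary[\agentA]\announceA{\phi}\psi)$. Each equivalence is a routine unfolding of the announcement clause, using that the restriction $M|\phi$ has domain $\interpretation[\model]{\phi}$ and that $\accessibility{\agentA}$ intersected with this domain is exactly the accessibility relation of the restricted model.

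The translation then proceeds by eliminating announcements one at a time. I would single out an innermost occurrence $\announceA{\phi}\psi$, meaning one whose announcement $\phi$ and whose scope $\psi$ both lie in $\langMl$ (such an occurrence exists whenever the formula is not already announcement-free). By induction on the structure of $\psi$, applying the four reduction axioms above, $\announceA{\phi}\psi$ is equivalent to a formula of $\langMl$: the atomic case is the base, and in the negation, conjunction and $\necessary[\agentA]$ cases the axioms re-expose $\announceA{\phi}$ only in front of proper subformulas of $\psi$, to which the induction hypothesis applies. Replacing this occurrence by its $\langMl$-equivalent strictly decreases the number of announcement operators in the whole formula, so an outer induction on that count terminates and yields an equivalent formula of $\langMl$.

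The main obstacle is termination, since the reduction axioms for $\neg$ and $\necessary[\agentA]$ reintroduce the announcement operator rather than remove it outright. Isolating an innermost announcement with announcement-free scope is exactly what makes the inner induction well-founded on the subformula order of $\psi$, so that the reintroduced $\announceA{\phi}$ only ever sits in front of strictly smaller formulas; the equivalence is preserved throughout because each rewriting step replaces a subformula by a logically equivalent one. Nested announcements $\announceA{\phi}\announceA{\psi}\chi$ need no separate composition axiom, as the inner $\announceA{\psi}\chi$ is eliminated first.
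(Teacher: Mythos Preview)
Your proposal is correct and follows exactly the standard Plaza reduction argument that the paper invokes by citation; the paper does not spell out a proof for this lemma but simply attributes it to Plaza~\cite{plaza:1989}, and the reduction axioms you verify are precisely {\bf AP}, {\bf AN}, {\bf AC}, {\bf AK} listed later in Section~\ref{axiomatisation}. Your choice to handle nested announcements by always attacking an innermost one (so that {\bf AA} is never needed) is a legitimate variant of the usual presentation and the termination argument via the double induction is sound.
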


\begin{proposition}\label{apal-single-expressiveness}
    \logicApal{} is as expressive as \logicPal{} for a single agent.
\end{proposition}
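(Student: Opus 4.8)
The plan is to translate every $\langApal$ formula into an equivalent $\langPal$ formula by induction on formula structure, where the only non-routine case is a formula $\allpas\phi$ whose argument $\phi$ is, by the induction hypothesis, already a $\langPal$ formula. For such a formula I would set $Q = v(\phi)$, a finite set of size $k$, and exhibit an equivalent finite conjunction of public announcements of booleans over $Q$; the reverse inclusion of the proposition is immediate since $\langPal\subseteq\langApal$.

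First I would reduce to the relevant structure. For a single agent the submodel of $M$ generated by $s$ is exactly the $\sim_a$-equivalence class $[s]_a$, a single cluster, and the truth of any formula at $s$ depends only on this generated submodel; moreover every announcement true at $s$ leaves a smaller cluster still containing $s$. Next I would enumerate the $2^k$ boolean state descriptions $\beta_1,\dots,\beta_{2^k}$ over $Q$ (each a conjunction of literals fixing every atom of $Q$), call an index $j$ a \emph{present $Q$-type} if $M_t\models\beta_j$ for some $t$ in the cluster, and write $j_s$ for the $Q$-type of $s$. Using that $\phi\in\langPal$ is invariant under $Q$-bisimulation (the generalisation to $\langPal$ of Lemma~\ref{bisimulation-preserves3}), I would show that the truth of $\phi$ at $s$ in any sub-cluster depends only on the pair (the $Q$-type $j_s$, the set of present $Q$-types): two single-cluster models agreeing on these two data are $Q$-bisimilar via the relation ``same $Q$-type'', so they agree on $\phi$.

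The heart of the argument is to show that, as far as the truth of $\phi$ is concerned, quantifying over announcements of arbitrary epistemic formulas is no stronger than quantifying over announcements of the booleans $\bigvee_{j\in J}\beta_j$ for $J\subseteq\{1,\dots,2^k\}$. Any epistemic announcement true at $s$ merely deletes states, so the resulting cluster still contains $s$ and its set of present $Q$-types is some subset of the original present types containing $j_s$; by the invariance above the truth of $\phi$ afterwards is determined by that subset alone. Conversely, every subset $X$ of the present types with $j_s\in X$ is realised exactly by announcing $\bigvee_{j\in X}\beta_j$. Hence $M_s\models\allpas\phi$ iff $(M|\bigvee_{j\in J}\beta_j)_s\models\phi$ for every $J$ with $s\models\bigvee_{j\in J}\beta_j$; and since $\announceA{\chi}\phi$ is vacuously true at $s$ whenever $s\not\models\chi$, dropping the side condition gives
\[
    \allpas\phi \;\equiv\; \bigwedge_{J\subseteq\{1,\dots,2^k\}} \announceA{\textstyle\bigvee_{j\in J}\beta_j}\,\phi ,
\]
a finite conjunction of $\langPal$ formulas over $Q$, hence itself in $\langPal$. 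Note this works uniformly even for infinite clusters, since the bookkeeping happens entirely at the level of the finitely many $Q$-types.

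I expect the main obstacle to be exactly the point flagged after Lemma~\ref{bisimulation-preserves2}: the quantifier $\allpas$ ranges over epistemic formulas built from \emph{all} atoms, not merely those in $Q$, so I must argue that announcements mentioning atoms outside $Q$ cannot affect $\phi$ beyond what the boolean $Q$-announcements already achieve. This is where the single-agent hypothesis is essential — the generated submodel is a flat cluster, so the only way an announcement can change $\phi$ is by changing which $Q$-types survive, and those effects are already captured by the $\beta_j$-announcements. The same collapse fails for multiple agents, which is consistent with $\logicApal$ being strictly more expressive than $\logicPal$ in that setting.
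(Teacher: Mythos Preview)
Your argument is correct. The paper does not supply its own proof of this proposition but cites Balbiani {\em et al.}~\cite{balbianietal:2008}; from the paper's later remark (in the single-agent \logicPapal{} proposition) that in that proof ``it plays no role \dots whether the announcement witnessing an \logicApal{} quantifier is an epistemic formula or a positive epistemic formula,'' the cited argument evidently also reduces $\allpas\phi$ to announcements of booleans over $v(\phi)$, which is essentially your approach.
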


\begin{proposition}\label{apal-expressiveness}
    \logicApal{} is (strictly) more expressive than \logicPal{} for multiple agents.
\end{proposition}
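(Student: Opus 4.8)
The plan is to establish the two directions separately. Since the semantics of $\logicApal$ extends that of $\logicPal$, every $\langPal$ formula is trivially a $\langApal$ formula equivalent to itself, so $\logicApal$ is at least as expressive as $\logicPal$. The substantive part is to exhibit a single $\langApal$ formula that is not equivalent to any $\langPal$ formula on the class $\classS$ with at least two agents. The natural candidate is a formula of the form $\allpas \phi$ or $\somepas \phi$ whose truth value depends on the presence of announcements of unbounded epistemic depth, since $\langPal$ formulas (being equivalent to $\langMl$ formulas by Lemma~\ref{prop.ex1}) have a bounded epistemic depth and can therefore be preserved by $n$-bisimulation for suitable $n$ (Lemma~\ref{bisimulation-preserves2}).

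First I would choose a concrete $\langApal$ formula, a good choice being $\somepas \necessary[a] p$ (``after some announcement, agent $a$ knows $p$'') or its dual, interpreted at the edge of an $a$-$b$-chain; the intuition is that to make $a$ know $p$ one may need to announce a formula that cuts the chain, and cutting it arbitrarily far out requires arbitrarily large epistemic depth. The key technical step is to build, for each $n$, a pair of pointed models $M^n_s$ and $N^n_t$ that are $n$-bisimilar (hence $\langMl$-equivalent up to depth $n$ by Lemma~\ref{bisimulation-preserves2}, and thus $\langPal$-equivalent up to depth $n$ via Lemma~\ref{prop.ex1}) but that disagree on the chosen $\langApal$ formula. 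These will be two $a$-$b$-chains, identical out to depth $n$ but differing in length or in the valuation at distant states, so that the existence of a suitable announcement (a distinguishing formula, furnished by Lemmas~\ref{lem:distinguishingFormula} and~\ref{lemma.dist}) holds in one model but not the other.

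The argument then runs by contradiction: suppose some $\psi \in \langPal$ were equivalent to the chosen $\langApal$ formula. By Lemma~\ref{prop.ex1} there is an epistemic formula $\psi' \in \langMl$ equivalent to $\psi$, and $\psi'$ has some finite epistemic depth $n$. Taking the pair $M^n_s, N^n_t$ above, $n$-bisimilarity forces $M^n_s \models \psi' \iff N^n_t \models \psi'$, yet by construction $M^n_s$ and $N^n_t$ disagree on the $\langApal$ formula, contradicting the assumed equivalence. This yields a formula of $\logicApal$ not expressible in $\logicPal$, completing the strict inequality.

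I expect the main obstacle to be the construction of the distinguishing pair together with the verification that the chosen quantified formula really does separate them. One must check two things that pull in opposite directions: that the two chains are genuinely $n$-bisimilar (so that no bounded-depth $\langPal$ formula can tell them apart), and simultaneously that an announcement witnessing the $\somepas$ exists in exactly one of them. The delicate point is that $\allpas/\somepas$ quantifies over \emph{all} epistemic formulas regardless of depth, so the separating witness may have depth far exceeding $n$; the existence of such a witness is exactly where Lemma~\ref{lemma.dist} on distinguishing formulas for (infinite) $a$-$b$-chains does the heavy lifting, by guaranteeing that arbitrary finite subsets of the chain can be carved out by an announcement. Getting the interplay between ``indistinguishable at depth $n$'' and ``separable by an unbounded-depth announcement'' exactly right is the crux of the proof.
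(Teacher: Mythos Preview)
Your overall strategy---reduce to $\langMl$ via Lemma~\ref{prop.ex1}, then separate two $n$-bisimilar $a$-$b$-chains by a $\somepas$-formula whose witness requires unbounded depth---is sound and is precisely the paper's \emph{second} (alternative) proof of this proposition. However, your concrete choice of separating formula, $\somepas \necessary[a] p$, does not work: whenever $p$ holds at the designated state, the boolean announcement $p$ already makes $\necessary[a] p$ true, so $\somepas \necessary[a] p$ is equivalent to $p$ on $\classS$ and hence expressible in $\langMl$. The separating formula must genuinely exploit the interaction of \emph{both} agents. The paper uses $\somepas(\necessary[a] p \land \neg \necessary[b] \necessary[a] p)$: on the two-state model any announcement either preserves both states (so $\necessary[a] p$ fails) or leaves a singleton (so $\necessary[b]\necessary[a] p$ holds), whereas on the long finite chain a deep announcement carves out a three-state $a$-$b$-configuration witnessing the formula.

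The paper also gives a first, shorter proof that you do not mention: instead of $n$-bisimulation it uses $\atomsQ$-bisimulation with a fresh atom $q$. One takes a two-state model $M_s$ and a four-state model $M'_{s'}$ that are $\{p\}$-bisimilar, so no $\langMl$ formula in $p$ alone distinguishes them; yet in $M'_{s'}$ the announcement of $q$ yields $\necessary[a] p \land \neg \necessary[b] \necessary[a] p$, while no announcement achieves this in $M_s$. This avoids the chain machinery and uses only finite models, at the cost of relying on an unbounded supply of atoms. Your depth-based route, by contrast, works with a single atom and is the technique the paper later reuses for the \logicPapal{} expressivity results---so it is well worth carrying out, once the separating formula is fixed.
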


\begin{proof}
   \begin{figure}
        \centering
        \begin{tikzpicture}[>=stealth',shorten >=1pt,auto,node distance=5em,thick]

          \node (s) {\underline{$\{\atomP\}$}};
          \node (t) [right of=s] {$\{\}$};

          \path[every node/.style={font=\sffamily\small}]
            (s) edge node {$\agentA$} (t);
          \node (m) [left of=s, node distance=3em] {$\model_s$:};
        \end{tikzpicture}
 \quad \quad
%        \caption{The model $\model[\prime]$.}\label{expressivity-s5i-2}
%        \centering
        \begin{tikzpicture}[>=stealth',shorten >=1pt,auto,node distance=5em,thick]

          \node (s) {\underline{$\{\atomP, \atomQ\}$}};
          \node (t) [right of=s] {$\{\}$};
          \node (s') [below of=s] {$\{\atomP, \atomQ\}$};
          \node (t') [below of=t] {$\{\atomQ\}$};

          \path[every node/.style={font=\sffamily\small}]
            (s) edge node {$\agentA$} (t)
            (s') edge [swap] node {$\agentA$} (t')
            (s) edge [swap] node {$\agentB$} (s')
            (t) edge node {$\agentB$} (t');
           \node (m) [left of=s', node distance=3em] {$\model'_{s'}$:};
       \end{tikzpicture}
\quad \quad
        \begin{tikzpicture}[>=stealth',shorten >=1pt,auto,node distance=5em,thick]

          \node (s) {\underline{$\{\atomP, \atomQ\}$}};
          \node (t) [right of=s] {\color{white} $\{\}$};
          \node (s') [below of=s] {$\{\atomP, \atomQ\}$};
          \node (t') [below of=t] {$\{\atomQ\}$};

    %      \path[every node/.style={font=\sffamily\small}]
%            (s) edge node {$\agentA$} (t)
           \draw (s') edge [swap] node {$\agentA$} (t');
           \draw (s) edge [swap] node {$\agentB$} (s');
%            (t) edge node {$\agentB$} (t');
           \node (m) [left of=s', node distance=4em] {$(\model'|q)_{s'}$:};
       \end{tikzpicture}
        \caption{Models used in the proof of Proposition~\ref{apal-expressiveness}. The actual states are underlined. We will always assume reflexive and symmetric closure of accessibility relations.}\label{expressivity-s5i-1}
    \end{figure}

    Suppose that arbitrary public announcement logic is as expressive as public announcement logic in \classS{} for more than one agent.
    We note that public announcement logic is also as expressive as epistemic logic \logicS.
    Consider the formula $\somepas (\knows[\agentA] \atomP \land \neg \knows[\agentB] \knows[\agentA] \atomP)$. Then there exists a formula $\phi \in \langMl$ 
 that is equivalent to $\somepas (\knows[\agentA] \atomP \land \neg \knows[\agentB] \knows[\agentA] \atomP)$. There will be some propositional variable $q$ not occurring in $\phi$. Consider \classS{} models $\model$ and $\model[\prime]$ as in Figure~\ref{expressivity-s5i-1}; let the underlined states be called $s$ and $s'$, respectively. We note that $\pointedModel{\stateS} \simeq^\atomP \pointedModel[\prime]{\stateS[\prime]}$
    and, as $\atomQ$ does not appear in $\phi$, 
    then $\pointedModel{\stateS} \entails \phi$ 
    if and only if $\pointedModel[\prime]{\stateS[\prime]} \entails \phi$.
However $\pointedModel{\stateS} \nentails \somepas (\knows[\agentA] \atomP \land \neg \knows[\agentB] \knows[\agentA] \atomP)$, whereas $\pointedModel[\prime]{\stateS[\prime]} \entails \Dia (\knows[\agentA] \atomP \land \neg \knows[\agentB] \knows[\agentA] \atomP)$ because $\pointedModel[\prime]{\stateS[\prime]} \entails \announceE{\atomQ} (\knows[\agentA] \atomP \land \neg \knows[\agentB] \knows[\agentA] \atomP)$. This is a contradiction.
\end{proof}

Another proof of larger expressivity does not use that $\Dia$ quantifies over {\em arbitrarily many propositional variables} but that $\Dia$ quantifies over {\em formulas of  arbitrarily large epistemic depth}. It is due to Barteld Kooi. It is relevant to mention this alternative proof here, because we will use a similar technique in Section \ref{sec.expressivity} on the expressivity of \logicPapal. Note that all models used in this proof are $a$-$b$-chains.

\begin{proof}
Suppose that arbitrary public announcement logic is as expressive as public announcement logic in \classS{} for more than one agent. Then (again) there exists a formula $\phi \in \langMl$ 
 that is equivalent to $\somepas (\knows[\agentA] \atomP \land \neg \knows[\agentB] \knows[\agentA] \atomP)$. Let $d(\phi)$ be the epistemic depth of this formula. Now consider (see Figure~\ref{fig.exp2}) model $N_t$. We can see it as some sort of infinite $\mathcal{S}5$ unwinding of model $M_s$: $N_t$ is bisimilar to $M_s$. A bisimulation between $M_s$ and $N_t$ links all the $p$-states in $N$ to the single $p$-state in $M$, and all the $\neg p$-states in $N$ to the single $\neg p$-state in $M$. So, in particular, this bisimulation contains pair $(t,s)$. Now consider a model that is like $N_t$, but cut off at the right-hand side, as in model $N'_{t'}$ in Figure~\ref{fig.exp2}, where the cut-off is beyond the epistemic depth of $\phi$: let $j > d(\phi)$ be such that the length of the $a$-$b$-path from the root $t'$ to the edge is $j$ and let that rightmost point be called $v$. State $v$ is the unique state satisfying $K_a p$. Because of Lemma \ref{lemma.dist} we now can uniquely identify all finite subsets of $N'_{t'}$. \weg{For example, the next to rightmost point, let it be $u$, is the unique world where $L_b K_a p \et \neg K_a p$. So the set $\{u,v\}$ is distinguished from its complement by the formula $K_a p \vel (L_b K_a p \et \neg K_a p)$ (and, of course, by the simpler description $L_b K_a p$). And so on.} Therefore, there is an announcement $\psi$ such that $(N'|\psi)_{t'}$ is the final depicted model, where we note that, ignoring the value of $q$, it is the same as the model $(M'|q)_{s'}$ in Figure \ref{expressivity-s5i-1}. Announcement $\psi$ is the formula $(L_{ba}^j K_a p \et \neg L_{ba}^{j-1} K_a p) \vel (L_{ba}^{j+1} K_a p \et \neg L_{ba}^j K_a p) \vel (L_{ba}^{j+2} K_a p \et \neg L_{ba}^{j+1} K_a p)$. (We refer to the proof of Lemma \ref{lemma.dist} for definition of $L_{ab}^n$ and $L_{ba}^n$.) This formula can be simplified to $L_{ba}^{j+2} K_a p \et \neg L_{ba}^{j-1} K_a p$. From $(N'|\psi)_{t'} \models \knows[\agentA] \atomP \land \neg \knows[\agentB] \knows[\agentA] \atomP$ follows $N'_{t'} \models \dia{\psi} (\knows[\agentA] \atomP \land \neg \knows[\agentB] \knows[\agentA] \atomP)$ and thus  $N'_{t'} \models \Dia (\knows[\agentA] \atomP \land \neg \knows[\agentB] \knows[\agentA] \atomP)$. However, as before, $M_s \not\models \Dia (\knows[\agentA] \atomP \land \neg \knows[\agentB] \knows[\agentA] \atomP)$, and therefore, as $M_s \simeq N_t$, also $N_t \not\models \Dia (\knows[\agentA] \atomP \land \neg \knows[\agentB] \knows[\agentA] \atomP)$. On the other hand, ($M_s$ and) $N_t$ and $N'_{t'}$ have the same value for $\phi$, as the difference between the two models is beyond the epistemic depth of $\phi$: 

As $j > d(\phi)$, up to depth $d(\phi)$ the models $N_t$ and $N'_{t'}$ are isomorphic and therefore bisimilar, i.e., $N_t \simeq^{d(\phi)} N'_{t'}$. Now applying Lemma \ref{bisimulation-preserves2} we obtain $N_t \equiv^{d(\phi)}_\mathit{el} N'_{t'}$, and therefore in particular $N_t \models \phi$ iff $N'_{t'} \models \phi$. Again we have a contradiction.
\end{proof}

   \begin{figure}
        \centering

        \begin{tikzpicture}[>=stealth',shorten >=1pt,auto,node distance=5em,thick]
          \node (s) {\underline{$\{\atomP\}$}};
          \node (sr) [right of=s] {$\{\}$};
          \node (srr) [right of=sr] {\color{white} $\{\}$};
          \node (sl) [left of=s] {\color{white} $\{p\}$};
          \node (sll) [left of=sl] {\color{white} $\{\}$};
          \node (slll) [left of=sll] {\color{white} $\{\}$};
          \node (st) [right of=srr] {\color{white} $\{p\}$};
         \node (stt) [right of=st] {\color{white} $\{p\}$};
            \draw  (s) edge node {$\agentA$} (sr);
          \node (m) [left of=slll, node distance=1em] {$\model_s$:};
        \end{tikzpicture}

 \begin{tikzpicture}[>=stealth',shorten >=1pt,auto,node distance=5em,thick]

          \node (s) {\underline{$\{\atomP\}$}};
          \node (sr) [right of=s] {$\{\}$};
          \node (srr) [right of=sr] {$\{\}$};
          \node (sl) [left of=s] {$\{p\}$};
          \node (sll) [left of=sl] {$\{\}$};
          \node (slll) [left of=sll] {\color{white} $\{\}$};
          \node (st) [right of=srr] {\color{white} $\{p\}$};
         \node (stt) [right of=st] {\color{white} $\{p\}$};
         \draw[dashed] (slll) edge (sll);
         \draw  (sll) edge node {$\agentA$} (sl);
           \draw  (sl) edge node {$\agentB$} (s);
           \draw  (s) edge node {$\agentA$} (sr);
           \draw  (sr) edge node {$\agentB$} (srr);
           \draw[dashed]  (srr) edge (st);
            \node (m) [left of=slll, node distance=1em] {$N_t$:};
       \end{tikzpicture}

 \begin{tikzpicture}[>=stealth',shorten >=1pt,auto,node distance=5em,thick]

          \node (s) {\underline{$\{\atomP\}$}};
          \node (sr) [right of=s] {$\{\}$};
          \node (srr) [right of=sr] {$\{\}$};
          \node (sl) [left of=s] {$\{p\}$};
          \node (sll) [left of=sl] {$\{\}$};
          \node (slll) [left of=sll] {\color{white} $\{\}$};
          \node (st) [right of=srr] {$\{p\}$};
         \node (stt) [right of=st] {$\{p\}$};
         \draw[dashed] (slll) edge (sll);
         \draw  (sll) edge node {$\agentA$} (sl);
           \draw  (sl) edge node {$\agentB$} (s);
           \draw  (s) edge node {$\agentA$} (sr);
           \draw  (sr) edge node {$\agentB$} (srr);
           \draw[dashed]  (srr) edge node {$> d(\phi)$} (st);
           \draw  (st) edge node {$\agentB$} (stt);
           \node (m) [left of=slll, node distance=1em] {$N'_{t'}$:};
       \end{tikzpicture}

 \begin{tikzpicture}[>=stealth',shorten >=1pt,auto,node distance=5em,thick]

          \node (s) {\underline{$\{\atomP\}$}};
          \node (sr) [right of=s] {\color{white} $\{\}$};
          \node (srr) [right of=sr] {\color{white} $\{\}$};
          \node (sl) [left of=s] {$\{p\}$};
          \node (sll) [left of=sl] {$\{\}$};
          \node (slll) [left of=sll] {\color{white} $\{\}$};
          \node (st) [right of=srr] {\color{white} $\{p\}$};
         \node (stt) [right of=st] {\color{white} $\{p\}$};
         \draw  (sll) edge node {$\agentA$} (sl);
           \draw  (sl) edge node {$\agentB$} (s);
           \node (m) [left of=slll, node distance=0.05em] {$(N'|\psi)_{t'}$:};
       \end{tikzpicture}

\caption{More models used in the proof of Proposition~\ref{apal-expressiveness}.}\label{fig.exp2}
\end{figure}

These different proofs to establish larger expressivity illustrate a important difference between \logicApal\ and other public announcement logics. Let us first introduce additional notation. Let $M_s \equiv^\atomsQ_\mathit{apal} M'_{s'}$ (where $Q \subseteq P$) mean that for all $\phi\in\langApal$ with atoms restricted to $\atomsQ$, $M_s \models \phi$ if and only if $M'_{s'} \models \phi$, and let $M_s \equiv^n_\mathit{apal} M'_{s'}$ (where $n \in \Naturals$) mean that for all $\phi\in\langApal$ with $d(\phi) \leq n$, $M_s \models \phi$ if and only if $M'_{s'} \models \phi$. 

Although $M_s \simeq M'_{s'}$ implies $M_s \equiv_\mathit{apal} M'_{s'}$, we do not have that $M_s \simeq^Q M'_{s'}$ (always) implies $M_s \equiv^Q_\mathit{apal} M'_{s'}$ and we also do not have that $M_s \simeq^n M'_{s'}$ (always) implies $M_s \equiv^n_\mathit{apal} M'_{s'}$.

The models used in Figure \ref{expressivity-s5i-1} provide typical (counter)examples. We note that \begin{quote} $M_s \simeq^p M'_{s'}$ but $M_s \not\equiv^p_\mathit{apal} M'_{s'}$, \end{quote} because $M_s \models \Dia (K_a p \et \neg K_b K_a p)$ whereas $M'_{s'} \not\models \Dia (K_a p \et \neg K_b K_a p)$. In the language \langApal{} restricted to $p$, the arbitrary announcement modalities are still interpreted over {\em all} atoms $\atoms$, so they quantify not only over  $\langMl$ formulas only containing atom $p$ but also over $\langMl$ formulas possibly containing atom $q$ as well.

Similarly, we note that, as shown in the alternative proof for Prop.~\ref{apal-expressiveness}, \begin{quote} $N_t \simeq^2 N'_{t'}$ but $N_t \not\equiv^2_\mathit{apal} N'_{t'}$. \end{quote} This is because on the one hand $N_t \models \Dia (K_a p \et \neg K_b K_a p)$, whereas on the other hand $N'_{t'} \not\models \Dia (K_a p \et \neg K_b K_a p)$, as $N'_{t'} \not\models \dia{\psi} (K_a p \et \neg K_b K_a p)$ for some $\psi \in \langMl$ with $d(\psi)>2$.

However, this does not rule out that bounded (to some $n$) bisimilarity implies bounded modal equivalence {\em in a particular model}. This will be used in an expressivity proof comparing \logicApal\ and \logicPapal, later (Theorem \ref{expressivity-s52} on page \pageref{expressivity-s52}).

\subsection{Positive formulas}

The {\em positive formulas} are the universal fragment of epistemic logic. They play an important role in our work, also in relation to the structural notion of {\em refinement}, that is therefore only defined in this section.

\begin{definition}\label{positive-formula}
    The {\em language of positive formulas \langMlPlus{}} is defined inductively as:
    $$
        \phi ::=
            \atomP \mid
            \neg \atomP \mid
            (\phi \land \phi) \mid
            (\phi \lor \phi) \mid
            \necessary[\agentA] \phi
    $$
    where $\atomP \in \atoms$ and $\agentA \in \agents$.
\end{definition}
We note that $\langMlPlus$ is a fragment of $\langMl$.

\begin{lemma}\label{public-announcement-preserves}
    Positive formulas are {\em preserved} under public announcements: \\ For all $\phi \in \langMlPlus$, $\psi \in \langMl$: $M_s \models \phi$ implies $M_s \models [\psi]\phi$.
\end{lemma}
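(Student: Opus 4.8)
The plan is to reduce the claim to the more basic fact that \emph{positive formulas are preserved under passing to submodels}, and then to observe that the model restriction $M|\psi$ is exactly such a submodel. First I would dispose of the easy case: by the semantic clause for $\announceA{\psi}\phi$, the statement $M_s \models \announceA{\psi}\phi$ holds vacuously whenever $M_s \nentails \psi$. So I would assume $M_s \models \psi$, so that $s \in \interpretation[\model]{\psi}$ and hence $M|\psi \subseteq M$ with $s$ in the domain of $M|\psi$. It then suffices to show $(M|\psi)_s \models \phi$.

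The core of the argument is an auxiliary claim proved by induction on the structure of $\phi \in \langMlPlus$: \emph{for every restriction $N \subseteq M$ and every state $t$ in the domain of $N$, $M_t \models \phi$ implies $N_t \models \phi$.} For the base cases $\atomP$ and $\neg\atomP$, Definition \ref{model-restriction} gives $\restrict{\valuation}{\statesT}(\atomP) = \valuation(\atomP) \cap \statesT$, so the truth value of an atom (and hence of its negation) at any surviving state is unchanged; these cases transfer in both directions. The cases $\phi_1 \land \phi_2$ and $\phi_1 \lor \phi_2$ follow immediately from the two inductive hypotheses.

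The only substantive case is $\necessary[\agentA]\phi$, and this is where positivity is genuinely used. The key structural observation is that restriction only \emph{shrinks} the accessibility relation: by Definition \ref{model-restriction}, $\restrict{\accessibility{\agentA}}{\statesT} = \accessibility{\agentA} \cap (\statesT \times \statesT) \subseteq \accessibility{\agentA}$. Hence every $N$-successor of $t$ is also an $M$-successor of $t$ and lies in the domain of $N$. So if $M_t \models \necessary[\agentA]\phi$, then $\phi$ holds at all $M$-successors of $t$, in particular at all $N$-successors of $t$; applying the inductive hypothesis at each such successor yields $N_t \models \necessary[\agentA]\phi$. Instantiating the auxiliary claim with $N = M|\psi$ and $t = s$ completes the proof.

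I expect this $\necessary[\agentA]$ step to be the main (indeed the only) obstacle, and it is worth flagging why the argument breaks if existential modalities are allowed. Under a $\possible[\agentA]$ (equivalently, a $\necessary[\agentA]$ appearing under a negation, which the grammar of $\langMlPlus$ in Definition \ref{positive-formula} forbids), deleting states could destroy the witnessing successor, so the downward transfer would fail. This is precisely the Moore-sentence phenomenon noted after Definition \ref{apal-semantics}: it is negative knowledge, not positive knowledge, that public announcements can falsify.
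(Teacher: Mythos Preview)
Your proof is correct. The paper does not supply a proof of this lemma at all: it merely states the result and attributes it to van Ditmarsch and Kooi (with roots in van Benthem), so there is nothing to compare against. Your direct induction on the grammar of $\langMlPlus$, reducing to preservation under arbitrary submodel restrictions, is the standard argument and exactly what one would expect here.
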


\begin{corollary}
    Positive formulas are {\em successful} as public announcements: \\ For all $\phi \in \langMlPlus$: $M_s \models \phi$ implies $M_s \models [\phi]\phi$.
\end{corollary}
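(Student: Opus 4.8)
The plan is to derive the corollary as a special case of the preceding lemma. Lemma~\ref{public-announcement-preserves} states that for all $\phi \in \langMlPlus$ and all $\psi \in \langMl$, $M_s \models \phi$ implies $M_s \models [\psi]\phi$. The corollary asks only for the instance where the announced formula $\psi$ coincides with $\phi$ itself. Since $\langMlPlus$ is a fragment of $\langMl$ (as noted right after Definition~\ref{positive-formula}), any positive formula $\phi$ is also a member of $\langMl$ and is therefore a legitimate announcement. So the natural approach is simply to instantiate $\psi := \phi$ in the lemma.

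Concretely, I would argue as follows. Fix $\phi \in \langMlPlus$ and suppose $M_s \models \phi$. Because $\langMlPlus \subseteq \langMl$, we have $\phi \in \langMl$, so Lemma~\ref{public-announcement-preserves} applies with the announced formula taken to be $\phi$. From $M_s \models \phi$ the lemma yields $M_s \models [\phi]\phi$, which is exactly the claimed conclusion. This establishes that positive formulas are successful as announcements.

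There is essentially no obstacle here: the only thing to check is that the substitution $\psi := \phi$ is legitimate, which hinges entirely on the inclusion $\langMlPlus \subseteq \langMl$ already recorded in the excerpt. Conceptually, the distinction between the lemma and the corollary is that the lemma allows an arbitrary epistemic announcement $\psi$ to preserve the truth of a positive $\phi$, whereas \emph{success} is the specific phenomenon that a formula survives its \emph{own} announcement (the opposite of the Moore-sentence behaviour discussed after Definition~\ref{apal-semantics}, where $p \et \neg K_a p$ becomes false upon announcement). The corollary thus isolates the self-referential case, and the one-line proof simply remarks that preservation under all announcements trivially includes preservation under the announcement of the formula itself.

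\begin{proof}
Since $\langMlPlus$ is a fragment of $\langMl$, every $\phi \in \langMlPlus$ is also a formula of $\langMl$. Hence Lemma~\ref{public-announcement-preserves} applies with $\psi := \phi$, giving that $M_s \models \phi$ implies $M_s \models [\phi]\phi$.
\end{proof}
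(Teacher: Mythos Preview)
Your proposal is correct and matches the paper's treatment: the paper states this as an immediate corollary of Lemma~\ref{public-announcement-preserves} without giving a separate proof, and your instantiation $\psi := \phi$ (using $\langMlPlus \subseteq \langMl$) is exactly the intended one-line derivation.
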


\begin{corollary}
    Positive formulas are {\em idempotent} as public announcements: \\  For all $\phi \in \langMlPlus$, $\psi \in \langMl$: $M_s \models [\phi]\psi$ implies $M_s \models [\phi][\phi]\psi$.
\end{corollary}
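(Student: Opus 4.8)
The plan is to reduce idempotency to \emph{successfulness} (the preceding corollary), via the observation that announcing a positive formula a second time has no effect. Concretely, the central step will be to show that for every $\phi \in \langMlPlus$ the twice-restricted model $(M|\phi)|\phi$ is \emph{literally identical} to $M|\phi$; once this is in hand the corollary follows by simply unfolding the announcement semantics.

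First I would dispose of the vacuous case. If $M_s \not\models \phi$, then $[\phi]\chi$ holds at $s$ for every $\chi$, so both $[\phi][\phi]\psi$ and $[\phi]\psi$ are true at $s$ and there is nothing to prove. Hence assume $M_s \models \phi$; then in particular $s \in \interpretation[M]{\phi}$, so this set is non-empty and $M|\phi$ is well defined.

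Next comes the key claim: $(M|\phi)|\phi = M|\phi$. Write $D := \interpretation[M]{\phi}$ for the domain of $M|\phi$. By Definition \ref{model-restriction} the domain of $(M|\phi)|\phi$ is $\{\, t \in D \mid (M|\phi)_t \models \phi \,\}$. Now for any $t \in D$ we have $M_t \models \phi$, and applying Lemma \ref{public-announcement-preserves} with $\psi := \phi$ (equivalently, the successfulness corollary) yields $M_t \models [\phi]\phi$, i.e.\ $(M|\phi)_t \models \phi$. Thus \emph{every} state of $M|\phi$ survives the second announcement, so $(M|\phi)|\phi$ has the same domain $D$ as $M|\phi$; and since restricting the accessibility relations and the valuation to the whole current domain leaves them unchanged (the intersections with $D \times D$ and with $D$ in Definition \ref{model-restriction} are then identities), the two models coincide.

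Finally I would assemble the pieces. From $M_s \models \phi$ and $M_s \models [\phi]\psi$ the semantics give $(M|\phi)_s \models \psi$, and since $(M|\phi)|\phi = M|\phi$ this is the same as $((M|\phi)|\phi)_s \models \psi$. Because also $(M|\phi)_s \models \phi$ (successfulness again), unfolding $[\phi][\phi]\psi$ at $s$ asks for exactly this, so $M_s \models [\phi][\phi]\psi$. There is no real obstacle here: everything substantive is carried by successfulness applied \emph{pointwise}, and the only thing demanding care is the routine bookkeeping of the vacuous case together with the explicit remark that a model restriction to its entire current domain is the identity.
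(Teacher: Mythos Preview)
Your proof is correct and aligns with the paper's own treatment: the paper does not spell out a proof of this corollary but presents it as an immediate consequence of the preceding preservation/successfulness results (citing van Ditmarsch and Kooi), which is exactly the reduction you carry out. Your pointwise application of successfulness to conclude $(M|\phi)|\phi = M|\phi$, together with the careful handling of the vacuous case, is precisely the intended argument.
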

These results were shown by van Ditmarsch and Kooi~\cite[Prop.\ 8]{hvdetal.synthese:2006} for an extended fragment also containing the inductive clause $[\neg\phi]\phi$.  In their work positive formulas are called {\em preserved formulas} instead. The results go back to van Benthem \cite{jfak.lonely:2006}.

A {\em refinement} is a relation that is a generalisation of bisimulation, and that only requires the {\bf atoms} and {\bf back} condition to hold. Refinements (in this form) were introduced in \cite{bozzellietal.inf:2014}.

\begin{definition}\label{refinement}
    Let $\model, \model[\prime] \in \classS$ be epistemic models.
    A non-empty relation $\refinement \subseteq \states \times \states[\prime]$
    is a {\em refinement} if and only if for every
    $(\stateS, \stateS[\prime]) \in \bisimulation$,
    $\atomP \in \atoms$, and
    $\agentA \in \agents$,
    the conditions {\bf atoms-$\atomP$} and {\bf back-$\agentA$} hold.
    If $(\stateS, \stateS[\prime]) \in \refinement$ then we call $\pointedModel[\prime]{\stateS[\prime]}$ a {\em refinement} of $\pointedModel{\stateS}$
    and call $\pointedModel{\stateS}$ a {\em simulation} of $\pointedModel[\prime]{\stateS[\prime]}$.
    We write $\pointedModel[\prime]{\stateS[\prime]} \refines \pointedModel{\stateS}$
    or equivalently $\pointedModel{\stateS} \simulates \pointedModel[\prime]{\stateS[\prime]}$.
\end{definition}
There are other notions of refinement. For example, in \cite{blackburnetal:2001}, simulation is defined with an inclusion requirement for atoms: for each pair $(s,s')$ in the relation, $s \in V(p)$ \emph{implies} $s' \in V'(p)$; instead of full correspondence {\bf atoms}: $s\in V(p)$ \emph{if and only if} $s' \in V'(p)$. The dual of that notion of simulation leads to a different notion of refinement.

\begin{lemma}\label{refinement-preorder}
    The relation $\simulates$ is a preorder on epistemic models.
\end{lemma}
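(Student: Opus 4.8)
The plan is to verify the two defining properties of a preorder, reflexivity and transitivity, directly from Definition~\ref{refinement}. Throughout I rely on the observation that every bisimulation is in particular a refinement, since the refinement requirements (\textbf{atoms-$\atomP$} and \textbf{back-$\agentA$}) form a subset of the bisimulation requirements.

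For reflexivity I would exhibit, for an arbitrary model $\modelAndTuple$, the identity relation $\refinement = \{(\stateS,\stateS) \mid \stateS \in \states\}$ and check that it is a refinement. The condition \textbf{atoms-$\atomP$} holds trivially, since $\stateS \in \valuation(\atomP)$ if and only if $\stateS \in \valuation(\atomP)$; and \textbf{back-$\agentA$} holds because for any $\stateT \accessibility{\agentA} \stateS$ we may take the witness to be $\stateT$ itself, so that $(\stateT,\stateT) \in \refinement$. Since $(\stateS,\stateS) \in \refinement$ for every $\stateS$, this gives $\pointedModel{\stateS} \simulates \pointedModel{\stateS}$.

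For transitivity, suppose $\pointedModel{\stateS} \simulates \pointedModel[\prime]{\stateS[\prime]}$ and $\pointedModel[\prime]{\stateS[\prime]} \simulates \pointedModel[\prime\prime]{\stateS[\prime\prime]}$, witnessed by refinements $\refinement_1 \subseteq \states \times \states[\prime]$ with $(\stateS,\stateS[\prime]) \in \refinement_1$ and $\refinement_2 \subseteq \states[\prime] \times \states[\prime\prime]$ with $(\stateS[\prime],\stateS[\prime\prime]) \in \refinement_2$. I would form the relational composition $\refinement := \refinement_2 \circ \refinement_1 = \{(x,x'') \mid \text{there is } x' \in \states[\prime] \text{ with } (x,x') \in \refinement_1 \text{ and } (x',x'') \in \refinement_2\}$, which is non-empty as it contains $(\stateS,\stateS[\prime\prime])$, and show it is a refinement. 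For \textbf{atoms-$\atomP$}, given $(x,x'') \in \refinement$ with intermediary $x'$, chaining the two biconditionals yields $x \in \valuation(\atomP)$ iff $x' \in \valuation[\prime](\atomP)$ iff $x'' \in \valuation[\prime\prime](\atomP)$. For \textbf{back-$\agentA$}, given $(x,x'') \in \refinement$ with intermediary $x'$ and a state $y'' \accessibility[\prime\prime]{\agentA} x''$, I first apply \textbf{back-$\agentA$} for $\refinement_2$ (to the pair $(x',x'')$) to obtain $y' \accessibility[\prime]{\agentA} x'$ with $(y',y'') \in \refinement_2$, and then apply \textbf{back-$\agentA$} for $\refinement_1$ (to the pair $(x,x')$) to obtain $y \accessibility{\agentA} x$ with $(y,y') \in \refinement_1$; hence $(y,y'') \in \refinement$, as required. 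This establishes $\pointedModel{\stateS} \simulates \pointedModel[\prime\prime]{\stateS[\prime\prime]}$.

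The argument is essentially routine; the only point requiring care is the order of application in the \textbf{back} clause. Because \textbf{back} propagates an accessibility step from the target model back to the source model, the two refinements must be peeled off from right to left (first $\refinement_2$, then $\refinement_1$), which is the opposite of the order in which the composition $\refinement_2 \circ \refinement_1$ is read. Getting this direction right is the sole substantive step, and it is exactly the analogue of the standard composition argument for bisimulations, now carried out with only the \textbf{back} half of the conditions present.
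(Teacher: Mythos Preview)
Your proof is correct: the identity relation witnesses reflexivity, and relational composition witnesses transitivity, with \textbf{atoms} and \textbf{back} checked exactly as you describe. The paper does not give its own proof of this lemma but simply attributes it to Bozzelli \emph{et al.}, so there is nothing further to compare.
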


\begin{lemma}\label{refinement-preserves}
    Let $\pointedModel{\stateS}, \pointedModel[\prime]{\stateS[\prime]} \in \classS$ be epistemic models
    such that $\pointedModel{\stateS} \simulates \pointedModel[\prime]{\stateS[\prime]}$ and
    let $\phi \in \langMlPlus$ be a positive formula.
    If $\pointedModel{\stateS} \entails \phi$ then $\pointedModel[\prime]{\stateS[\prime]} \entails \phi$.
\end{lemma}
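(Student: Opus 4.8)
The plan is to prove this by structural induction on the positive formula $\phi$, after first strengthening the statement so that the induction goes through. Fix a refinement relation $\refinement \subseteq \states \times \states[\prime]$ witnessing $M_s \simulates M'_{s'}$, so that $(s,s') \in \refinement$ and every pair in $\refinement$ satisfies the conditions {\bf atoms-$\atomP$} and {\bf back-$\agentA$}. I will then show, by induction on $\phi \in \langMlPlus$, the more general claim that for \emph{every} pair $(t,t') \in \refinement$, $M_t \models \phi$ implies $M'_{t'} \models \phi$. The lemma is the special case $(t,t') = (s,s')$.

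For the base cases $\phi = \atomP$ and $\phi = \neg\atomP$, I would appeal directly to {\bf atoms-$\atomP$}: since $(t,t') \in \refinement$ gives $t \in \valuation(\atomP)$ if and only if $t' \in \valuation[\prime](\atomP)$, the truth of $\atomP$ (respectively of $\neg\atomP$) transfers from $M_t$ to $M'_{t'}$. This is exactly why the refinement must carry the full biconditional atoms condition rather than a one-directional inclusion: the positive fragment admits negated atoms, so both $\atomP$ and $\neg\atomP$ must be preserved. The conjunction and disjunction cases are routine and use only the induction hypothesis applied to the same pair $(t,t')$: if $M_t \models \psi_1 \land \psi_2$ then $M_t$ satisfies each conjunct, each transfers to $M'_{t'}$, and hence $M'_{t'} \models \psi_1 \land \psi_2$; the disjunctive case is analogous, transferring whichever disjunct holds.

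The main step, and the only place where the structure of a refinement is genuinely used, is the modal case $\phi = \necessary[\agentA]\psi$. Suppose $M_t \models \necessary[\agentA]\psi$; to establish $M'_{t'} \models \necessary[\agentA]\psi$, take an arbitrary $u' \sim'_a t'$. By {\bf back-$\agentA$} applied to $(t,t') \in \refinement$, there is some $u \sim_a t$ with $(u,u') \in \refinement$. From $M_t \models \necessary[\agentA]\psi$ together with $u \sim_a t$ we obtain $M_u \models \psi$, and the induction hypothesis applied to the pair $(u,u')$ yields $M'_{u'} \models \psi$. Since $u'$ was an arbitrary $\agentA$-successor of $t'$, we conclude $M'_{t'} \models \necessary[\agentA]\psi$.

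The conceptual point behind the modal case, which I expect to be the substance of the argument, is that positive formulas contain only the universal modality $\necessary[\agentA]$ and neither an existential modality nor a $\necessary[\agentA]$ bound by a negation. Consequently, transferring a universally quantified statement over $\agentA$-accessible states requires only that successors on the $M'$ side be matched back to successors on the $M$ side, which is precisely {\bf back-$\agentA$}; the dropped {\bf forth-$\agentA$} condition is never invoked. This is also the reason that refinement (atoms plus back) preserves exactly the positive fragment, paralleling the preservation results already recorded for public announcements in Lemma~\ref{public-announcement-preserves}.
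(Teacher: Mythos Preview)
Your proof is correct and is the standard argument. The paper itself does not supply a proof of this lemma; it simply attributes the result to Bozzelli \emph{et al.}~\cite{bozzellietal.inf:2014}, so there is no in-paper proof to compare against. Your strengthening to all pairs $(t,t') \in \refinement$ and the structural induction using {\bf atoms-$\atomP$} for the literal base cases and {\bf back-$\agentA$} for the $\necessary[\agentA]$ step is exactly the expected route.
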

These results were shown by Bozzelli {\em et al.} in \cite[Prop.~2 \& 8]{bozzellietal.inf:2014}. We note that the union of two refinements is a refinement and that there is a maximal refinement between epistemic models (this is shown just as for bisimulation).

The relation between positive formulas and refinement is intricate. One important (and unreported) result is as follows. It follows from similar reasoning to Lemma~\ref{bisimulation-hennessy-milner}, but in view of its novelty and because we will refer to it later, we give the proof.

\begin{lemma}\label{refinement-hennessy-milner}
    Let $\model, \model[\prime] \in \classS$ be image-finite epistemic models
    and let $\refinement \subseteq \states \times \states[\prime]$ 
    be the relation such that $(\stateS, \stateS[\prime]) \in \refinement$ 
    if and only if for every $\phi \in \langMlPlus$:
    if $\pointedModel{\stateS} \entails \phi$ then $\pointedModel[\prime]{\stateS[\prime]} \entails \phi$.
    If $\refinement$ is non-empty, then $\refinement$ is a refinement.
\end{lemma}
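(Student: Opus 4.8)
The plan is to verify that the relation $\refinement$ satisfies the two defining clauses of a refinement, {\bf atoms-$\atomP$} and {\bf back-$\agentA$}, for every pair $(s,s')\in\refinement$; non-emptiness of $\refinement$ is part of the hypothesis, so together these make $\refinement$ a refinement. This mirrors the proof of the Hennessy--Milner theorem (Lemma~\ref{bisimulation-hennessy-milner}), the twist being that we argue with the positive fragment $\langMlPlus$ rather than with all of $\langMl$, and that we only need to recover one direction of the step condition (namely {\bf back}), matching the built-in asymmetry of $\refinement$.

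For the atoms clause I would use that \emph{both} $\atomP$ and $\neg\atomP$ lie in $\langMlPlus$. Fix $(s,s')\in\refinement$ and $\atomP\in\atoms$. If $M_s\models\atomP$, then since $\atomP$ is positive and $(s,s')\in\refinement$ we get $M'_{s'}\models\atomP$; and if $M_s\models\neg\atomP$, then since $\neg\atomP$ is positive we get $M'_{s'}\models\neg\atomP$. Together these give $s\in V(\atomP)$ if and only if $s'\in V'(\atomP)$, which is precisely {\bf atoms-$\atomP$}. The availability of negated atoms in $\langMlPlus$ is exactly what upgrades the one-directional preservation encoded in the definition of $\refinement$ into the biconditional that the atoms clause demands.

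For the back clause I would argue by contradiction. Suppose $(s,s')\in\refinement$ but {\bf back-$\agentA$} fails: there is a $t'\sim'_a s'$ such that $(t,t')\notin\refinement$ for every $t\sim_a s$. Unfolding the definition of $\refinement$, each such $t$ supplies a positive formula $\phi_t$ with $M_t\models\phi_t$ but $M'_{t'}\not\models\phi_t$. Because $M$ is image-finite, the $a$-class $\{t : t\sim_a s\}$ is finite, say $\{t_1,\dots,t_k\}$, so $\psi := \bigvee_{i=1}^{k}\phi_{t_i}$ is a well-formed formula and, being a disjunction of positive formulas, lies in $\langMlPlus$; hence $K_a\psi\in\langMlPlus$ as well. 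Each $t_j$ satisfies its own disjunct, so $M_{t_j}\models\psi$ for all $j$ and therefore $M_s\models K_a\psi$. On the other hand $M'_{t'}\not\models\phi_{t_i}$ for every $i$, so $M'_{t'}\not\models\psi$, and since $t'\sim'_a s'$ this gives $M'_{s'}\not\models K_a\psi$. Thus $K_a\psi$ is a positive formula true at $s$ but false at $s'$, contradicting $(s,s')\in\refinement$. Hence {\bf back-$\agentA$} holds, and $\refinement$ is a refinement.

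The step I expect to be the crux is the back clause, and specifically the observation that the positive fragment is exactly expressive enough for the task: forming $K_a$ of a disjunction of positive formulas is permitted, and that combination is precisely what is needed to package the failure of the back condition into a single positive distinguishing formula. Image-finiteness enters only to keep $\psi$ a finite disjunction and hence a genuine formula, just as in the classical Hennessy--Milner argument; I would note that it is image-finiteness of $M$ (the left-hand model) that is used, since the disjunction ranges over the $a$-successors of $s$.
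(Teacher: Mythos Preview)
Your proposal is correct and follows essentially the same approach as the paper's proof: both establish {\bf atoms} by exploiting that $p$ and $\neg p$ are positive, and both establish {\bf back-$a$} by contradiction, using image-finiteness of $M$ to form the positive formula $K_a(\phi_1\vee\dots\vee\phi_n)$ that is true at $s$ but false at $s'$. Your observation that only image-finiteness of the left-hand model is needed is accurate and slightly sharper than the paper's statement.
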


\begin{proof}

Let $(s,s') \in \refinement$.

The clause {\bf atoms} is satisfied because $\pointedModel{\stateS} \entails p$ implies $\pointedModel[\prime]{\stateS[\prime]} \entails p$, and  also $\pointedModel{\stateS} \entails \neg p$ implies $\pointedModel[\prime]{\stateS[\prime]} \entails \neg p$. % Therefore, $\pointedModel{\stateS} \entails p$ if and only if $\pointedModel[\prime]{\stateS[\prime]} \entails p$.

Let us now consider {\bf back}, and suppose this clause is not satisfied. Then there is a $t'$ with $s' \sim_a t'$ (i.e., an $a$-successor $t'$ of $s'$) such that none of the finite $a$-successors $t_1,\dots,t_n$ of $s$ are in the relation $\refinement$ with $t'$, i.e., $(t_1,t') \not\in \refinement$,\dots,$(t_n,t') \not\in\refinement$. Therefore, using the definition of $\refinement$, for each $t_i$, where $i = 1,\dots,n$, there is a $\phi_i \in \langMlPlus$ such that $M_{t_i} \models \phi_i$ but $M'_{t'} \not \models \phi_i$. Therefore $M_s \models K_a (\phi_1 \vel \dots \vel \phi_n)$, whereas $M_{s'} \not\models K_a (\phi_1 \vel \dots \vel \phi_n)$. Observe that  $K_a (\phi_1 \vel \dots \vel \phi_n)$ is a positive formula (the positive formulas are closed under disjunction and under $K_a$). This contradicts our assumption that $(s,s') \in \refinement$.
\end{proof}

\section{Positive arbitrary public announcement logic}\label{syntax-semantics}

In this section we give the syntax and semantics of {\em positive arbitrary public announcement logic} \logicPapal{}, and we
provide some semantic results about the properties of positive announcements
and arbitrary positive announcement operators.

\begin{definition}\label{papal-syntax}
    The {\em language of positive arbitrary public announcement logic \langPapal{}} is defined inductively as:
    $$
        \phi ::=
            \atomP \mid
            \neg \phi \mid
            (\phi \land \phi) \mid
            \necessary[\agentA] \phi \mid
            \announceA{\phi} \phi \mid
            \allppas \phi
    $$
    where $\atomP \in \atoms$ and $\agentA \in \agents$. 
\end{definition}
We use the abbreviation 
$\someppas \phi ::= \neg \allppas \neg \phi$. The {\em epistemic depth} and the {\em set of variables} of a formula are defined as before.

\begin{definition}\label{papal-semantics}
    Let $\modelAndTuple \in \classS$ be an epistemic model.
    The interpretation of $\phi \in \langPapal$
    is defined inductively as in Def.~\ref{apal-semantics}, but with the following clause for positive arbitrary announcement:
    \begin{eqnarray*}
%        \pointedModel{\stateS} \entails \atomP &\text{ iff }& \stateS \in \valuation(\atomP)\\
%        \pointedModel{\stateS} \entails \neg \phi &\text{ iff }& \pointedModel{\stateS} \nentails \phi\\
%        \pointedModel{\stateS} \entails \phi \land \psi &\text{ iff }& \pointedModel{\stateS} \entails \phi \text{ and } \pointedModel{\stateS} \entails \psi\\
%        \pointedModel{\stateS} \entails \necessary[\agentA] \phi &\text{ iff }& \text{for every } \stateT \accessibility{\agentA} \stateS : \pointedModel{\stateT} \entails \phi\\
%        \pointedModel{\stateS} \entails \announceA{\phi} \psi &\text{ iff }& \text{ if } \pointedModel{\stateS} \entails \phi \text{ then } \restrict{\pointedModel{\stateS}}{\phi} \entails \psi\\
        \pointedModel{\stateS} \entails \allppas \phi &\text{ iff }& \text{ for every } \psi \in \langMlPlus : \pointedModel{\stateS} \entails \announceA{\psi} \phi
    \end{eqnarray*}
\end{definition}
So, the difference with the semantics for the arbitrary announcement in \logicApal{} is the part `for every $\psi \in \langMlPlus$' instead of `for every $\psi \in \langMl$'. Validity, satisfiability and modal equivalence (notation $\equiv_\mathit{apal+}$) are defined as before. %The set of validities (the logic) is called \logicPapal.

An important observation is the partial correspondence between the results of
positive announcements and model restrictions that are closed under
refinements, a notion that we will define now.

\begin{definition}\label{closed-under-refinements}
    Let $\modelAndTuple \in \classS$ be an epistemic model and 
    let $\statesT \subseteq \states$ be a non-empty set of states.
    We say that {\em $\statesT$ is closed under refinements in $\model$}
    if and only if for every $\stateS, \stateT \in \states$ such that $\pointedModel{\stateS} \simulates \pointedModel{\stateT}$:
    if $\stateS \in \statesT$ then $\stateT \in \statesT$.
    We say that the model restriction {\em $\restrict{\model}{\statesT}$ is closed under refinements} 
    if and only if $\statesT$ is closed under refinements in $\model$.
\end{definition}

\begin{lemma}\label{positive-announcements-refinement-closed}
    The result of any positive announcement is closed under refinements.
\end{lemma}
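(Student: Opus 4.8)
The plan is to observe that this is an almost immediate consequence of Lemma~\ref{refinement-preserves}, once the definitions are unwound. Recall that the result of announcing a positive formula $\phi \in \langMlPlus$ in a model $\model$ is the restriction $\restrict{\model}{\interpretation[\model]{\phi}}$, where $\interpretation[\model]{\phi} = \{s \in S \mid M_s \models \phi\}$. By Definition~\ref{closed-under-refinements}, to show this restriction is closed under refinements it suffices to show that the set $T := \interpretation[\model]{\phi}$ is closed under refinements in $\model$; that is, that whenever $s, t \in S$ satisfy $M_s \simulates M_t$ and $s \in T$, we also have $t \in T$.

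First I would fix such states $s$ and $t$ with $M_s \simulates M_t$ and $s \in T$. By the definition of $T$, the assumption $s \in T$ means precisely $M_s \models \phi$. Since $\phi$ is positive and $M_s \simulates M_t$, Lemma~\ref{refinement-preserves} (applied with both models taken to be $\model$ itself, and with $s' := t$) yields $M_t \models \phi$, that is, $t \in T$. This establishes that $T$ is closed under refinements in $\model$, and hence that the restriction $\restrict{\model}{T}$ is closed under refinements, which is what was to be shown.

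The only point that needs care is matching the direction of the refinement relation: in Definition~\ref{refinement} the notation $M_s \simulates M_t$ means that $M_t$ is a \emph{refinement} of $M_s$ (equivalently $M_t \refines M_s$), which is exactly the hypothesis under which Lemma~\ref{refinement-preserves} transfers a positive formula from $M_s$ to $M_t$, so the two directions coincide. I expect no genuine obstacle here: the substantive content, namely that refinement preserves positive formulas, is already carried out in Lemma~\ref{refinement-preserves}, and the present lemma merely specialises it to the intra-model case $\model[\prime] = \model$ and repackages it as a closure property of the set $\interpretation[\model]{\phi}$ cut out by the announcement.
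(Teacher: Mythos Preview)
Your proof is correct and follows essentially the same route as the paper: unfold the definition of closure under refinements for the set $\interpretation[\model]{\phi}$ and apply Lemma~\ref{refinement-preserves} to transfer $\phi$ from $\pointedModel{\stateS}$ to $\pointedModel{\stateT}$. Your extra remark checking the direction of $\simulates$ is a harmless clarification not present in the paper.
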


\begin{proof}
    Let $\modelAndTuple \in \classS$ be an epistemic model and
    let $\phi \in \langMlPlus$. We have to show that a non-empty $M|\phi$ (i.e., $M| \interpretation[\model]{\phi}$) is closed under refinements. 
    Suppose that $\stateS, \stateT \in \states$ 
    such that $\stateS \in \interpretation[\model]{\phi}$ 
    and $\pointedModel{\stateS} \simulates \pointedModel{\stateT}$.
    Then $\pointedModel{\stateS} \entails \phi$.
    As $\pointedModel{\stateS} \simulates \pointedModel{\stateT}$ and $\phi \in \langMlPlus$
    then by Lemma~\ref{refinement-preserves} we have $\pointedModel{\stateT} \entails \phi$.
    So $\stateT \in \interpretation[\model]{\phi}$ 
    and therefore $\restrict{\model}{\phi}$ is closed under refinements.
\end{proof}

\begin{lemma}\label{refinement-closed-positive-announcements}
    On finite models, a model restriction that is closed under refinements is the result of a positive announcement.
\end{lemma}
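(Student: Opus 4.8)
I want to show that on a finite model $\model$, if a restriction $\restrict{\model}{\statesT}$ is closed under refinements, then there is a positive formula $\phi \in \langMlPlus$ whose denotation $\interpretation[\model]{\phi}$ equals exactly $\statesT$. By the semantics of public announcement, this makes $\restrict{\model}{\statesT}$ the result of announcing $\phi$. So the whole task reduces to constructing, for each state $\stateS \in \statesT$, a positive formula that is true at $\stateS$ but false at every state outside $\statesT$; taking the disjunction over $\statesT$ (which is finite, hence a legitimate formula, and positive formulas are closed under disjunction) then gives $\phi$.

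**The key construction.** The natural candidate for the "characteristic" positive formula of a state $\stateS$ is a refinement-analogue of the Hennessy–Milner characteristic formula. For each state $\stateS$, I want a positive formula $\chi_\stateS$ such that $\pointedModel{\stateS} \entails \chi_\stateS$, and for any state $\stateU$, $\pointedModel{\stateU} \entails \chi_\stateS$ iff $\pointedModel{\stateS} \simulates \pointedModel{\stateU}$ (i.e. $\stateU$ is a refinement of $\stateS$). The existence of such a $\chi_\stateS$ on finite (hence image-finite) models is exactly what Lemma~\ref{refinement-hennessy-milner} delivers: the relation $\refinement$ defined by "$\pointedModel{\stateS}\entails\phi$ implies $\pointedModel{\stateU}\entails\phi$ for all positive $\phi$" is a refinement, so positive-formula implication coincides with $\simulates$. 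Concretely, because the model is finite there are only finitely many states $\stateU_1,\dots,\stateU_k$ with $\pointedModel{\stateS}\not\simulates\pointedModel{\stateU_j}$, and for each such $\stateU_j$ Lemma~\ref{refinement-hennessy-milner} gives a positive formula $\eta_j$ with $\pointedModel{\stateS}\entails\eta_j$ but $\pointedModel{\stateU_j}\not\entails\eta_j$; then $\chi_\stateS := \eta_1 \land \dots \land \eta_k$ is positive, true at $\stateS$, and false at every state not a refinement of $\stateS$.

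**Finishing the argument.** Now define $\phi := \bigvee_{\stateS \in \statesT} \chi_\stateS$, which is positive since $\langMlPlus$ is closed under finite disjunction. I claim $\interpretation[\model]{\phi} = \statesT$. For the inclusion $\statesT \subseteq \interpretation[\model]{\phi}$: each $\stateS\in\statesT$ satisfies its own $\chi_\stateS$, hence satisfies $\phi$. For the reverse inclusion $\interpretation[\model]{\phi} \subseteq \statesT$: suppose $\pointedModel{\stateU}\entails\phi$. Then $\pointedModel{\stateU}\entails\chi_\stateS$ for some $\stateS\in\statesT$, which by construction of $\chi_\stateS$ forces $\pointedModel{\stateS}\simulates\pointedModel{\stateU}$, i.e. $\stateU$ is a refinement of $\stateS$. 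Since $\stateS\in\statesT$ and $\statesT$ is closed under refinements, we get $\stateU\in\statesT$. This is precisely where the closure hypothesis is used, and it is the crux of the direction that makes the lemma nontrivial. Hence $\interpretation[\model]{\phi}=\statesT$, and $\restrict{\model}{\phi}=\restrict{\model}{\statesT}$, so $\restrict{\model}{\statesT}$ is the result of announcing the positive formula $\phi$.

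**Main obstacle.** The one place that needs care is the construction of $\chi_\stateS$ as a genuine \emph{single positive formula} separating $\stateS$ from all its non-refinements. The finiteness of the model is essential here on two counts: it guarantees the set of non-refinements of $\stateS$ is finite (so the conjunction $\chi_\stateS$ is a finite, well-formed formula), and it is the image-finiteness hypothesis required to invoke Lemma~\ref{refinement-hennessy-milner}, which is what converts "refinement" into "preservation of positive formulas." I should also make explicit that closure under refinements is needed only for the $\subseteq$ direction; the converse would fail on an arbitrary restriction, which is exactly why the hypothesis appears. I expect the write-up to be routine once $\chi_\stateS$ is in hand, so the conceptual weight sits entirely in pinning down $\chi_\stateS$ via Lemma~\ref{refinement-hennessy-milner}.
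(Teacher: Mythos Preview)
Your proposal is correct and essentially the same as the paper's proof: both invoke Lemma~\ref{refinement-hennessy-milner} (using finiteness for image-finiteness and for the finite conjunctions/disjunctions) to obtain separating positive formulas, then take a disjunction over $\statesT$. The only cosmetic difference is that the paper directly separates each $\stateS\in\statesT$ from each $\stateT\in\states\setminus\statesT$ and sets $\phi=\bigvee_{\stateS\in\statesT}\bigwedge_{\stateT\in\states\setminus\statesT}\phi_{\stateS,\stateT}$, whereas you first build a characteristic formula $\chi_\stateS$ separating $\stateS$ from \emph{all} its non-refinements and then use closure of $\statesT$ to conclude; both routes yield $\interpretation[\model]{\phi}=\statesT$.
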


\begin{proof}
    Let $\modelAndTuple \in \classS$ be an epistemic model and
    let $\statesT \subseteq \states$ be a non-empty set of states such that
    $\restrict{\model}{\statesT}$ is closed under refinements.
    Then for every $\stateS \in \statesT$ and $\stateT \in \states \setminus \statesT$
    we have that $\pointedModel{\stateS} \not\simulates \pointedModel{\stateT}$. As $\model$ is image-finite it then follows from Lemma \ref{refinement-hennessy-milner} that for every $\stateS \in \statesT$ and $\stateT \in \states \setminus \statesT$ there exists $\phi_{\stateS,\stateT} \in \langMlPlus$
    such that $\pointedModel{\stateS} \entails \phi_{\stateS,\stateT}$
    but $\pointedModel{\stateT} \nentails \phi_{\stateS,\stateT}$.
    Let $\phi = \bigvee_{\stateS \in \statesT} \bigwedge_{\stateT \in \states \setminus \statesT} \phi_{\stateS, \stateT}$.
    Then $\phi \in \langMlPlus$; 
    for every $\stateS \in \statesT$: $\pointedModel{\stateS} \entails \phi$; 
    and for every $\stateT \in \states \setminus \statesT$: $\pointedModel{\stateT} \nentails \phi$.
    So $\interpretation[\model]{\phi} = \statesT$ and 
    therefore $\restrict{\model}{\statesT}$ is the result of a positive announcement.
\end{proof}

In contrast to public announcements, a sequence of positive announcements
cannot generally be expressed as a single positive announcement.

\begin{proposition}
    Arbitrary positive announcements are not composable in \classS{},
    i.e., it is not the case that $\classS \entails \someppas \someppas \phi \implies \someppas \phi$ for all $\phi \in \langPapal$.
\end{proposition}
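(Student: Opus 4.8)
The plan is to exhibit a concrete model and a positive formula witnessing the failure of composition. I need to find a pointed model $M_s$ and a $\phi \in \langPapal$ such that $M_s \models \someppas\someppas\phi$ but $M_s \not\models \someppas\phi$. By the semantics, $\someppas\someppas\phi$ means there is a positive announcement $\psi_1 \in \langMlPlus$ after which $\someppas\phi$ holds, i.e.\ after which there is a \emph{second} positive announcement $\psi_2$ making $\phi$ true; whereas $\someppas\phi$ failing means \emph{no single} positive announcement makes $\phi$ true. So I want a target set of states that is reachable by two successive restrictions along positive formulas but is not itself the result of a single positive announcement.

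\medskip

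The key leverage is Lemma~\ref{refinement-closed-positive-announcements}: on finite models, the results of positive announcements are \emph{exactly} the refinement-closed submodels. So on a finite model the task reduces to a purely structural one. First I would pick $\phi$ to be a distinguishing formula $\delta_T$ (available by Lemma~\ref{lem:distinguishingFormula} on a finite $a$-$b$-chain) that is true precisely at the states of some target set $T$ and false elsewhere; then $\someppas\phi$ holding at $s$ is equivalent to $T$ (intersected with the current domain, and containing $s$) being refinement-closed, while $\someppas\someppas\phi$ holding reflects that $T$ can be carved out by removing states in two stages, where an intermediate restriction first deletes some ``offending'' state, thereby changing which states refine which, so that $T$ becomes refinement-closed only after the first announcement. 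The crux is that deleting states can \emph{destroy} a refinement $M_u \simulates M_w$ (because refinement requires the {\bf back} condition, which can fail once successors are removed), so a set that is not refinement-closed in $M$ can become refinement-closed in $M|\psi_1$.

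\medskip

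Concretely I would work with a small finite $a$-$b$-chain (or a close relative) in which there are states $u, w$ with $M_u \simulates M_w$ but where this refinement is supported only by some successor that a first announcement $\psi_1$ can remove; after removing it, $M_u \not\simulates M_w$, so a set $T$ containing $u$ but not $w$ (previously not refinement-closed, hence not announceable in one step) becomes refinement-closed and thus announceable by a $\psi_2$. I would then set $\phi := \delta_T$ and verify: (i) $M_s \not\models\someppas\delta_T$, because $T$ is not refinement-closed in $M$ and no positive announcement can produce exactly $T$ (using Lemma~\ref{positive-announcements-refinement-closed} for one direction and the impossibility of cutting $w$ while keeping $u$ in one refinement-closed step); and (ii) $M_s \models \someppas\someppas\delta_T$, by first announcing $\psi_1$ to obtain $M|\psi_1$ and then announcing the $\psi_2$ that Lemma~\ref{refinement-closed-positive-announcements} guarantees for the now-refinement-closed $T$.

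\medskip

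The main obstacle will be designing the model so that all three facts hold simultaneously and checking the refinement relations carefully: I must ensure $T$ genuinely fails to be refinement-closed in $M$ (so step~(i) holds), that a single positive announcement really can break the relevant refinement without prematurely isolating $T$, and that the designated state $s$ survives both announcements and lies in the final target set. Verifying the refinement (simulation) relations $\simulates$ between the relevant pointed submodels — in particular confirming that the {\bf back} condition fails exactly where needed after the first restriction — is the delicate, error-prone part, and I would lay out the accessibility structure explicitly (as in Figures~\ref{expressivity-s5i-1} and~\ref{fig.exp2}) to make these checks transparent.
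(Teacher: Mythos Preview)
Your high-level strategy --- exploit the characterisation of positive-announcement results on finite models as exactly the refinement-closed restrictions, and find a target that is reachable in two refinement-closed steps but not one --- matches the paper's approach in spirit. The gap is in your choice of $\phi$.

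Taking $\phi := \delta_T$ with $s \in T$ does \emph{not} give you ``$\someppas\delta_T$ holds at $s$ iff $T$ is reachable by one positive announcement.'' The formula $\delta_T$ is a distinguishing formula \emph{in the original model $M$}: it is true at $s$ whenever $s \in T$, regardless of whether the current model has been restricted. In particular, announcing $\top$ already yields $(M|\top)_s = M_s \models \delta_T$, so $M_s \models \someppas\delta_T$ trivially. More generally, $\someppas\phi$ asks whether some positive announcement makes $\phi$ true at the designated state; it does \emph{not} ask whether the model can be cut down to exactly $T$. So step~(i) of your plan fails outright, and the equivalence you assert between $\someppas\delta_T$ and refinement-closure of $T$ is false. (Even replacing $\delta_T$ by a characteristic formula of the pointed model $(M|T)_s$ is delicate, since you would then be asking about bisimilarity rather than equality of restrictions.)

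The paper instead chooses a $\phi$ that is a genuine structural property --- specifically $\suspects[\agentA] \atomQ \land \knows[\agentA](\knows[\agentB]\atomQ \lor \knows[\agentB]\neg\atomQ)$ --- which is \emph{false} at $M_s$ and whose truth after an announcement forces certain states to be removed. The refinement analysis then shows that any single positive announcement preserving $s$ and making $\suspects[\agentA]\atomQ$ true must (via $\simulates$) also preserve a pair of states $t', u'$ that witness the failure of the second conjunct. Your instinct to analyse the refinement relation and track which states must survive is exactly right; what you need is a $\phi$ whose truth at $s$ in a restriction encodes the structural constraint you want, not a formula that merely labels states of $M$. The model used in the paper is also not an $a$-$b$-chain, so Lemma~\ref{lem:distinguishingFormula} is not directly available; the construction is bespoke.
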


\begin{figure}
%\vspace{-2.5cm}
%        \begin{center}
%\center
\scalebox{.68}{
        \begin{tikzpicture}[>=stealth',shorten >=1pt,auto,node distance=7em,thick]

          \node[label=above left:$\stateS$] (s) {\underline{$\{\atomP\}$}};
          \node[label=above right:{$\stateS[\prime]$}] (x) [right of=s] {$\{\atomP, \atomQ\}$};
          \node[label=above left:$\stateT$] (t) [below of=s] {$\{\atomP, \atomQ\}$};
          \node[label=above left:$\stateU$] (u) [below of=t] {$\{\atomP\}$};
          \node[label=below left:$\stateV$] (v) [below of=u] {$\{\atomP, \atomQ\}$};
          \node[label=above left:$\stateW$] (w) [left of=u] {$\{\}$};
          \node[label=above right:{$\stateT[\prime]$}] (t') [below of=x] {$\{\atomP, \atomQ\}$};
          \node[label=above right:{$\stateU[\prime]$}] (u') [below of=t'] {$\{\atomP\}$};
          \node[label=above right:{$\stateV[\prime]$}] (v') [below of=u'] {$\{\atomP, \atomQ\}$};
          \node () [right of=u'] {};

          \path[every node/.style={font=\sffamily\small}]
            (s) edge node [left] {$\agentA$} (t)
            (x) edge [bend right] node {$\agentA$} (t)
            (s) edge node {$\agentA, \agentB$} (x)
            (x) edge [->,dotted, bend left] node {$\simulates$} (t')
            (t) edge node [left] {$\agentB$} (u)
            (u) edge node [left] {$\agentA$} (v)
            (u) edge node [above] {$\agentA$} (w)
            (w) edge node [below left] {$\agentA$} (v)
            (s) edge [bend left] node [left] {$\agentA$} (t')
            (x) edge node {$\agentA$} (t')
            (t') edge node [right] {$\agentB$} (u')
            (u') edge node [right] {$\agentA$} (v')
            (t) edge node {$\agentA$} (t')
            (t) edge [->,dotted,bend right] node {$\simulates$} (t')
            (u) edge [->,dotted] node {$\simulates$} (u')
            (v) edge [->,dotted] node {$\simulates$} (v')
            (s) edge [->,dotted,bend left=115,min distance=5cm,swap] node {$\simulates$} (u')
            (t') edge [->,dotted,bend left=90,min distance=2.5cm,swap] node {$\simulates$} (v');
        \end{tikzpicture}

\hspace{-1.5cm} %\fbox{
        \begin{tikzpicture}[>=stealth',shorten >=1pt,auto,node distance=7em,thick]

          \node[label=above left:$\stateS$] (s) {\underline{$\{\atomP\}$}};
          \node[label=above right:{$\stateS[\prime]$}] (x) [right of=s] {$\{\atomP, \atomQ\}$};
          \node[label=above left:$\stateT$] (t) [below of=s] {$\{\atomP, \atomQ\}$};
%          \node[label=above left:$\stateU$] (u) [below of=t] {$\{\atomP\}$};
          \node[label=below left:{\color{white}$\stateV$}] (v) [below of=u] {\color{white} $\{\atomP, \atomQ\}$};
%          \node[label=above left:$\stateW$] (w) [left of=u] {$\{\}$};
          \node[label=above right:{$\stateT[\prime]$}] (t') [below of=x] {$\{\atomP, \atomQ\}$};
          \node[label=above right:{$\stateU[\prime]$}] (u') [below of=t'] {$\{\atomP\}$};
          \node[label=above right:{$\stateV[\prime]$}] (v') [below of=u'] {$\{\atomP, \atomQ\}$};
          \node () [right of=u'] {};

          \path[every node/.style={font=\sffamily\small}]
            (s) edge node [left] {$\agentA$} (t)
            (x) edge [bend right] node {$\agentA$} (t)
            (s) edge node {$\agentA, \agentB$} (x)
 %           (x) edge [->,dotted, bend left] node {$\simulates$} (t')
 %           (t) edge node [left] {$\agentB$} (u)
 %           (u) edge node [left] {$\agentA$} (v)
 %          (u) edge node [above] {$\agentA$} (w)
 %           (w) edge node [below left] {$\agentA$} (v)
            (s) edge [bend left] node [left] {$\agentA$} (t')
            (x) edge node {$\agentA$} (t')
            (t') edge node [right] {$\agentB$} (u')
            (u') edge node [right] {$\agentA$} (v')
            (t) edge node {$\agentA$} (t');
    %        (t) edge [->,dotted,bend right] node {$\simulates$} (t')
 %           (u) edge [->,dotted] node {$\simulates$} (u')
 %           (v) edge [->,dotted] node {$\simulates$} (v')
   %         (s) edge [->,dotted,bend left=115,min distance=5cm,swap] node {$\simulates$} (u')
   %         (t') edge [->,dotted,bend left=90,min distance=2.5cm,swap] node {$\simulates$} (v');
        \end{tikzpicture}
%}
\hspace{0cm}
%\fbox{
        \begin{tikzpicture}[>=stealth',shorten >=1pt,auto,node distance=7em,thick]

          \node[label=above left:$\stateS$] (s) {\underline{$\{\atomP\}$}};
          \node[label=above right:{\color{white} $\stateS[\prime]$}] (x) [right of=s] {\color{white} $\{\atomP, \atomQ\}$};
          \node[label=above left:$\stateT$] (t) [below of=s] {$\{\atomP, \atomQ\}$};
%          \node[label=above left:{$\stateU$] (u) [below of=t] {$\{\atomP\}$};
          \node[label=below left:{\color{white}$\stateV$}] (v) [below of=u] {\color{white} $\{\atomP, \atomQ\}$};
%          \node[label=above left:$\stateW$] (w) [left of=u] {$\{\}$};
          \node[label=above right:{\color{white}$\stateT[\prime]$}] (t') [below of=x] {\color{white}$\{\atomP, \atomQ\}$};
          \node[label=above right:{$\stateU[\prime]$}] (u') [below of=t'] {$\{\atomP\}$};
          \node[label=above right:{$\stateV[\prime]$}] (v') [below of=u'] {$\{\atomP, \atomQ\}$};
          \node () [right of=u'] {};

          \path[every node/.style={font=\sffamily\small}]
            (s) edge node [left] {$\agentA$} (t)
  %          (x) edge [bend right] node {$\agentA$} (t)
  %          (s) edge node {$\agentA, \agentB$} (x)
  %          (x) edge [->,dotted, bend left] node {$\simulates$} (t')
 %           (t) edge node [left] {$\agentB$} (u)
 %           (u) edge node [left] {$\agentA$} (v)
 %          (u) edge node [above] {$\agentA$} (w)
 %           (w) edge node [below left] {$\agentA$} (v)
  %          (s) edge [bend left] node [left] {$\agentA$} (t')
  %          (x) edge node {$\agentA$} (t')
  %          (t') edge node [right] {$\agentB$} (u')
            (u') edge node [right] {$\agentA$} (v');
   %         (t) edge node {$\agentA$} (t')
   %         (t) edge [->,dotted,bend right] node {$\simulates$} (t')
 %           (u) edge [->,dotted] node {$\simulates$} (u')
 %           (v) edge [->,dotted] node {$\simulates$} (v')
   %         (s) edge [->,dotted,bend left=115,min distance=5cm,swap] node {$\simulates$} (u')
    %        (t') edge [->,dotted,bend left=90,min distance=2.5cm,swap,color=white] node {\color{white} $\simulates$} (v');
        \end{tikzpicture}
%}
}
%	\end{center}
         \caption{Counterexample for the composability of positive announcements. Left: initial model, with explicit refinement relation. Middle: after announcing $\knows_\agentA \atomP$. Right: after subsequently announcing $\neg \atomQ \vel \knows_\agentB \atomQ$.}\label{composability-s5-1}
   \end{figure}

\begin{proof}
    We construct a counter-example. Consider model $\modelAndTuple$ in Figure \ref{composability-s5-1}, where 
    $\states = \{\stateS, \stateT, \stateU, \stateV, \stateW, \stateS[\prime], \stateT[\prime], \stateU[\prime], \stateV[\prime]\}$,
    $\stateS \accessibility{\agentA} \stateT \accessibility{\agentA} \stateS[\prime] \accessibility{\agentA} \stateT[\prime]$,
    $\stateU \accessibility{\agentA} \stateV \accessibility{\agentA} \stateW$,
    $\stateU[\prime] \accessibility{\agentA} \stateV[\prime]$,
    $\stateS \accessibility{\agentB} \stateS[\prime]$,
    $\stateT \accessibility{\agentB} \stateU$,
    $\stateT[\prime] \accessibility{\agentB} \stateU[\prime]$,
    $\valuation(\atomP) = \{\stateS, \stateT, \stateU, \stateV, \stateS[\prime], \stateT[\prime], \stateU[\prime], \stateV[\prime]\}$, and
    $\valuation(\atomQ) = \{\stateT, \stateV, \stateS[\prime], \stateT[\prime], \stateV[\prime]\}$.
%
%    The model $\model$ is represented in Figure~\ref{composability-s5-1}.
%

    We claim that $\pointedModel{\stateS} \entails \someppas \someppas (\suspects[\agentA] \atomQ \land \knows[\agentA] (\knows[\agentB] \atomQ \lor \knows[\agentB] \neg \atomQ))$
    but $\pointedModel{\stateS} \nentails \someppas (\suspects[\agentA] \atomQ \land \knows[\agentA] (\knows[\agentB] \atomQ \lor \knows[\agentB] \neg \atomQ))$.

    We note that $(\restrict{\restrict{\model}{\knows[\agentA] \atomP}}{(\lnot \atomQ \lor \knows[\agentB] \atomQ)})_{\stateS} \entails \suspects[\agentA] \atomQ \land \knows[\agentA] (\knows[\agentB] \atomQ \lor \knows[\agentB] \neg \atomQ)$ (see  Figure~\ref{composability-s5-1}) 
    and so $\pointedModel{\stateS} \entails \someppas \someppas (\suspects[\agentA] \atomQ \land \knows[\agentA] (\knows[\agentB] \atomQ \lor \knows[\agentB] \neg \atomQ))$.

    Let $\refinement = \{(\stateX, \stateX) \mid \stateX \in \states\} \cup \{(\stateT, \stateT[\prime]), (\stateU, \stateU[\prime]), (\stateV, \stateV[\prime]), (\stateS[\prime], \stateT[\prime]), (\stateT[\prime], \stateV[\prime]), (\stateS, \stateU[\prime])\}$. 
    We note that $\refinement$ is a refinement.

As $\pointedModel{\stateS[\prime]} \simulates \pointedModel{\stateT[\prime]}$, 
     $\pointedModel{\stateT} \simulates \pointedModel{\stateT[\prime]}$, and $\pointedModel{\stateT[\prime]} \simulates \pointedModel{\stateT[\prime]}$, 
    then by Lemma~\ref{positive-announcements-refinement-closed}
    any positive announcement that is true in $M_s$ and that preserves $\stateS[\prime]$ or $\stateT$ (or $\stateT[\prime]$) will also preserve $\stateT[\prime]$
    so any positive announcement after which $\suspects[\agentA] \atomQ$ is true will preserve $\stateT[\prime]$.

    Also, as $\pointedModel{\stateS} \simulates \pointedModel{\stateU[\prime]}$
    then by Lemma~\ref{positive-announcements-refinement-closed}
    any positive announcement that is true in $M_s$ will preserve $\stateU[\prime]$. 

Therefore, any positive announcement that is true in $M_s$ and preserves one of the $a$-accessible states $t$, $s'$ and $t'$ (such that $L_a q$ is true after the announcement), will also preserve $t'$, and also $u'$ that is $b$-accessible from $t'$, such that $\neg (K_b q \vel K_b \neg q)$ is true in $t'$ after the announcement, and therefore $L_a \neg (K_b q \vel K_b \neg q)$ true in $s$.

So if after any positive announcement in $s$ $\suspects[\agentA] \atomQ$ is true, then $\neg \knows[\agentA] (\knows[\agentB] \atomQ \lor \knows[\agentB] \neg \atomQ)$ is also true.
    Therefore $\pointedModel{\stateS} \entails \allppas (\suspects[\agentA] \atomQ \implies \neg \knows[\agentA] (\knows[\agentB] \atomQ \lor \knows[\agentB] \neg \atomQ))$
    and so $\pointedModel{\stateS} \nentails \someppas (\suspects[\agentA] \atomQ \land \knows[\agentA] (\knows[\agentB] \atomQ \lor \knows[\agentB] \neg \atomQ))$.
\end{proof}

However, other validities involving the quantifier are the same for \logicApal{} and for \logicPapal{}. We list a few.

\begin{lemma}[\mbox{\cite[Lemmas 3.1 \& 3.9]{balbianietal:2008}}] \label{shit1}
Let $\phi,\psi\in\langPapal$. Then: 
\begin{enumerate}
\item $\classS \models \allppas (\phi\et\psi) \eq (\allppas\phi\et\allppas\psi)$; 
\item $\classS \models \phi$ implies $\classS \models \allppas\phi$; 
\item $\classS \models \allppas \phi \imp\phi$;
\item $\classS \models K_a \allppas \phi \imp \allppas K_a \phi$.
\end{enumerate}
\end{lemma}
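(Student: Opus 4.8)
The plan is to verify the four validities directly from the semantics of $\allppas$ (Definition~\ref{papal-semantics}), reducing each to a fact about ordinary public announcements. Throughout, recall that $M_s \entails \allppas \phi$ abbreviates ``$M_s \entails \announceA{\chi}\phi$ for every $\chi\in\langMlPlus$'', and that every restriction $M|\chi$ is again an \classS\ model (restricting equivalence relations yields equivalence relations), so announcements never leave the model class.

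For item~1 I would invoke the \logicPal-validity that announcement distributes over conjunction, $\announceA{\chi}(\alpha\et\beta)\eq(\announceA{\chi}\alpha\et\announceA{\chi}\beta)$, which holds because $\announceA{\chi}$ is a partial test operator. Unfolding $\allppas(\phi\et\psi)$ as ``$\announceA{\chi}(\phi\et\psi)$ for every positive $\chi$'', distributing each instance, and regrouping the two conjuncts over the single shared range of $\chi$, yields $\allppas\phi\et\allppas\psi$; every step is an equivalence, so both directions follow at once. Item~2 is a necessitation argument: if $\classS\entails\phi$ then for any $M_s$ and any positive $\chi$ with $M_s\entails\chi$ the restriction $(M|\chi)_s$ is a pointed \classS\ model, hence $(M|\chi)_s\entails\phi$ by validity; thus $M_s\entails\announceA{\chi}\phi$ for all positive $\chi$, i.e.\ $M_s\entails\allppas\phi$.

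For item~3 the key is to exhibit a \emph{valid positive witness}. Taking $p\vel\neg p\in\langMlPlus$ (an atom, a negated atom, and a disjunction, hence positive), we have $\interpretation[M]{p\vel\neg p}=S$, so $M|(p\vel\neg p)=M$ as a pointed model. Instantiating the universal quantifier in $M_s\entails\allppas\phi$ at this witness gives $M_s\entails\announceA{p\vel\neg p}\phi$; since $M_s\entails p\vel\neg p$ and the announcement is vacuous, this yields $M_s\entails\phi$.

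The substantive case, and the main thing to get right, is item~4: the commutation of $K_a$ past the quantifier. Assume $M_s\entails K_a\allppas\phi$ and fix an arbitrary positive $\chi$ with $M_s\entails\chi$; the goal is $(M|\chi)_s\entails K_a\phi$. The $a$-successors of $s$ in $M|\chi$ are exactly the states $t$ with $s\sim_a t$ in $M$ and $M_t\entails\chi$. For each such $t$ we have $M_t\entails\allppas\phi$ (since $s\sim_a t$ and $M_s\entails K_a\allppas\phi$), and because $\chi$ is positive with $M_t\entails\chi$, instantiating that very quantifier at $\chi$ gives $(M|\chi)_t\entails\phi$. As $t$ ranged over all $a$-successors of $s$ in $M|\chi$, we get $(M|\chi)_s\entails K_a\phi$, and since $\chi$ was arbitrary, $M_s\entails\allppas K_a\phi$. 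The only delicate point — and the reason only this direction is claimed — is that the \emph{same} announced formula $\chi$ must serve as the witness both at $s$ (for $\allppas K_a\phi$) and at each accessible $t$ (for $\allppas\phi$); this reuse is legitimate precisely because the range $\langMlPlus$ of the quantifier is state-independent, so positivity of $\chi$ is preserved when passing from $s$ to $t$. No composition of positive announcements is required, so the failure of composability shown above does not interfere.
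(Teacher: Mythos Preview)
Your proposal is correct and follows essentially the same route as the paper: items~1 and~2 are handled semantically exactly as the paper does (which merely says they ``directly follow from the semantics of $\allppas$''), and your argument for item~4 is the paper's argument verbatim, down to reusing the same positive witness $\chi$ at each $a$-successor. The only cosmetic difference is item~3: the paper instantiates the quantifier at $\top$ and uses $\announceA{\top}\phi\eq\phi$, whereas you explicitly take $p\vel\neg p\in\langMlPlus$---arguably more careful, since $\top$ is not a primitive of the positive grammar in Definition~\ref{positive-formula} but only an abbreviation for such a formula.
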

\begin{proof}
The proofs are exactly as in \cite{balbianietal:2008}. The first two directly follow from the semantics of $\allppas$. For the third, note that  $\classS \models \allppas \phi \imp [\top]\phi$ and $\classS \models [\top] \phi \eq \phi$. 

For the last, suppose that $M_s\models K_a \allppas \phi$, that $M_s \models \psi$ for $\psi\in\langMlPlus$, and that $t$ is in the domain of $M|\psi$ such that $s \sim_a t$. We have to show that $(M|\psi)_t \models \phi$. As $s\sim_a t$ also holds in $M$, from the assumption $M_s \models K_a \allppas \phi$ it follows that $M_t \models \allppas \phi$. As $t$ is in the domain of $M|\psi$, $M_t \models \psi$. From $M_t \models \allppas \phi$ and $M_t \models \psi$ follows $(M|\psi)_t \models \phi$, as required.
\end{proof}

We now proceed to prove that, like the \logicApal{} quantifier, the \logicPapal{} quantifier also satisfies the Church-Rosser and McKinsey properties. As our proof of Church-Rosser is very different from that in \cite{balbianietal:2008}, we give the proofs of these properties and also the proofs of the lemmas on which they depend in detail.

Given a model $M = (S,\sim,V)$, if for all $p \in Q \subseteq P$, $V(p) = \emptyset$ or $V(p) = S$, we say that the valuation of the atoms in $Q$ is \emph{constant} on $M$.

\begin{lemma}[\mbox{\cite[Lemma 3.2]{balbianietal:2008}}]\label{shit2}
Let $\phi\in\langPapal$ and let $M \in \classS$ be a model on which the valuation of atoms in $v(\phi)$ is constant. Then $M \models \phi$ or $M \models \neg\phi$.
\end{lemma}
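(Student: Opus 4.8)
The plan is to prove, by induction on the structure of $\phi$, the equivalent statement that the truth value of $\phi$ is \emph{uniform} across $M$: for all $s,t \in \states$, $\pointedModel{\stateS} \models \phi$ iff $\pointedModel{\stateT} \models \phi$. This is exactly the disjunction $M \models \phi$ or $M \models \neg\phi$ (uniformly true, resp.\ uniformly false). Two bookkeeping facts keep the induction alive. First, since $v$ distributes over the connectives and $v(\allppas\chi) = v(\necessary[\agentA]\chi) = v(\chi)$, constancy of the valuation on $v(\phi)$ entails constancy on $v(\chi)$ for every subformula $\chi$. Second, constancy is preserved under model restriction: if $\valuation(\atomP) \in \{\emptyset,\states\}$ then $\restrict{\valuation}{\statesT}(\atomP) \in \{\emptyset,\statesT\}$. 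Hence the induction hypothesis applies to subformulas evaluated on $M$ and on any restriction $\restrict{\model}{\psi}$.

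The propositional and epistemic cases are routine. For an atom $\atomP \in v(\phi)$, constancy gives $\atomP$ true everywhere or false everywhere. Negation and conjunction preserve uniformity at once. For $\necessary[\agentA]\chi$: by the induction hypothesis $\chi$ is uniform; if $\chi$ holds everywhere then so does $\necessary[\agentA]\chi$, while if $\chi$ fails everywhere then, by reflexivity of $\accessibility{\agentA}$, each state is its own $\agentA$-successor and $\necessary[\agentA]\chi$ fails everywhere. For $\announceA{\psi}\chi$: by the induction hypothesis $\psi$ is uniform; if $\psi$ is false everywhere then $\announceA{\psi}\chi$ is vacuously true everywhere, and if $\psi$ is true everywhere then $\I{\psi} = \states$, so $\restrict{\model}{\psi} = \model$ and $\announceA{\psi}\chi$ coincides with $\chi$, which is uniform by the induction hypothesis.

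The main case is $\allppas\chi$. For any positive $\psi$, constancy of $v(\chi)$ on $\restrict{\model}{\psi}$ makes $\chi$ uniform there, so $(\restrict{\model}{\psi})_s \models \chi$ iff $\restrict{\model}{\psi} \models \chi$ (validity on the submodel). Thus $\pointedModel{\stateS} \models \allppas\chi$ iff for every $\psi \in \langMlPlus$ with $\pointedModel{\stateS} \models \psi$ we have $\restrict{\model}{\psi} \models \chi$. Now apply the induction hypothesis to $\chi$ on $M$ itself. If $\chi$ is false everywhere on $M$, pick the positive validity $\psi_\top := \atomP \vel \neg\atomP \in \langMlPlus$ for some $\atomP \in v(\chi)$; then $\pointedModel{\stateS} \models \psi_\top$ for every $s$ and $\restrict{\model}{\psi_\top} = \model \nentails \chi$, so $\pointedModel{\stateS} \nentails \allppas\chi$ for every $s$: uniformly false. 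If instead $\chi$ is true everywhere, suppose toward a contradiction that $\allppas\chi$ is not uniform; without loss of generality $\pointedModel{\stateS} \models \allppas\chi$ and $\pointedModel{\stateT} \nentails \allppas\chi$. The latter yields a positive $\psi$ with $\pointedModel{\stateT} \models \psi$ and $\restrict{\model}{\psi} \models \neg\chi$. If $\pointedModel{\stateS} \models \psi$ as well, then $s \in \I{\psi}$ and $(\restrict{\model}{\psi})_s \models \neg\chi$ contradicts $\pointedModel{\stateS} \models \allppas\chi$ immediately. So the only surviving possibility is $\pointedModel{\stateS} \nentails \psi$.

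The remaining obstacle, which I expect to be the crux, is exactly this surviving case: from a positive $\psi$ with $\pointedModel{\stateT} \models \psi$, $\pointedModel{\stateS} \nentails \psi$, and $\restrict{\model}{\psi} \models \neg\chi$, I must manufacture a positive $\psi'$ with $\pointedModel{\stateS} \models \psi'$ and still $\restrict{\model}{\psi'} \models \neg\chi$, contradicting $\pointedModel{\stateS} \models \allppas\chi$. The plan is to exploit that $\I{\psi}$ is closed under refinements (Lemma~\ref{positive-announcements-refinement-closed}) and that, under constant valuation of $v(\chi)$, the all-pairs relation on $\states$ is a $v(\chi)$-bisimulation (immediate from reflexivity together with constancy of the atoms in $v(\chi)$), so that $s$ is $v(\chi)$-bisimilar to the states of $\restrict{\model}{\psi}$. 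I would then enlarge the announcement to a positive $\psi'$ whose restriction contains $s$ and realises a set that is refinement-closed (using the correspondence of Lemma~\ref{refinement-closed-positive-announcements} on image-finite models) and bisimilar, \emph{as a full model}, to $\restrict{\model}{\psi}$, so that $\restrict{\model}{\psi'} \models \neg\chi$ follows by bisimulation-invariance of $\langPapal$ (Lemma~\ref{bisimulation-preserves-papal}). The delicate point is precisely that $v(\chi)$-bisimilarity of $\restrict{\model}{\psi}$ and $\restrict{\model}{\psi'}$ is by itself \emph{insufficient} to transfer the truth of $\chi$ — as the paper stresses, $\bisimilar[{v(\chi)}]$ does not imply $\equiv^{v(\chi)}_{\mathit{apal}}$ — so the argument must upgrade this to a genuine all-atoms bisimulation on the carved submodels, and the general (not image-finite) case will require separate handling of the refinement-closure/positive-definability correspondence.
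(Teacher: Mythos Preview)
Your induction is fine up to the $\allppas\chi$ case, but there it breaks down, and you have correctly located the breaking point yourself. The plan you sketch for the ``surviving case'' cannot be carried out. Concretely: you want, from a positive $\psi$ with $t\in\I{\psi}_M$, $s\notin\I{\psi}_M$ and $\restrict{\model}{\psi}\models\neg\chi$, to produce a positive $\psi'$ with $s\in\I{\psi'}_M$ and $\restrict{\model}{\psi'}$ \emph{fully} bisimilar to $\restrict{\model}{\psi}$, so that $\neg\chi$ transfers by Lemma~\ref{bisimulation-preserves-papal}. But such a $\psi'$ need not exist. Nothing is assumed about atoms $q\notin v(\chi)$, and those atoms can make $s$ non-bisimilar (on the full language) to every state of $\restrict{\model}{\psi}$; then no restriction containing $s$ is bisimilar, as a whole model, to $\restrict{\model}{\psi}$. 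Your own warning that $\bisimilar[v(\chi)]$ does not imply $\equiv^{v(\chi)}_{\mathit{apal+}}$ is exactly why the weaker $v(\chi)$-bisimilarity you do have is useless here, and Lemma~\ref{refinement-closed-positive-announcements} (image-finite only) does not help even on finite models, because closure under refinements does not force bisimilarity of the resulting restrictions.

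The paper avoids this obstacle altogether by a different and much shorter route: replace every atom $p\in v(\phi)$ by $\top$ or $\bot$ according to its constant value on $M$, obtaining an atom-free formula $\phi'$ with $M\models\phi\iff\phi'$; then observe that an atom-free \langPapal{} formula is a validity or a contradiction, using $K_a\top\iff\top$, $K_a\bot\iff\bot$, $\allppas\top\iff\top$, $\allppas\bot\iff\bot$ (the last because $p\vee\neg p\in\langMlPlus$ is always true) together with the announcement reductions. The substitution step itself is an easy induction whose $\allppas$ case goes through because the quantified $\psi$ is left untouched while $\chi$ and $\chi(\top/p)$ agree on every restriction $\restrict{\model}{\psi}$ (where $p$ is still constant). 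In effect the paper proves the stronger fact that the truth value of $\phi$ is the \emph{same} on \emph{every} model with the given constant $v(\phi)$-valuation, which is precisely the missing lemma your $\allppas$ case would need.
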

\begin{proof}
Let $p \in v(\phi)$. Let $\psi(\top/p)$ be the substitution of all occurrences of $p$ in $\phi$ by $\top$. As the valuation of $p$ is constant, then if $p$ is true on $M$ we have that $M \models \phi \eq \phi(\top/p)$. Similarly, if $p$ is false on $M$, then $M \models \phi \eq \phi(\bot/p)$. Let $\phi'$ be the result of successively substituting all $p \in v(\phi)$ by $\top$ or $\bot$ in this way. Clearly $M \models \phi \eq \phi'$. Using the $\classS$ validities $K_a \top \eq \top$, $K_a \bot \eq \bot$, using $\classS \models \allppas \top \eq \top$ and $\classS \models \allppas \bot \eq \bot$, and using propositional properties of combining $\top$ and $\bot$, we obtain that $\classS \models \phi' \eq \top$ or $\classS \models \phi' \eq \top$. Thus we also have that $M \models \phi \eq \top$ or $M \models \phi \eq \top$, in other words, that $M \models \phi$ or $M \models \neg\phi$.
\end{proof}

\begin{lemma}[\mbox{\cite[Lemma 3.3]{balbianietal:2008}}]\label{shit3}
Let $\phi\in\langPapal$ and let $M = (S,\sim,V) \in \classS$ be a model on which the valuation of atoms in $v(\phi)$ is constant. Then $M \models \phi\imp\allppas\phi$.
\end{lemma}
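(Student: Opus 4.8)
The plan is to unfold the universal modality to a claim about individual states and then reuse the constant-valuation reduction already built in the proof of Lemma~\ref{shit2}. Since $M \models \phi \imp \allppas\phi$ means $M_s \models \phi \imp \allppas\phi$ for every $s \in S$, I would fix an arbitrary $s$ with $M_s \models \phi$ and aim to show $M_s \models \allppas\phi$. Because the valuation of $v(\phi)$ is constant on $M$, Lemma~\ref{shit2} gives $M \models \phi$ or $M \models \neg\phi$; the witness $s$ excludes the second disjunct, so in fact $\phi$ holds globally, $M \models \phi$.

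Next I would recall the atom-free reduction from the proof of Lemma~\ref{shit2}: substituting $\top$ for every $p \in v(\phi)$ with $V(p) = S$ and $\bot$ for every $p \in v(\phi)$ with $V(p) = \emptyset$ yields a variable-free formula $\phi'$ with $M \models \phi \eq \phi'$ and, as a \classS-validity, either $\classS \models \phi' \eq \top$ or $\classS \models \phi' \eq \bot$. Since $M \models \phi$, it must be that $\classS \models \phi' \eq \top$. The force of this step is that $\phi'$ carries no propositional variables, so its truth value is model-independent.

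To establish $M_s \models \allppas\phi$ I would take an arbitrary witness $\psi \in \langMlPlus$ with $M_s \models \psi$ (if $M_s \not\models \psi$ then $[\psi]\phi$ is vacuously true) and show $(M|\psi)_s \models \phi$. The crucial observation is that passing to the restriction $M|\psi$ preserves the constancy of $v(\phi)$ with \emph{exactly the same values}: an atom true throughout $S$ is still true throughout the (smaller) domain of $M|\psi$, and an atom false throughout $S$ stays false. Hence the identical substitution gives $M|\psi \models \phi \eq \phi'$, and since $\classS \models \phi' \eq \top$ holds on every \classS-model, in particular on $M|\psi$, we obtain $M|\psi \models \phi$ and therefore $(M|\psi)_s \models \phi$. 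As $\psi$ was arbitrary, $M_s \models \allppas\phi$, and since $s$ was arbitrary, $M \models \phi \imp \allppas\phi$.

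The one point requiring care — and the only real obstacle — is this transfer step: I must verify both that $v(\phi)$ remains constant on $M|\psi$ (immediate, as restriction only deletes states and cannot split a uniform valuation) and that the previously established class validity $\classS \models \phi' \eq \top$ genuinely applies to the restricted model. The latter is exactly what lets the argument avoid re-evaluating $\phi$ on $M|\psi$ from scratch, and it is the reason the hypothesis must be \emph{constancy} of the valuation rather than merely global truth of $\phi$: constancy is what survives the restriction and pins down the same variable-free $\phi'$.
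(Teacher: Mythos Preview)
Your argument is correct, but it takes a different route from the paper. You reach inside the proof of Lemma~\ref{shit2} and use the stronger intermediate fact that the variable-free substitution instance $\phi'$ is a \emph{class} validity (equivalent to $\top$), which then transfers automatically to any \classS{} model, in particular to $M|\psi$. The paper instead treats Lemma~\ref{shit2} as a black box: it forms the disjoint union $M' = M + M|\psi$, observes that the valuation of $v(\phi)$ is still constant on $M'$, applies Lemma~\ref{shit2} to $M'$ to get $M' \models \phi$ or $M' \models \neg\phi$, rules out the second disjunct via the witness $s$ in the $M$-component, and then reads off $(M|\psi)_s \models \phi$ from the $M|\psi$-component. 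Your approach is more direct and avoids the disjoint-union construction, at the cost of depending on the internal structure of the earlier proof rather than just its statement; the paper's approach is more modular but introduces an auxiliary model.
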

\begin{proof}
Let $s \in S$ and $M_s \models \phi$. Now consider $\psi \in \langMlPlus$ such that $M_s \models \psi$. Note that the valuation of atoms in $v(\phi)$ on $M|\psi$ is also constant, and that it is the same as the valuation of atoms on $M$. Consider the disjoint union $M' = M + M|\psi$ of $M$ and $M|\psi$ (it is the model defined by the disjoint union of the respective domains, accessibility relations, and valuations). Like $M$ and $M|\psi$, $M'$ is a model on which the valuation of $v(\phi)$ is constant. Therefore, from Lemma \ref{shit2}, either $M' \models \phi$ or $M' \models \neg\phi$. The second implies that $M \models \neg\phi$ which contradicts $M_s \models \phi$. Therefore, $M' \models \phi$. From $M' \models \phi$ it follows that $M|\psi \models \phi$, so that in particular $(M|\psi)_s \models \phi$. As $\psi$ was arbitrary, we now have shown that: for all $\psi \in \langMlPlus$, if $M_s \models \psi$ then $(M|\psi)_s \models \phi$. By the semantics of public announcement this is equivalent to: for all $\psi \in \langMlPlus$, $M_s \models [\psi] \phi$. By the semantics of $\allppas$ this is equivalent to $M_s \models \allppas \phi$. From that and the assumption it follows that $M_s \models \phi \imp \allppas \phi$, and as $s$ was arbitrary, we thus have shown that $M \models \phi \imp \allppas\phi$, as required.
\end{proof}

Given a model $M_s$ and a (finite) set of propositional variables $\atomsQ$, we write $\delta_\atomsQ^s$ for the conjunction of literals expressing the values of the atoms from $\atomsQ$ in $s$. This is the so-called {\em characteristic formula of the (restricted) valuation in state $s$}. Note that $M_s \models \delta_Q^s$.

\begin{proposition}\label{church-rosser-s5}
    Arbitrary positive announcements have the Church-Rosser property in \classS{},
    i.e. $\classS \entails \someppas \allppas \phi \implies \allppas \someppas \phi$ for all $\phi \in \langPapal$.
\end{proposition}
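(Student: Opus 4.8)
The plan is to prove the implication directly: fix a model $M_s$ with $M_s \models \someppas \allppas \phi$ and derive $M_s \models \allppas \someppas \phi$. Unpacking the quantifier, the hypothesis supplies a positive witness $\psi \in \langMlPlus$ with $M_s \models \psi$ and $(M|\psi)_s \models \allppas \phi$. The central device is the characteristic formula $\delta = \delta_{v(\phi)}^s$ of the valuation of $v(\phi)$ in $s$: it is a conjunction of literals, hence lies in $\langMlPlus$; it is true at $s$ in every restriction of $M$ that still contains $s$, since restriction does not alter valuations; and, crucially, announcing it forces the valuation of every atom in $v(\phi)$ to be constant, because every surviving state must agree with $s$ on those atoms.

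First I would announce $\delta$ inside the witness model. Since $\delta \in \langMlPlus$ and $(M|\psi)_s \models \delta$, the hypothesis $(M|\psi)_s \models \allppas \phi$ yields $((M|\psi)|\delta)_s \models \phi$. On $(M|\psi)|\delta$ the valuation of $v(\phi)$ is constant, so by Lemma~\ref{shit2} (more precisely, by its proof) $\phi$ is equivalent on this model to the atom-free formula $\phi'$ obtained by substituting each $\atomP \in v(\phi)$ by $\top$ or $\bot$ according to its truth at $s$, and moreover $\classS \models \phi' \iff \top$ or $\classS \models \phi' \iff \bot$. Because $\phi$ actually holds at a state of $(M|\psi)|\delta$, the first alternative must occur: $\phi'$ is valid. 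The essential observation is that $\phi'$ depends only on the valuation of $v(\phi)$ at $s$ — that is, only on $\delta$ — and not on any structural feature of the particular model carrying that constant valuation.

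Now take an arbitrary $\theta \in \langMlPlus$ with $M_s \models \theta$; I must produce a positive witness for $(M|\theta)_s \models \someppas \phi$, and I claim $\delta$ itself works. As $\delta$ describes the valuation of $s$, we have $(M|\theta)_s \models \delta$, so $(M|\theta)|\delta$ is defined and contains $s$. Again its valuation of $v(\phi)$ is constant and agrees with $s$, so the same substitution computes the same atom-free $\phi'$, and $\phi$ is equivalent to $\phi'$ on $(M|\theta)|\delta$ by Lemma~\ref{shit2}. Since $\phi'$ was shown valid, $((M|\theta)|\delta)_s \models \phi$, whence $\delta$ witnesses $(M|\theta)_s \models \someppas \phi$. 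As $\theta$ was arbitrary we obtain $M_s \models \allppas \someppas \phi$, completing the argument.

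I expect the delicate point to be the transport step: one must note that the truth value of $\phi$ on a constant-valuation model matching $s$ is determined solely by $\delta$ (the uniform reading of Lemma~\ref{shit2}), so that the verdict ``$\phi$ holds'' can be carried from the $\psi$-restriction to the \emph{unrelated} $\theta$-restriction through their common further announcement $\delta$. This confluence through a single positive formula $\delta$, rather than through composing the announcements $\psi$ and $\theta$ (which is exactly what is unavailable here, since positive announcements are not composable), is what makes the proof go through and distinguishes it from the Church--Rosser argument of \cite{balbianietal:2008}.
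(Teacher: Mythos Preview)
Your proof is correct and follows essentially the same approach as the paper: obtain a positive witness $\psi$, announce the characteristic formula $\delta_{v(\phi)}^s$ to reach a constant-valuation model on which $\phi$ holds, and then, for an arbitrary $\theta \in \langMlPlus$, announce $\delta_{v(\phi)}^s$ again and transport the truth of $\phi$.

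The one minor difference is in how you justify the transport step. You unpack the proof of Lemma~\ref{shit2} and argue via the atom-free substitution $\phi'$: since $\phi'$ depends only on the valuation at $s$, it is the \emph{same} $\phi'$ on both $(M|\psi)|\delta$ and $(M|\theta)|\delta$, and having been shown satisfiable it must be valid. The paper instead appeals to the disjoint-union trick from the proof of Lemma~\ref{shit3}: put $(M|\psi)|\delta$ and $(M|\eta)|\delta$ side by side, note the valuation of $v(\phi)$ is still constant on the union, and use Lemma~\ref{shit2} once on that union to rule out $\neg\phi$. Your route is arguably more direct (no auxiliary model), while the paper's keeps Lemma~\ref{shit2} as a black box; either way the underlying idea is identical.
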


\begin{proof}
Let $M_s\models \someppas \allppas \phi$. Then there is $\psi\in\langMlPlus$ such that $M_s \models \dia{\psi}\allppas \phi$, i.e., $M_s \models \psi$ and $(M|\psi)_s \models \allppas \phi$. In particular, $(M|\psi)_s \models [\delta_{v(\phi)}^s] \phi$. Therefore, since we also have that $M_s \models \delta_{v(\phi)}^s$, $(M|\psi|\delta_{v(\phi)}^s)_s \models \phi$. Observe that $M|\psi|\delta_{v(\phi)}^s$ is a model on which the valuation of atoms in $v(\phi)$ is constant.

Now let $\eta \in \langMlPlus$ be arbitrary and such that $M_s \models \eta$. Consider $(M|\eta|\delta_{v(\phi)}^s)_s$. The valuation of the atoms in $v(\phi)$ is also constant on $M|\eta|\delta_{v(\phi)}^s$. From Lemma \ref{shit2} it follows that $M|\eta|\delta_{v(\phi)}^s \models \phi$ or $M|\eta|\delta_{v(\phi)}^s \models \neg \phi$. Similarly to the reasoning in the proof of Lemma \ref{shit3}, the latter contradicts (the above) $(M|\psi|\delta_{v(\phi)}^s)_s \models \phi$, and therefore  $M|\eta|\delta_{v(\phi)}^s \models \phi$. From that follows $(M|\eta)_s \models \dia{\delta_{v(\phi)}^s} \phi$, so that $(M|\eta)_s \models \someppas \phi$. 

We have now shown that for all $\eta\in\langMlPlus$, $M_s \models \eta$ implies $(M|\eta)_s \models \someppas \phi$, which by the semantics of $\allppas$ is equivalent to $M_s \models \allppas \someppas \phi$. 
\end{proof}

\begin{proposition}[\mbox{\cite[Prop.\ 3.4]{balbianietal:2008}}]\label{prop.mckinsey}
    Arbitrary positive announcements have the McKinsey property in \classS{},
    i.e. $\classS \entails \allppas \someppas \phi \implies \someppas \allppas \phi$ for all $\phi \in \langPapal$.
\end{proposition}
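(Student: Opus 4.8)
The plan is to mirror the structure of the Church--Rosser proof (Proposition~\ref{church-rosser-s5}): I would use the characteristic formula $\delta_{v(\phi)}^s$ to force the valuation of the relevant atoms to become constant, after which Lemma~\ref{shit3} collapses the distinction between $\phi$, $\allppas\phi$ and $\someppas\phi$. So the witness for the existential quantifier $\someppas$ in the conclusion will simply be this characteristic formula.

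Concretely, suppose $M_s \models \allppas\someppas\phi$. First I would set $\psi := \delta_{v(\phi)}^s$. Since $v(\phi)$ is finite and $\psi$ is a conjunction of literals $p$ or $\neg p$, we have $\psi \in \langMlPlus$, and $M_s \models \psi$ by construction. Writing $N := M|\psi$, the valuation of every atom in $v(\phi)$ is constant on $N$ (and remains constant under any further restriction). Instantiating the hypothesis $M_s \models \allppas\someppas\phi$ with the positive witness $\psi$, using the semantics of $\allppas$ together with $M_s \models \psi$, yields $N_s \models \someppas\phi$.

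The key step is to show that, on the constant-valuation model $N$, the three formulas $\phi$, $\someppas\phi$ and $\allppas\phi$ are equivalent. By part~3 of Lemma~\ref{shit1} we always have $N \models \allppas\phi \imp \phi$, while Lemma~\ref{shit3} applied to $\phi$ gives $N \models \phi \imp \allppas\phi$; hence $N \models \phi \iff \allppas\phi$. Applying Lemma~\ref{shit3} instead to $\neg\phi$ --- legitimate since $v(\neg\phi) = v(\phi)$, so the valuation is still constant --- gives $N \models \neg\phi \imp \allppas\neg\phi$, whose contrapositive is $N \models \someppas\phi \imp \phi$; combined with the general validity $N \models \phi \imp \someppas\phi$ (the dual of part~3 of Lemma~\ref{shit1}), this yields $N \models \phi \iff \someppas\phi$. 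Chaining these equivalences, $N_s \models \someppas\phi$ forces $N_s \models \allppas\phi$, i.e.\ $(M|\psi)_s \models \allppas\phi$.

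Finally, since $M_s \models \psi$ and $(M|\psi)_s \models \allppas\phi$ with $\psi \in \langMlPlus$, the semantics of the positive announcement give $M_s \models \dia{\psi}\allppas\phi$, hence $M_s \models \someppas\allppas\phi$, as required. I expect the main obstacle to be the collapse argument of the third paragraph: the crucial points are to apply Lemma~\ref{shit3} to $\neg\phi$ as well as to $\phi$, and to verify that restricting $M$ by $\delta_{v(\phi)}^s$ (and any later restriction) genuinely preserves constancy of the valuation on $v(\phi)$, so that $N$ really does satisfy the hypotheses of Lemma~\ref{shit3}.
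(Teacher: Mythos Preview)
Your proof is correct and follows essentially the same route as the paper: announce the characteristic formula $\delta_{v(\phi)}^s$ to make the valuation constant, then collapse $\someppas\phi$, $\phi$, and $\allppas\phi$ on the restricted model via Lemma~\ref{shit3}. The paper simply invokes ``the dual'' of Lemma~\ref{shit3} where you explicitly spell out the derivation via applying the lemma to $\neg\phi$ and taking the contrapositive, but the argument is the same.
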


\begin{proof}
The proof is different from that of Proposition~\ref{church-rosser-s5}, but the crucial role of the announcement of values of all variables in $\phi$ is the same. 

Suppose $M_s \models \allppas \someppas \phi$. Then (as above) $(M|\delta_{v(\phi)}^s)_s \models \someppas \phi$. As $M|\delta_{v(\phi)}^s$ is a model on which the valuation of the atoms in $v(\phi)$ is constant, we have (Lemma \ref{church-rosser-s5}) that $M|\delta_{v(\phi)}^s \models \phi \imp \allppas \phi$. Also using the dual $M|\delta_{v(\phi)}^s \models \someppas\phi \imp \phi$ of that lemma, we obtain $M|\delta_{v(\phi)}^s \models \someppas\phi \imp \allppas\phi$. From that and $(M|\delta_{v(\phi)}^s)_s \models \someppas \phi$ it follows that $(M|\delta_{v(\phi)}^s)_s \models \allppas \phi$, and we therefore obtain $M_s \models \dia{\delta_{v(\phi)}^s}\allppas \phi$ and also $M_s \models \someppas\allppas \phi$.
\end{proof}

Nowhere in the proofs of Lemma \ref{shit1}, Lemma \ref{shit2}, Proposition~\ref{prop.mckinsey} and Proposition~\ref{church-rosser-s5} is it essential that the announced formulas are positive. A closer comparison with the results in \cite{balbianietal:2008} may therefore be of interest: 

The proofs of Lemma \ref{shit1} and Proposition~\ref{prop.mckinsey} (McKinsey) are virtually identical to, respectively, \cite[Lemma 3.2]{balbianietal:2008} and \cite[Prop.\ 3.4]{balbianietal:2008}. The proof of Lemma \ref{shit2} is more detailed than that of \cite[Lemma 3.3]{balbianietal:2008}, but it seems to amount to the intentions of that more schematic proof. However, the proof of Proposition~\ref{church-rosser-s5} (Church-Rosser) is very different from the proof of \cite[Prop.\ 3.8]{balbianietal:2008}, that is not only more involved but also based on a lemma that was later shown by Kuijer to be incorrect. We have therefore not attributed Proposition~\ref{church-rosser-s5} to \cite{balbianietal:2008}. %Ours may even be seen as an alternative proof of the Church-Rosser property for \logicApal.

\medskip

Again, we note that analogous results to 
Lemma~\ref{bisimulation-preserves} and Lemma~\ref{bisimulation-hennessy-milner} on bisimulation correspondence also apply to the language $\langPapal$: bisimilarity preserves modal equivalence (\logicPapal{} is bisimulation invariant), and on image-finite models modal equivalence implies bisimilarity. As we also consider some variations, we will give the crucial detail to prove the first.

%%%%%%%%%%%%%%%%%Tims Additions Start%%%%%%%%%%%%%%%%%%%%%%%%%%%

\begin{lemma}\label{bisimulation-preserves-papal}
Let $\pointedModel{\stateS}, \pointedModel[\prime]{\stateS[\prime]} \in \classS$. Then $\pointedModel{\stateS} \bisimilar \pointedModel[\prime]{\stateS[\prime]}$ implies $\pointedModel{\stateS} \equiv_\mathit{apal+}\pointedModel[\prime]{\stateS[\prime]}$.
\end{lemma}

\begin{proof}
We prove the equivalent proposition: \begin{quote} Let $\phi \in \langPapal$ be a formula. Then for all epistemic models $\pointedModel{\stateS}, \pointedModel[\prime]{\stateS[\prime]} \in \classS$ such that $\pointedModel{\stateS} \bisimilar \pointedModel[\prime]{\stateS[\prime]}$, $\pointedModel{\stateS} \entails \phi$ if and only if $\pointedModel[\prime]{\stateS[\prime]} \entails \phi$.\end{quote} This is a straightforward proof by induction over the complexity of the formula $\phi$  occurring the proposition (just as the proof in~\cite{agotnesetal.jal:2010}), where it is important that this formula is declared {\em before} the two models. The clause for $\allppas \phi$ goes as follows.

\bigskip

\noindent $\pointedModel{\stateS} \entails \allppas\phi$ \\ $\Eq$ \\
For all $\psi\in\langMlPlus$, $\pointedModel{\stateS} \entails [\psi]\phi$ \\ $\Eq$ \\ 
For all $\psi\in\langMlPlus$, $M_s \models \psi$ implies $(M|\psi)_s \entails \phi$ \\ $\Eq$  \hfill (*) \\ 
For all $\psi\in\langMlPlus$, \pmb{$M'_{s'} \models \psi$} implies $(M|\psi)_s \entails \phi$ \\ $\Eq$  \hfill (**) \\
For all $\psi\in\langMlPlus$, $M'_{s'} \models \psi$ implies \pmb{$(M'|\psi)_{s'} \entails \phi$} \\ $\Eq$ \\
For all $\psi\in\langMlPlus$, $M'_{s'} \entails [\psi]\phi$ \\ $\Eq$ \\
$M'_{s'} \entails \allppas\phi$.

\bigskip

Step (*) is justified because $\pointedModel{\stateS} \bisimilar \pointedModel[\prime]{\stateS[\prime]}$ implies $\pointedModel{\stateS} \equiv_\mathit{el} \pointedModel[\prime]{\stateS[\prime]}$ (Lemma \ref{bisimulation-preserves}). Therefore, $M_s \models \psi$ iff $M'_{s'} \models \psi$. 

Step (**) is justified as follows. Given the assumption $\pointedModel{\stateS} \bisimilar \pointedModel[\prime]{\stateS[\prime]}$, let $\bisimulation: M_s \bisimilar M'_{s'}$. Define $\bisimulation'$ as follows: $\bisimulation'(t,t')$ iff ($\bisimulation(t,t')$ and $M_t \models \psi$). From Lemma \ref{bisimulation-preserves} it follows that also $M'_{t'} \models \psi$, so that $\bisimulation'$ is indeed a relation between $M|\psi$ and $M'|\psi$. We now show that $\bisimulation':(M|\psi)_s \bisimilar (M'|\psi)_{s'}$. The clause {\bf atoms}-$p$ is obviously satisfied. Concerning {\bf forth}-$a$, take any pair $(t,t')$ such that $\bisimulation'(t,t')$ and let $u$ in the domain of $M|\psi$ be such that $t \sim_a u$. As $u$ is in the domain of $M|\psi$, $M_u \models \psi$. From $\bisimulation'(t,t')$ follows $\bisimulation(t,t')$. As $t \sim_a u$ in $M|\psi$, also $t \sim_a u$ in $M$. From $\bisimulation(t,t')$, $t \sim_a u$ in $M$, and {\bf forth}-$a$ (for $\bisimulation$) it follows that there is $u'$ in the domain of $M'$ such that $\bisimulation(u,u')$ and $t' \sim_a u'$. From $\bisimulation(u,u')$, $M_u \models \psi$, and Lemma \ref{bisimulation-preserves} it follows that $M'_{u'} \models \psi$, i.e., $u'$ is also in the domain of $M|\psi$. From $\bisimulation(u,u')$, $M_u \models \psi$, and the fact the $u'$ is in the domain of $M|\psi$ it follows that $\bisimulation'(u,u')$, as required. This proves {\bf forth}-$a$. The step {\bf back}-$a$ is shown similarly. Note that in particular $\bisimulation'(s,s')$. This therefore establishes that $(M|\psi)_s \simeq (M'|\psi)_{s'}$.

We now use that the induction hypothesis for $\phi$ not merely holds for epistemic models $M_s$, $ M'_{s'}$, but for any pair of epistemic models $N_t$, $N'_{t'}$, so in particular for $(M|\psi)_s$, $(M'|\psi)_{s'}$. (In the formulation of the proposition to be proved, the formula is declared before the models.) We thus conclude that $(M|\psi)_s \entails \phi$ iff $(M'|\psi)_{s'} \entails \phi$, as required.
\end{proof}

The above lemma is the analogue of bisimulation invariance for epistemic logic (Lemma \ref{bisimulation-preserves}). 
This analogue does not hold for \logicPapal{} when we replace bisimulations with $Q$-bisimulations. 
This is because in the formula $\allppas\phi$, the positive announcements can range over atoms that do not appear in $\phi$. It similarly fails for the logic \logicApal{}; see the discussion at the end of Subsection \ref{sec.expal}.

\medskip

We end this section with a fairly obvious result for compactness, that is similarly obtained for \logicPapal\ as for \logicApal.

A logic with language $\lang$ is {\em compact} if for any $\Phi\subseteq\lang$, if every finite $\Phi'\subseteq\Phi$ is satisfiable, then $\Phi$ is satisfiable. Like \logicApal{}, \logicPapal{} is not compact.

\begin{proposition}
    \logicPapal{} is not compact.
\end{proposition}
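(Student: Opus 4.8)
The plan is to exhibit a set $\Phi\subseteq\langPapal$ that is finitely satisfiable but not satisfiable, exactly as is done for \logicApal. The non-compactness must arise from a genuine ``contradiction at infinity,'' so I would drive it with a depth-indexed family $\{\pi_n : n\in\Naturals\}$ whose members force an ever longer $a$-$b$-chain at the evaluation point — hence, jointly, an infinite chain — together with one further formula $\chi$ that holds at the $p$-edge of every finite chain but fails as soon as the chain is infinite. Each finite subset of $\Phi$ then constrains only boundedly many ``distances'' and so is satisfied by a long-enough finite chain (on which $\chi$ still holds), whereas the whole of $\Phi$ forces the infinite chain, on which $\chi$ fails; this yields finite satisfiability without satisfiability.

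Concretely, I would work with the one-edged infinite $a$-$b$-chain $N$ on domain $\Naturals$ with $0\sim_a 1\sim_b 2\sim_a\cdots$ and $\atomP$ true only at the edge $0$, so that $0$ has distinguishing formula $\delta_0=\atomP$ and, by Lemma~\ref{lemma.dist}, every finite $B\subseteq\Naturals$ has a distinguishing formula $\delta_B\in\langMl$. Using these I would take $\pi_n\in\langMl$ to assert the existence of a state at chain-distance exactly $n$ from the point (of the form $\pi_n := \possible[\agentB]\possible[\agentA]\cdots\,\delta_n$, with alternation and parity matching the chain), so that $\{\pi_n : n\in\Naturals\}$ is satisfied at the edge of any chain of sufficient length and forces an infinite $a$-$b$-chain in the limit. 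For $\chi$ I would use a single $\someppas$-formula expressing that some positive announcement renders the point's component suitably bounded — intuitively, that there is a positive announcement after which a definite far edge becomes known — a property a finite chain can always realise but the one-edged infinite chain cannot.

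For finite satisfiability the argument is routine: a finite subset of $\Phi$ mentions only $\pi_{n_1},\dots,\pi_{n_k}$ together with $\chi$, and a finite $a$-$b$-chain of length exceeding $\max_i n_i$ satisfies all the $\pi_{n_i}$ at its $p$-edge and, being finite, also satisfies $\chi$, the witnessing positive announcement being assembled from the (restricted-valuation and distinguishing) formulas of its states via Lemma~\ref{lem:distinguishingFormula} and Lemma~\ref{refinement-closed-positive-announcements}. For unsatisfiability I would argue that any model of $\{\pi_n : n\in\Naturals\}$ contains at its point, up to bisimulation, the infinite one-edged chain (so that Lemma~\ref{bisimulation-preserves-papal} lets me transport truth values of $\langPapal$ formulas), and then show that $\chi$ fails there.

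The main obstacle is this last step: pinning $\chi$ down as an honest $\langPapal$ formula and proving that it is true on every finite chain yet false on the infinite one. This is precisely where positivity bites, because the quantifier inside $\chi$ ranges only over $\langMlPlus$; by Lemma~\ref{positive-announcements-refinement-closed} its admissible witnesses correspond to refinement-closed restrictions, so I must compute which ``cuts'' of a chain are refinement-closed, using Lemma~\ref{refinement-preserves} and Lemma~\ref{refinement-hennessy-milner} to determine the refinement preorder on chain states. The delicate point is that a positive announcement bounding a chain of length $m$ has epistemic depth growing with $m$, so no single positive formula serves uniformly on the infinite chain — the same ``no uniform depth bound'' phenomenon responsible for the failure of $\simeq^n$ implying $\equiv^n_\mathit{apal}$ noted after Proposition~\ref{apal-expressiveness} and exploited in the Kooi-style proof there. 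Making $\chi$ simultaneously expressible and correctly valued on the two sides is where essentially all the work lies.
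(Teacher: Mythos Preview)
Your proposal has a genuine gap: the formula $\chi$ is never specified, and you yourself flag that constructing it is ``where essentially all the work lies.'' Without $\chi$ the argument is not a proof. Nor is it clear that a suitable $\chi$ exists on your chosen models: you would need a single $\langPapal$ formula true at the $p$-edge of every finite one-edged chain yet false at the edge of the infinite one, and the refinement analysis you allude to (only prefixes and $p$/$\neg p$ cuts are positively definable) makes it far from obvious that $\someppas$ can express ``there is a far edge'' uniformly.

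The paper avoids this difficulty entirely with a much simpler construction that exploits the semantics of $\allppas$ directly. Fix $\phi := (\knows[\agentA] \atomP \implies \knows[\agentB] \knows[\agentA] \atomP)$ and take
\[
\Phi \;=\; \{\,\announceA{\psi}\phi \mid \psi \in \langMlPlus\,\} \;\cup\; \{\,\neg\allppas\phi\,\}.
\]
Unsatisfiability is immediate: if $\announceA{\psi}\phi$ holds at a point for every positive $\psi$, then $\allppas\phi$ holds there by definition, contradicting the last member of $\Phi$. For finite satisfiability, a finite subset contains only finitely many $\announceA{\psi_i}\phi$, so either their joint epistemic depth is bounded or they jointly omit some atom $\atomQ$; one then exhibits a model (as in the proofs of Proposition~\ref{apal-expressiveness}) in which each $\announceA{\psi_i}\phi$ holds while some further positive announcement makes $\knows[\agentA]\atomP \land \neg\knows[\agentB]\knows[\agentA]\atomP$ true, so $\neg\allppas\phi$ holds too. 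The point you are missing is that $\neg\allppas\phi$ already packages the ``contradiction at infinity'' against the family $\{\announceA{\psi}\phi\}_\psi$; there is no need to hand-encode finiteness of a chain.
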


This follows from the same reasoning used by Balbiani {\em et al.}~\cite{balbianietal:2008} to show that \logicApal{} is not compact.
Specifically, under the semantics of \logicPapal{} the set of formulas $\{\announceA{\psi} (\knows[\agentA] \atomP \implies \knows[\agentB] \knows[\agentA] \atomP) \mid \psi \in \langMlPlus\} \cup \{\neg \allppas (\knows[\agentA] \atomP \implies \knows[\agentB] \knows[\agentA] \atomP)\}$ is unsatisfiable but every finite subset is satisfiable. (The only difference with the proof in \cite{balbianietal:2008} is that instead of `$\psi \in \langMlPlus$' above it there says `$\psi \in \langMl$'.) Any finite subset is satisfiable because the epistemic depth of such a set $\{\announceA{\psi} (\knows[\agentA] \atomP \implies \knows[\agentB] \knows[\agentA] \atomP) \mid \psi \in \langMlPlus\}$ is bounded (or, alternatively, because there must be an atom $q \in P$ not occurring in such a set). We then proceed fairly similarly as in the proof of Proposition \ref{apal-expressiveness}.

\section{Model checking complexity} \label{sec.modelchecking}

We now address the model checking complexity for \logicPapal. 
%Tim adds
In this section we will assume that we are working with a finite fragment of the language 
(so $\agents$ and $\atoms$ are finite sets) and finite models $\modelAndTuple$ (where $\states$ is finite). 
%Addition over
The {\em model checking problem} for \logicPapal, for which we write $\mathbf{MC(\logicPapal)}$, is as follows: 
given a finite pointed model $M_s$ and $\phi\in\langPapal$, determine whether $M_s \models \phi$. 
The model checking problem for \logicPapal{} is PSPACE-complete. 
We adapt the proof given for the PSPACE-complete model checking complexity for $GAL$, 
by {\AA}gotnes {\em et al.}~\cite[Theorems 24 \& 25]{agotnesetal.jal:2010}. 
We note that \logicApal{} model checking is also PSPACE-complete, 
which was shown in \cite[p.~74]{agotnesetal.jal:2010} 
by an even simpler adaptation of the proof for $GAL$ than our adaptation for \logicPapal.

% The model-checking problem for \logicPapal{} is PSPACE-complete, and this can be shown by
% adapting the proof given for \logicApal{} by  Ågotnes, et al.~\cite{agotnesetal.jal:2010}.

% @@@James 12 sept 2014

%{\color{red} Revise by Tim of the remainder of this subsection (see reviewer comments) ?}

\begin{lemma}\label{maximal-refinement}
    Let $\modelAndTuple, \modelAndTuple[\prime] \in \classS$ be finite epistemic models.
    Given that a refinement from $\model$ to $\model[\prime]$ exists,  
    there is a unique, maximal refinement $\refinement \subseteq \states \times \states[\prime]$
    from $\model$ to $\model[\prime]$ 
    and it is computable in polynomial time.
\end{lemma}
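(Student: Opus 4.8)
The plan is to establish the statement in two stages: first the existence of a unique maximal refinement, and then its polynomial-time computability via a greatest-fixed-point iteration.

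For the first stage I would record that refinements are closed under arbitrary unions. Let $\{\refinement_i\}_{i \in I}$ be any family of refinements from $\model$ to $\model[\prime]$ and suppose $(\stateS,\stateS[\prime]) \in \refinement_i$ for some $i$. Then {\bf atoms-$\atomP$} holds for $(\stateS,\stateS[\prime])$ because it holds in $\refinement_i$, and for {\bf back-$\agentA$} every $\stateT[\prime] \accessibility[\prime]{\agentA} \stateS[\prime]$ has a witness $\stateT \accessibility{\agentA} \stateS$ with $(\stateT,\stateT[\prime]) \in \refinement_i \subseteq \bigcup_j \refinement_j$. Hence $\bigcup_i \refinement_i$ is again a refinement, and it is non-empty. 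Consequently, under the hypothesis that at least one refinement from $\model$ to $\model[\prime]$ exists, the union $\refinement^\ast$ of \emph{all} refinements is a non-empty refinement containing every refinement; it is therefore the unique maximal one. This already settles uniqueness and maximality; the remaining work is to compute $\refinement^\ast$ efficiently.

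For computability I would characterise $\refinement^\ast$ as a greatest fixed point and compute it by iteration. Define the operator $F$ on relations $\refinement \subseteq \states \times \states[\prime]$ by
\[
F(\refinement) = \{ (\stateS,\stateS[\prime]) \in \states \times \states[\prime] \mid \valuation(\stateS) = \valuation[\prime](\stateS[\prime]) \text{ and } \forall \agentA \in \agents \; \forall \stateT[\prime] \accessibility[\prime]{\agentA} \stateS[\prime] \; \exists \stateT \accessibility{\agentA} \stateS : (\stateT,\stateT[\prime]) \in \refinement \}.
\]
Then $F$ is monotone for $\subseteq$, and a non-empty relation $\refinement$ is a refinement precisely when $\refinement \subseteq F(\refinement)$, since $(\stateS,\stateS[\prime])$ satisfies {\bf atoms} and all {\bf back-$\agentA$} with respect to $\refinement$ exactly when $(\stateS,\stateS[\prime]) \in F(\refinement)$. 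Starting from $\refinement_0 = \states \times \states[\prime]$ and setting $\refinement_{n+1} = F(\refinement_n)$ yields a decreasing chain $\refinement_0 \supseteq \refinement_1 \supseteq \cdots$ in the finite lattice of subsets of $\states \times \states[\prime]$, which stabilises after at most $|\states| \cdot |\states[\prime]|$ steps at a relation $\refinement_\infty$ with $F(\refinement_\infty) = \refinement_\infty$.

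It remains to identify $\refinement_\infty$ with $\refinement^\ast$ and to bound the running time; this correctness argument is the main obstacle. Every refinement $\refinement$ is a post-fixed point, $\refinement \subseteq F(\refinement)$, so by monotonicity and induction $\refinement \subseteq \refinement_n$ for all $n$, hence $\refinement \subseteq \refinement_\infty$; thus $\refinement_\infty$ dominates all refinements and, being itself a fixed point, is a refinement whenever it is non-empty. Under the existence hypothesis $\refinement^\ast \neq \emptyset$ and $\refinement^\ast \subseteq \refinement_\infty$, so $\refinement_\infty$ is non-empty and equals $\refinement^\ast$. (The delicate point is precisely this non-emptiness caveat: the iteration always terminates at some fixed point, but that fixed point is a genuine refinement exactly under the stated assumption.) Finally, each application of $F$ is computed by testing, for each of the $|\states| \cdot |\states[\prime]|$ pairs, the equality $\valuation(\stateS) = \valuation[\prime](\stateS[\prime])$ and, for each agent and each $\accessibility[\prime]{\agentA}$-successor of $\stateS[\prime]$, the existence of a matching $\accessibility{\agentA}$-successor of $\stateS$; this is polynomial, and with at most $|\states| \cdot |\states[\prime]|$ iterations the whole computation runs in polynomial time.
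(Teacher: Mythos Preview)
Your proposal is correct and follows essentially the same approach as the paper: compute the maximal refinement as the greatest fixed point of a monotone operator on $\powerset{\states \times \states[\prime]}$, reached by iterating downward from $\states \times \states[\prime]$ in at most $|\states|\cdot|\states[\prime]|$ steps. The only cosmetic difference is that the paper's operator $f$ additionally intersects with its argument (so that refinements are precisely its \emph{fixed} points, rather than its post-fixed points as with your $F$); the resulting iteration and limit coincide, and your explicit union-closure argument for uniqueness is a pleasant addition the paper leaves implicit.
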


\begin{proof}
    This follows from similar reasoning used to show that the unique, maximal
    bisimulation between two models is computable in polynomial time, defining
    the refinement as a greatest fixed point of a monotone function, however
    relaxing the {\bf forth} condition appropriately.
    Specifically, we define the function 
    $f:\wp(S\times S')\longrightarrow\wp(S\times S')$ by $(s,s')\in f(\mathcal{R})$ if and only if:
    \begin{itemize}
      \item $(s,s')\in\mathcal{R}$;
      \item for all $\atomP\in\atoms$, $s\in V(\atomP)$ if and only if $s'\in V'(\atomP)$;
      \item for all $\agentA\in\agents$, for every $t'\sim'_\agentA s'$, 
        there exists $t\sim_\agentA s$ such that $(t,t')\in\mathcal{R}$;
    \end{itemize}
    It is clear that the function is monotone 
    (i.e., if $\mathcal{R}\subseteq\mathcal{R}'$, then $f(\mathcal{R})\subseteq f(\mathcal{R}')$), 
    and that any non-empty fixed point of this function is a refinement.
    Furthermore, every refinement, $\refinement$ from $\model$ to $\model[\prime]$, is a fixed point of $f$.
    Therefore, the greatest fixed point will be a unique maximal refinement. 
    The function $f$ can be computed in polynomial time, 
    and at most $|S|\cdot|S'|$ iterations will be required to reach a fixed point,
    so the maximal refinement is computable in polynomial time.
\end{proof}

\begin{theorem}\label{papal-model-checking-pspace}
    $\mathbf{MC(\logicPapal)}$ is in PSPACE.
\end{theorem}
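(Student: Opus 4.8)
The plan is to exhibit a recursive model-checking procedure and bound its space consumption, following the template of {\AA}gotnes {\em et al.}\ for $GAL$. The crux is to replace the infinitary quantification in the semantics of $\allppas$ by a finitary one. By Lemma~\ref{positive-announcements-refinement-closed}, every result $M|\psi$ of a positive announcement $\psi \in \langMlPlus$ is a restriction of $M$ that is closed under refinements; and conversely, by Lemma~\ref{refinement-closed-positive-announcements}, on a finite model every non-empty refinement-closed restriction $M|T$ is the result of some positive announcement. Combining these with the semantics of $\announceA{\cdot}$, I would first establish, for a finite model $M = (S,\sim,V)$, the reformulation
\[
  M_s \models \allppas \phi
  \quad\text{iff}\quad
  (M|T)_s \models \phi \text{ for every refinement-closed } T \subseteq S \text{ with } s \in T.
\]
Indeed $M_s \models \announceA{\psi}\phi$ holds vacuously unless $M_s \models \psi$, in which case $s \in \interpretation[M]{\psi} =: T$ and $T$ is refinement-closed; conversely each such $T$ equals $\interpretation[M]{\psi}$ for some positive $\psi$ with $M_s \models \psi$, so $M|T = M|\psi$. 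This turns the quantifier into a quantifier over the (at most $2^{|S|}$, hence finitely many) refinement-closed subsets of the current domain.

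With this in hand I would define a recursive routine $\mathrm{MC}(M|U, s, \phi)$ deciding $(M|U)_s \models \phi$, where the current model is always a restriction $M|U$ represented by its domain $U \subseteq S$, so that iterated restrictions collapse into a single subset of the original $S$. The boolean and epistemic cases are routine: an atom is a table lookup; $\neg$ and $\land$ recurse on subformulas at the same state; $K_a\chi$ recurses at every $t \in U$ with $s \sim_a t$. For $\announceA{\psi}\chi$ I would first compute $W = \{ t \in U \mid \mathrm{MC}(M|U, t, \psi) \}$, return true if $s \notin W$, and otherwise recurse as $\mathrm{MC}(M|W, s, \chi)$. For $\allppas\chi$ I would, by the reformulation above applied to $M|U$, enumerate the subsets $T \subseteq U$ with $s \in T$ one at a time, test each for closure under refinements --- computing the maximal refinement of $M|U$ into itself in polynomial time via Lemma~\ref{maximal-refinement} and checking the closure condition against it --- and recurse on $\mathrm{MC}(M|T, s, \chi)$ for each refinement-closed candidate, returning the conjunction.

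It remains to bound the space. Each recursive call is made on a proper subformula of $\phi$, so the recursion depth is at most $|\phi|$; the loops over accessible states and over candidate subsets $T$ are executed sequentially and therefore reuse space. At each level the procedure stores only the current domain $U$ and auxiliary subsets such as $W$ or the candidate $T$ (each $O(|S|)$ bits), together with loop counters and the polynomial-space maximal-refinement computation. Hence the total space is $O(|\phi| \cdot \mathrm{poly}(|M|))$, which is polynomial in the size of the input, so $\mathbf{MC(\logicPapal)} \in \mathrm{PSPACE}$.

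The main obstacle --- and the step where the argument genuinely relies on the structural theory developed earlier --- is justifying that the infinite quantification over all positive announcements may be soundly replaced by the finite quantification over refinement-closed restrictions; this rests entirely on the two-way correspondence of Lemmas~\ref{positive-announcements-refinement-closed} and~\ref{refinement-closed-positive-announcements} together with the finiteness of the models. The remaining effort is the careful bookkeeping guaranteeing that the nested restrictions and the exponentially many candidate subsets are traversed in polynomial \emph{space} rather than polynomial \emph{time}, exactly as in the $GAL$ argument.
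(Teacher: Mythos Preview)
Your argument is correct and rests on exactly the same structural insight as the paper: on a finite model, positive announcements correspond precisely to refinement-closed restrictions (Lemmas~\ref{positive-announcements-refinement-closed} and~\ref{refinement-closed-positive-announcements}), and the maximal refinement is computable in polynomial time (Lemma~\ref{maximal-refinement}), so the infinitary quantifier collapses to a finite one over subsets of the current domain.

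The only difference is in how the PSPACE upper bound is extracted. The paper presents an \emph{alternating} polynomial-time procedure in negation normal form---universal branching for $\land$, $K_a$, $[\cdot]$, $\allppas$ and existential branching for their duals---and then invokes APTIME $=$ PSPACE. You instead give a direct deterministic recursion that enumerates the candidate refinement-closed subsets one by one and bounds the stack depth by the size of $\phi$, arguing that each frame holds only $O(\mathrm{poly}(|M|))$ bits. Both routes are standard and equally valid; the alternating presentation avoids the explicit subset enumeration and the space bookkeeping, while your version is self-contained and does not rely on the Chandra--Kozen--Stockmeyer theorem.
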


\begin{proof} 
    We adapt an alternating polynomial time (APTIME) model-checking algorithm used for $GAL$ \cite[Algorithm 1, p.\ 74]{agotnesetal.jal:2010}.
    The main variation required is that we must be able to test whether a submodel can be defined by a positive announcement.
    From Proposition~\ref{positive-announcements-refinement-closed} and
    Proposition~\ref{refinement-closed-positive-announcements} it follows that, in
    order for a restriction of a finite model to be definable as the result of
    a positive announcement, it must be closed under refinements. 
    From Lemma~\ref{maximal-refinement} we can check that this condition is
    satisfied by first computing in polynomial time the maximal refinement from
    $M$ to itself, $\mathfrak{R}^M$, 
    and then using this refinement to select only refinement preserving restrictions of the model.
  
We now present the algorithm, $sat$, for model-checking in \logicPapal. 
  The algorithm $sat$ takes as input a finite model $M = (S,\sim, V)$, some $s\in S$ 
  and a formula $\varphi\in\langPapal$ that we require to be in what is known as {\em negation normal form}. 
  This means that the formula conforms to the syntax 
  $\varphi::= \atomP\ |\ \neg\atomP\ |\ \varphi\et\varphi\ |\ \varphi\vel\varphi\ 
  |\ \knows_\agentA\varphi\ |\ \suspects_\agentA \varphi\ |\ \announceA{\varphi}\varphi\ |\ \announceE{\varphi}\varphi\ 
  |\ \allppas\varphi\ |\ \someppas\varphi$. 
  It is clear that all formulas are semantically equivalent to a formula in negation normal form. One can easily compute it and its size is linear in the size of the given formula.
  %Deleted by Tim: it adds very little.
  %Using such formulas as input results in a simplification of the algorithm with respect to \cite{agotnesetal.jal:2010}.

  A run of the algorithm halts with either {\em accept} or {\em reject}.
  Each case of the algorithm is either existential or universal, 
  where for an existential case to be accepting, one choice must lead to an accepting case, 
  and for a universal case to be accepting, every choice must lead to an accepting case.
  The algorithm is presented as Algorithm~\ref{modelCheckingAlg}.

\begin{algorithm}[t]
\caption{$sat(M,s,\phi)$}
\label{modelCheckingAlg}
\begin{algorithmic}[]

  \STATE \textbf{from} $M$ \textbf{compute} $\mathfrak{R}^M$;
  \STATE \textbf{case} $\phi$ \textbf{of}
\qquad\STATE{($\cdot$)} $p$: \textbf{if} {$s\in V(p)$} \textbf{then} accept \textbf{else}  reject;
\qquad \STATE{$(\cdot)$} $\neg p$: \textbf{if} $s\in V(p)$ \textbf{then} reject \textbf{else} accept;
\qquad\STATE{$(\forall)$} $\phi_{1}\land\phi_{2}$: choose $\phi'\in\{\phi_{1},\phi_{2}\}$; $sat(M,s,\phi')$;
\qquad\STATE{$(\exists)$} $\phi_{1}\vee\phi_{2}$: choose $\phi'\in\{\phi_{1},\phi_{2}\}$; $sat(M,s,\phi')$;
\qquad\STATE{$(\forall)$} $\knows_a\phi'$: choose $t\sim_a s$; $sat(M,t,\phi')$;
\qquad\STATE{$(\exists)$} $\suspects_a\phi'$: choose $t\sim_a s$; $sat(M,t,\phi')$;
\qquad\STATE{$(\forall)$} $\announceA{\phi_{1}}\phi_{2}$: choose a restriction $M'=(S',\sim',V')$ of $M$;\\ 
  \qquad\qquad\qquad\textbf{if for some} $s'\in S'$, \textbf{not} $sat(M,s',\phi_1)$ \textbf{then} accept\\ 
  \qquad\qquad\qquad\textbf{else if for some} $s'\in S\setminus S'$, $sat(M,s',\phi_1)$ \textbf{then} accept\\ 
  \qquad\qquad\qquad\textbf{else if} $s\notin S'$ \textbf{then} accept \textbf{else} $sat(M',s,\phi_{2})$;
\qquad\STATE{$(\exists)$} $\announceE{\phi_{1}}\phi_{2}$: choose a restriction $M'=(S',\sim',V')$ of $M$;\\ 
  \qquad\qquad\qquad\textbf{if for some} $s'\in S'$, \textbf{not} $sat(M,s',\phi_1)$ \textbf{then} reject\\ 
  \qquad\qquad\qquad\textbf{else if for some} $s'\in S\setminus S'$, $sat(M,s',\phi_1)$ \textbf{then} reject\\ 
  \qquad\qquad\qquad\textbf{else if} $s\notin S'$ \textbf{then} reject \textbf{else} $sat(M',s,\phi_{2})$;
\qquad\STATE{$(\forall)$} $\allppas\phi$: Choose any  
restriction $M'=(S',\sim',V')$ of $M$
  such that\\
  \qquad\qquad\qquad for all $t\in S'$, for all $t'$ where $(t,t')\in\mathfrak{R}^M$, we have $t'\in S'$\\
  \qquad\qquad\qquad\textbf{if} $s\in S'$ \textbf{then} $sat(M',s,\phi)$ \textbf{else} accept;
\qquad\STATE{$(\exists)$} $\someppas\phi$: Choose any
restriction $M'=(S',\sim',V')$ of $M$
  such that\\
  \qquad\qquad\qquad for all $t\in S'$, for all $t'$ where $(t,t')\in\mathfrak{R}^M$, we have $t'\in S'$\\
  \qquad\qquad\qquad\textbf{if} $s\in S'$ \textbf{then} $sat(M',s,\phi)$ \textbf{else} reject;
  \STATE \textbf{end case}
\end{algorithmic}
\end{algorithm}

The proof of correctness follows the inductive argument presented in \cite{agotnesetal.jal:2010}:
we can show that $sat(M,s,\phi)$ accepts if and only if $M_s\models\phi$ by induction over the complexity of $\phi$.
The correctness of the $(\forall)\allppas\phi$ and $(\exists)\someppas\phi$ cases follows directly from 
Proposition~\ref{positive-announcements-refinement-closed},
Proposition~\ref{refinement-closed-positive-announcements} and Lemma~\ref{maximal-refinement}, as mentioned above. 
In particular, the $(\forall)\allppas\phi$ and $(\exists)\someppas\phi$ cases 
in Algorithm~\ref{modelCheckingAlg} can be shown to match the semantic interpretation of 
the $\allppas$ and $\someppas$ operators respectively.
Focusing on the $\someppas$ case, if there is a positive announcement $\alpha$ such that $(\restrict{M}{\alpha})_s\models\phi$, 
then there is some restriction $M' = (S',\sim', V')$ of $M$ such that $s\in S'$, 
$M'_s\models\phi$, and for all $t,t' \in S$ such that $M_t \simulates M_{t'}$, if $t \in S'$ then $t' \in S'$. 
Applying the inductive hypothesis, choosing such a restriction will lead to an accepting run.
Conversely, if there is some restriction satisfying these properties, 
from Lemma~\ref{refinement-closed-positive-announcements}, 
there must be some corresponding positive announcement $\alpha$ that realises this refinement. 
Again, applying the inductive hypothesis we have $(\restrict{M}{\alpha})_s\models\phi$ 
and hence $M_s\models\someppas\alpha$. The case for $\allppas$ is treated in a similar manner. 

Since $sat$ can be implemented in polynomial time, {\bf MC(\logicPapal)} is in APTIME, 
which is equivalent to PSPACE \cite{chandraetal:1981}.
\end{proof}

    %fix
  \usetikzlibrary{calc}
  \begin{figure}[ht]
  \center
  \begin{tikzpicture}[>=stealth',shorten >=1pt,auto,node distance=7em,thick]
    \node (s) {$\begin{array}{c}(s_1,0)\\x^-_1\end{array}$};
    \node (sr) [right of=s] {$\begin{array}{c}(s_2,0)\\x^-_2\end{array}$};
    \node (srr) [right of=sr] {$\dots$};
    \node (srrr) [right of=srr] {$\begin{array}{c}(s_k,0)\\x^-_k\end{array}$};
    \node (sl) [below of=s] {$\begin{array}{c}(s_1,1)\\x^+_1\end{array}$};
    \node (slr) [right of=sl] {$\begin{array}{c}(s_2,1)\\x^+_2\end{array}$};
    \node (slrr) [right of=slr] {$\dots$};
    \node (slrrr) [right of=slrr] {$\begin{array}{c}(s_k,1)\\x^+_k\end{array}$};
    \node (tmp) at ($(s)!0.5!(sl)$) {}; 
    \node (o) [left of=tmp] {$\begin{array}{c}s\\x_0\end{array}$}; % {$s$}; % 
      \node (m) [left of=o] {$M^\Psi$:};
      \draw (s) -- (o);
      \draw (sl) -- (o);
      \draw (s) -- (sr);
      \draw (sr) -- (srr);
      \draw (srr) -- (srrr);
      \draw (sl) -- (slr);
      \draw (slr) -- (slrr);
      \draw (slrr) -- (slrrr);
      \draw (s) -- (sl);
      \draw (sr) -- (slr);
      \draw (srrr) -- (slrrr);
       \end{tikzpicture}
    \caption{The model $M^\Psi$ used to encode the quantified Boolean formula $\Psi$. The agent's relation is universal, so it is the transitive closure of the depicted relation.}\label{QBF-encode}
  \end{figure}

\begin{theorem}\label{papal-model-checking-pspace-hard}
    $\mathbf{MC(\logicPapal)}$ is PSPACE-hard.
\end{theorem}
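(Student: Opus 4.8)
The plan is to establish PSPACE-hardness by a reduction from the satisfiability of quantified boolean formulas (QBF), which is the canonical PSPACE-complete problem. The figure $M^\Psi$ strongly suggests encoding a QBF $\Psi = Q_1 x_1 Q_2 x_2 \cdots Q_k x_k\, \chi$ (where each $Q_i \in \{\forall, \exists\}$ and $\chi$ is a quantifier-free boolean matrix) into a single pointed model together with a \logicPapal{} formula whose truth tracks the alternating quantification of $\Psi$.

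First I would fix the model. Following Figure~\ref{QBF-encode}, I would take a single agent whose accessibility relation is universal (the full equivalence relation on the domain), a distinguished state $s$ carrying a fresh atom $x_0$, and for each variable $x_i$ a pair of states $(s_i,0)$ and $(s_i,1)$ carrying distinct ``selector'' atoms $x_i^-$ and $x_i^+$ respectively. The intended correspondence is that a positive announcement which removes exactly one of each pair $\{(s_i,0),(s_i,1)\}$ while keeping $s$ corresponds to choosing a truth value for $x_i$ (keep $(s_i,1)$ means $x_i$ true, keep $(s_i,0)$ means $x_i$ false). The key point is that since the relation is universal, a single agent knows a disjunction of the selector literals, so positive formulas of the form $K_a(\bigvee \ell)$ can eliminate whichever selector states we wish; this is exactly where Lemma~\ref{refinement-closed-positive-announcements} and the refinement-closure characterisation of positive announcements do the work, because every subset of this (bisimulation-minimal) model that contains $s$ and is upward closed under the trivial refinement can be realised.

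Next I would translate the quantifier prefix into an alternating block of $\someppas$ and $\allppas$ operators: an existential quantifier $\exists x_i$ becomes $\someppas$ and a universal $\forall x_i$ becomes $\allppas$, with each quantifier step guarded by formulas that force the announcement to commit to a legal truth assignment for $x_i$ (namely to delete exactly one of the two selector states and to leave the previously-decided variables untouched). I would then express the matrix $\chi$ by a boolean combination of the knowledge operators $K_a x_i^+$ / $\suspects_a x_i^+$ evaluated at $s$, so that after all $k$ announcements the surviving model determines the truth value of each $x_i$ and $M_s$ satisfies the translated matrix iff the corresponding assignment satisfies $\chi$. The translated formula $\mathit{tr}(\Psi)$ would be polynomial in the size of $\Psi$, and I would prove by induction on the quantifier prefix that $M^\Psi_s \models \mathit{tr}(\Psi)$ if and only if $\Psi$ is true.

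The main obstacle I anticipate is the guarding: the \logicPapal{} quantifiers range over \emph{all} positive announcements, not merely over the ``choose one selector'' announcements I intend, so I must write the matching subformula so that illegitimate announcements (those that delete $s$, delete both or neither member of a pair, or disturb already-decided variables) are rendered harmless --- for a $\someppas$ they must fail the continuation, and for a $\allppas$ they must vacuously satisfy it. Getting these guards exactly right, and simultaneously verifying that the genuinely intended announcements remain \emph{definable} as positive formulas (via the refinement-closure condition, so that the algorithmic side of Theorem~\ref{papal-model-checking-pspace} and the semantic side agree), is the delicate part; the remaining induction over the quantifier block is then routine.
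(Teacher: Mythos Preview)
Your proposal is correct and follows essentially the same approach as the paper: the paper also reduces from QBF-SAT using exactly the single-agent universal model $M^\Psi$ of Figure~\ref{QBF-encode}, translates $\exists$ to $\someppas$ and $\forall$ to $\allppas$, and handles your anticipated guarding obstacle with explicit formulas $D_i = X_i \leftrightarrow \lnot\overline{X}_i$ (variable $x_i$ is determined) and $U_i = X_i \land \overline{X}_i$ (variable $x_i$ is still undetermined), where $X_i = \suspects x_i^+$ and $\overline{X}_i = \suspects x_i^-$. Your observation that every subset is refinement-closed (because all states have pairwise distinct valuations, so the maximal refinement is the identity) is exactly why the guards suffice and why every intended restriction is realisable by a positive announcement.
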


\begin{proof}
    This follows from similar reasoning used to show that
    $\mathbf{MC(GAL)}$ is PSPACE-hard, \cite{agotnesetal.jal:2010}. 
    The basic approach is to show that instances of the
    QBF-SAT problem can be solved through model-checking a \langPapal{} formula
    on an appropriately constructed model. 
    A quantified Boolean formula may be given as 
    $\Psi = Q_1x_1\hdots Q_kx_k\phi(x_1,\hdots,x_k)$, 
    where $Q_i\in \{\forall,\exists\}$, 
    $x_1,\hdots,x_k$ are propositional variables, and
    $\phi(x_1,\hdots,x_k)$ is a Boolean formula. 
    (Following custom in QBF-SAT, the variables are not named $p_1,\dots,p_n$ but $x_1,\dots,x_n$ instead.) 
    The notation $\phi(x_1,\hdots,x_k)$ means that each variable $x_1,\dots,x_k$ binds to all its occurrences, 
    possibly none, in $\phi$. 
    For $1 \leq n \leq k$, we will use the abbreviation 
    $\Psi^n = Q_nx_n\hdots Q_kx_k\phi(x_1,\hdots,x_k)$ 
    to represent the fragment of $\Psi$ where $x_1,\dots,x_{n-1}$ (if $n=1$ none at all) are unquantified.

    The satisfiability problem for quantified Boolean problems (QBF-SAT) 
    is well-known to be the canonical problem for PSPACE-completeness.
    Given any quantified Boolean formula $\Psi$, we can construct a 
    model, $M_s^\Psi$, and a \logicPapal{} formula, $\psi$, 
    such that $M_s^\Psi\models\psi$ if and only if $\Psi$ is satisfiable.
    
    The model $M^\Psi = (S,\sim, V)$ is specified with respect to 
    a set of atoms $\{x^+_i,x_i^- |\ 1\leq i\leq k\}$, an additional auxiliary variable $x_0$, and a single agent.
    The model represents each Boolean variable $x_i$ for $1\leq i\leq k$ by a pair of states $(s_i,0)$ and $(s_i,1)$, so we let
    $S = \{s_1,\hdots,s_k\}\times\{0,1\}\cup\{s\}$, 
    including the $s$ state as designated state from which to evaluate the formula.
    The single agent with a universal relation is unable to distinguish any state, so ${\sim} = S\times S$.
    Finally we have $V(x^+_i) = \{(s_i,1)\}$ and $V(x^-_i) = \{(s_i,0)\}$, and $V(x_0) = \{s\}$. The model is depicted in Figure~\ref{QBF-encode}.

    We are then able to encode the satisfiability of $\Psi = Q_1x_1\dots Q_kx_k\phi(x_1,\dots,x_k)$ inductively. 
 %   We assume that $\phi(x_1,\dots,x_k)$ is in disjunctive normal form.
    For each $i$ from 1 to $k$ we define the formulas $X_i = L_jx^+_i$, $\overline{X}_i = L_jx^-_i$, $U_i = X_i\land\overline{X}_i$ and $D_i =X_i\leftrightarrow\lnot\overline{X}_i$, for, respectively: $X_i$ is true, $X_i$ is false, $X_i$ is undetermined and $X_i$ is determined. Additionally, $U_0$ represents $L_j x_0$.
    As the base case of the induction we have $f(\phi(x_1,\dots,x_k)) = U_0\land \bigwedge_{i=1}^k D_i\land \phi(X_1,\dots,X_k)$,
    that may be satisfied by some restriction of $M^\Psi$, as in Figure~\ref{QBF-encode2}.
  
  \begin{figure}[ht]
  \center
  \begin{tikzpicture}[>=stealth',shorten >=1pt,auto,node distance=7em,thick]
    \node (s) {$\begin{array}{c}(s_1,0)\\x^-_1\end{array}$};
    \node (sr) [right of=s] {};
    \node (srr) [right of=sr] {};
    \node (srrr) [right of=srr] {};
    \node (sl) [below of=s] {};
    \node (slr) [right of=sl] {$\begin{array}{c}(s_2,1)\\x^+_2\end{array}$};
    \node (slrr) [right of=slr] {$\dots$};
    \node (slrrr) [right of=slrr] {$\begin{array}{c}(s_k,1)\\x^+_k\end{array}$};
      \node (tmp) at ($(s)!0.5!(sl)$) {}; 
    \node (o) [left of=tmp] {$\begin{array}{c}s\\x_0\end{array}$}; % {$s$};
      \node (m) [left of=o] {\color{white} $M^\Psi$:};
      \draw (s) -- (o);
 %     \draw (slr) -- (o);
      \draw (s) -- (slr);
      \draw (slr) -- (slrrr);
%      \draw (slrr) -- (slrrr);
%      \draw (s) -- (slrrr);
       \end{tikzpicture}
    \caption{A restriction of  $M^\Psi$ satisfying $f(\lnot x_1\land x_2\land x_k)$}.\label{QBF-encode2}
  \end{figure}

    We have two inductive cases, for $Q_n = \forall$ and $Q_n = \exists$. 
    If $Q_n = \forall$, then 
    $$f(\Psi^n) = \allppas\left(\left(U_0\land\bigwedge_{i=1}^n D_i\land\bigwedge_{i=n+1}^kU_i\right)\rightarrow f(\Psi^{n+1})\right).$$ 
    If $Q_n = \exists$, then 
    $$f(\Psi^n) = \someppas\left(U_0\land\bigwedge_{i=1}^n D_i\land\bigwedge_{i=n+1}^kU_i\land f(\Psi^{n+1})\right).$$ 
    The Boolean quantifier $Q_i$ is simulated using a $\someppas$ or $\allppas$ operator, 
    appropriately guarded so that it removes precisely one of $(s_i,0)$ and $(s_i,1)$ from the model, 
    and does not affect any of the other states. 
    This is achieved by the subformulas: $\bigwedge_{i=1}^n D_i$, 
    which requires either $(s_i,0)$ or $(s_i,1)$ to remain in the restriction, for $i=1\dots n$;
    and $\bigwedge_{i=n+1}^kU_i$ which requires both $(s_i,0)$ and $(s_i,1)$ to remain in the restriction for $i=n+1\dots k$.
    Since the model is finite, and each state has a unique evaluation, this can always be achieved by a positive public announcement.
    After all the quantifiers have been applied in turn, we able to interpret the Boolean formula $\phi$, by checking which states remain.
    The encoding of this formula and the constructed model are polynomial in the size of $\Psi$,
    so model-checking \logicPapal{}\ is PSPACE-hard. 
    A more extensive discussion of the construction and proof can be found in \cite{agotnesetal.jal:2010}.
\end{proof}

%{\color{red} End of Tim's revisions}

\section{Expressivity} \label{sec.expressivity}

\subsection{The relative expressivity of APAL$^+$ and PAL}

In this section we establish various expressivity results, mainly that (for more than one agent) \logicPapal{} is more expressive than \logicS{} (or \logicPal), which is obvious, and that \logicPapal{} and \logicApal{} are incomparable, which is not obvious. 

\begin{proposition}
    Arbitrary positive announcement logic is as expressive as public announcement logic in \classS{} for a single agent.
\end{proposition}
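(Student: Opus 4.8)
The plan is to prove both inclusions. One direction is immediate: since $\langPal \subseteq \langPapal$ and the two logics share every clause of the satisfaction relation, each $\langPal$ formula is literally a $\langPapal$ formula, so $\logicPapal$ is at least as expressive as $\logicPal$. The content lies in the converse, namely that every $\langPapal$ formula is equivalent on $\classS$ (single agent) to a $\langPal$ formula. I would prove this by induction on the structure of the formula. The cases for atoms, negation, conjunction, $K_a$ and public announcement are immediate, because these constructs already belong to $\langPal$ and $\langPal$ is closed under them; so the only genuinely new case is $\allppas\phi$, and by duality it suffices to treat $\someppas\phi$. By the induction hypothesis $\phi$ is equivalent to a $\langPal$ formula, and since $\logicPal$ is as expressive as $\logicS$ (Lemma~\ref{prop.ex1}) I may assume $\phi \in \langMl$. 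Set $Q = v(\phi)$, a finite set of atoms.

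The key structural observation is that, for a single agent, an $\classS$ model is a disjoint union of clusters (equivalence classes), and the truth of any $\langMl$ formula over $Q$ at a state $s$ depends only on $V(s)\cap Q$ together with the set $W_s = \{\, V(t)\cap Q \mid t \sim_a s \,\} \subseteq \powerset{Q}$ of $Q$-valuations occurring in the cluster of $s$: any two single-agent clusters with the same actual $Q$-valuation and the same set of occurring $Q$-valuations are $Q$-bisimilar (pair states of equal $Q$-valuation), hence $\equiv^Q_{\mathit{el}}$ by Lemma~\ref{bisimulation-preserves3}. For each $W \subseteq \powerset{Q}$ let $\beta_W = \bigvee_{\mu \in W}\bigl(\bigwedge_{p \in \mu} p \land \bigwedge_{p \in Q \setminus \mu}\neg p\bigr)$ be the disjunction of the characteristic valuation formulas of the members of $W$. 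Crucially $\beta_W \in \langMlPlus$, since a Boolean combination of literals built only from $\land$ and $\lor$ is a positive formula (Definition~\ref{positive-formula}); announcing $\beta_W$ deletes from every cluster exactly the states whose $Q$-valuation lies outside $W$. I would then establish the reduction
\[
\classS \models \bigl(\someppas\phi \leftrightarrow \textstyle\bigvee_{W \subseteq \powerset{Q}} \dia{\beta_W}\phi\bigr),
\]
a finite disjunction of $\langPal$ formulas, which completes the induction; the case $\allppas\phi$ follows by negating the analogous identity for $\someppas\neg\phi$.

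The $(\Leftarrow)$ direction of this equivalence is trivial, as each $\beta_W \in \langMlPlus$, so every $\dia{\beta_W}\phi$ is a legitimate positive witness for $\someppas\phi$. I expect the $(\Rightarrow)$ direction to be the main obstacle: the positive formula $\psi$ witnessing $\someppas\phi$ may use atoms outside $Q$ and may separate states of equal $Q$-valuation, so a priori it is more powerful than any $Q$-Boolean announcement. I would dispose of this as follows. Suppose $M_s \models \someppas\phi$ via $\psi \in \langMlPlus$, so $s$ survives in $M|\psi$; let $W = \{\, V(t)\cap Q \mid t \sim_a s \text{ in } M|\psi \,\}$. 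Then $V(s)\cap Q \in W$ and $W$ is contained in the $Q$-valuation set of the original cluster, so $\beta_W$ is true at $s$ and the cluster of $s$ in $M|\beta_W$ has precisely the $Q$-valuation set $W$. By the structural observation the clusters of $s$ in $M|\psi$ and in $M|\beta_W$ are $Q$-bisimilar, hence $\equiv^Q_{\mathit{el}}$; since $\phi \in \langMl$ with $v(\phi) \subseteq Q$, from $(M|\psi)_s \models \phi$ we obtain $(M|\beta_W)_s \models \phi$, i.e.\ $M_s \models \dia{\beta_W}\phi$. This is exactly where the single-agent hypothesis is essential: it guarantees that extra atoms buy nothing at the level of the relevant $Q$-bisimulation type, and it is precisely the step that would fail for more than one agent.
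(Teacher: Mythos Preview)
Your proof is correct. The key observation---that in single-agent $\classS$ the truth of any $\langMl$ formula over a finite set $Q$ of atoms in a cluster depends only on the set of $Q$-valuations realised in that cluster, so that an arbitrary positive announcement can be replaced by the Boolean announcement $\beta_W$---is sound, and your verification that the two resulting clusters are $Q$-bisimilar is accurate. The reduction $\someppas\phi \leftrightarrow \bigvee_{W \subseteq \powerset{Q}} \dia{\beta_W}\phi$ is valid and yields the desired $\langPal$ formula.

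The paper takes a different, much shorter route: it simply observes that the proof of the corresponding result for $\logicApal$ in~\cite[Prop.~3.11 and~3.12]{balbianietal:2008} carries over verbatim, because nowhere in that proof does it matter whether the witnessing announcement is epistemic or positive epistemic. Your argument is effectively an explicit, self-contained unfolding of what that cited proof does: it, too, reduces the quantifier to announcements of Booleans over $v(\phi)$, which are simultaneously epistemic and positive. So the mathematical content is the same; what you gain is independence from the external reference and a clear display of exactly where the single-agent hypothesis is used (namely, in the $Q$-bisimulation between the two clusters), while the paper gains brevity by leaning on the already-established $\logicApal$ case.
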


\begin{proof} We recall that single-agent \logicApal\ is as expressive as \logicS\ \cite[Prop.~3.11 and 3.12]{balbianietal:2008}. The same proof applies to single-agent \logicPapal: it plays no role anywhere in the proof in \cite{balbianietal:2008} whether the announcement witnessing an \logicApal\ quantifier is an epistemic formula or a positive epistemic formula. \end{proof}

\begin{proposition} \label{prop.exp1}
  Arbitrary positive announcement logic is (strictly) more expressive than public announcement logic in \classS{} for more than one agent.
\end{proposition}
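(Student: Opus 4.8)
The plan is to establish the two halves of ``strictly more expressive'' separately. That \logicPapal{} is at least as expressive as \logicPal{} is immediate: syntactically $\langPal \subseteq \langPapal$ (a \logicPal{} formula simply omits the $\allppas$ operator) and the two logics share the semantic clauses for every common connective, so each \logicPal{} formula is equivalent to itself read as a \logicPapal{} formula. All the work lies in the converse, namely exhibiting a \logicPapal{} formula with no \logicPal{}-equivalent. I would reuse the witnessing formula of Proposition~\ref{apal-expressiveness}, namely $\someppas (K_a p \et \neg K_b K_a p)$, together with the two models $\pointedModel{\stateS}$ and $\pointedModel[\prime]{\stateS[\prime]}$ of Figure~\ref{expressivity-s5i-1}. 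The crucial observation is that the \emph{first} (Balbiani-style) proof of Proposition~\ref{apal-expressiveness} uses a fresh propositional variable $q$ as its witnessing announcement, and a single atom is a positive formula; hence that proof adapts, whereas Kooi's alternative proof does not (see the last paragraph).

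Concretely, I would argue by contradiction. Suppose \logicPal{} were at least as expressive as \logicPapal{}; then there is some $\phi \in \langPal$ equivalent to $\someppas (K_a p \et \neg K_b K_a p)$. Since $v(\phi)$ is finite while $\atoms$ is countably infinite, choose an atom $q \notin v(\phi)$. The models $M$ and $M'$ of Figure~\ref{expressivity-s5i-1} use only the atoms $p$ and $q$ and differ solely in the valuation of $q$, so the relation pairing the two $p$-states of $M'$ with the $p$-state of $M$ and the two $\neg p$-states of $M'$ with the $\neg p$-state of $M$ is a $(\atoms \setminus \{q\})$-bisimulation, i.e.\ $\pointedModel{\stateS} \simeq^{\atoms \setminus \{q\}} \pointedModel[\prime]{\stateS[\prime]}$. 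By the \langPal-generalisation of Lemma~\ref{bisimulation-preserves3} noted immediately after it, $\pointedModel{\stateS}$ and $\pointedModel[\prime]{\stateS[\prime]}$ satisfy the same \logicPal{} formulas over $\atoms \setminus \{q\}$; as $v(\phi) \subseteq \atoms \setminus \{q\}$, this gives $\pointedModel{\stateS} \entails \phi$ iff $\pointedModel[\prime]{\stateS[\prime]} \entails \phi$.

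It then remains to check that the two models disagree on the target formula. On the one hand $\pointedModel[\prime]{\stateS[\prime]} \entails \announceE{q}(K_a p \et \neg K_b K_a p)$, and since $q \in \langMlPlus$ this yields $\pointedModel[\prime]{\stateS[\prime]} \entails \someppas (K_a p \et \neg K_b K_a p)$. On the other hand $\pointedModel{\stateS} \nentails \someppas (K_a p \et \neg K_b K_a p)$: making $K_a p$ true at $\stateS$ forces removal of the unique $\neg p$-state, leaving the one-point model in which $K_b K_a p$ holds and hence $\neg K_b K_a p$ fails, so no announcement at all — in particular no positive one — witnesses the diamond. Together with $\phi \equiv \someppas (K_a p \et \neg K_b K_a p)$ this makes ``$\pointedModel{\stateS} \entails \phi$ iff $\pointedModel[\prime]{\stateS[\prime]} \entails \phi$'' both true and false, the desired contradiction.

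The main obstacle, and the only genuinely new point relative to Proposition~\ref{apal-expressiveness}, is ensuring that the \emph{witness} is positive, since $\someppas$ ranges only over $\langMlPlus$. The fresh-variable argument survives precisely because an atom is positive and because positive announcements form a subset of epistemic announcements, so the negative claim $\pointedModel{\stateS} \nentails \someppas(\cdots)$ is inherited directly from the \logicApal{} computation. By contrast, Kooi's chain-based alternative proof of Proposition~\ref{apal-expressiveness} does not transfer: its witnessing announcement $\psi$ is a boolean combination of conjuncts $L_{ba}^n K_a p$, which involve $L$-operators (negated knowledge) and are therefore not positive.
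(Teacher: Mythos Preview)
Your proof is correct and follows essentially the same approach as the paper: reuse the first proof of Proposition~\ref{apal-expressiveness}, observing that the witnessing announcement $q$ is a positive formula, so the argument goes through with $\someppas$ in place of $\somepas$. The paper's own proof is a two-line reference to exactly this observation; you have simply spelled out the details (the language inclusion for the easy direction, the $(\atoms\setminus\{q\})$-bisimulation, and why the negative claim at $\pointedModel{\stateS}$ transfers), and your closing remark that Kooi's alternative proof does \emph{not} adapt is a correct and useful addition.
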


\begin{proof}
We refer to the proof of Proposition~\ref{apal-expressiveness}. Observe that the announcement $\atomQ$ used in that proof is a positive formula. Therefore, this also shows that no epistemic formula is equivalent to the \langPapal{} formula $\someppas (K_a p \et \neg K_b K_a p)$. 
\end{proof}

\subsection{APAL$^+$ is not at least as expressive as APAL}

We now consider the relative expressivity of \logicApal{} and \logicPapal{}. In this subsection we show in Theorem \ref{expressivity-s5}, further below, that
    \logicPapal{} is not at least as expressive as \logicApal{} for multiple agents, by the standard method of providing two pointed epistemic models and a formula (in \langApal) such that the models can be distinguished by that formula but cannot be distinguished by any formula in the other language (in \langPapal). The theorem and its proof are preceded by the definition of the respective models and by various lemmas to be used in that proof. The next subsection is devoted to the other direction of expressivity, namely that \logicApal{} is not at least as expressive as \logicPapal{} for multiple agents. From these two results we can then conclude that \logicPapal{} and \logicApal{} are incomparable in expressivity.

  \begin{figure}
        \centering
\scalebox{1}{
           \begin{tikzpicture}[>=stealth',shorten >=1pt,auto,node distance=7em,thick]

              \node (0) [label=above:{$1$}] {$\{\}$};
              \node (1) [right of=0,label=above:{$2$}] {$\{\}$};
              \node (2) [right of=1,label=above:{$3$}] {$\{\}$};
              \node (3) [right of=2] {$\cdots$};
 
              \node (0p) [node distance=5em,below of=0,label=below:{$1'$}] {$\{\}$};
              \node (1p) [right of=0p,label=below:{$2'$}] {$\{\}$};
              \node (2p) [right of=1p,label=below:{$3'$}] {$\{\}$};
              \node (3p) [right of=2p] {$\cdots$};

              \node[label=above:{$0$}] (s) [left of=0] {\underline{$\{\atomP\}$}};
              \node[label=below:{$0'$}] (t) [left of=0p] {$\{\atomP\}$};
             \node (m) [node distance=4em,left of=s] {$M^\omega_0:$};

             \path[every node/.style={font=\sffamily\small}]
                (0) edge node {$\agentB$} (1)
                (1) edge node {$\agentA$} (2)
                (2) edge node {$\agentB$} (3);
             \path[every node/.style={font=\sffamily\small}]
                 (0p) edge node {$\agentB$} (1p)
                (1p) edge node {$\agentA$} (2p)
                (2p) edge node {$\agentB$} (3p);
\draw (s) edge node {$\agentB$} (t);
\draw (s) edge node {$\agentA$} (0);
\draw (t) edge node {$\agentA$} (0p);
            \end{tikzpicture}
}
\bigskip
\bigskip

\scalebox{1}{
           \begin{tikzpicture}[>=stealth',shorten >=1pt,auto,node distance=7em,thick]

              \node (0) [label=above:{$1$}] {$\{\}$};
              \node (1) [right of=0,label=above:{$2$}] {$\{\}$};
              \node (2) [right of=1,label=above:{$3$}] {$\{\}$};
              \node (3) [right of=2] {$\cdots$};
 
              \node (0p) [node distance=5em,below of=0,label=below:{$1'$}] {$\{\}$};
              \node (1p) [right of=0p,label=below:{$2'$}] {$\{\}$};
              \node (2p) [right of=1p] {$\cdots$};
              \node (3p) [right of=2p,label=below:{$m'$}] {$\{\}$};

              \node[label=above:{$0$}] (s) [left of=0] {\underline{$\{\atomP\}$}};
              \node[label=below:{$0'$}] (t) [left of=0p] {$\{\atomP\}$};
              \node (m) [node distance=4em,left of=s] {$M^m_0:$};

             \path[every node/.style={font=\sffamily\small}]
                (0) edge node {$\agentB$} (1)
                (1) edge node {$\agentA$} (2)
                (2) edge node {$\agentB$} (3);
             \path[every node/.style={font=\sffamily\small}]
                 (0p) edge node {$\agentB$} (1p)
                (1p) edge node {$\agentA$} (2p)
                (2p) edge node {$\agentB$} (3p);
\draw (s) edge node {$\agentB$} (t);
\draw (s) edge node {$\agentA$} (0);
\draw (t) edge node {$\agentA$} (0p);
            \end{tikzpicture}
}

\caption{Models $M^\omega_0$ and $M^m_0$ used in the proof of Theorem~\ref{expressivity-s5}.}\label{expressivity-s5i-1mm} 
\end{figure}

Consider the models $M^\omega_0$ and $M^m_0$ in Figure \ref{expressivity-s5i-1mm}. We will use these two epistemic models in the expressivity result of this section. 
They are both $a$-$b$-chains. We first define the base model, $M^\omega$. 
Formally $M^\omega = (S,\sim,V)$ where $S = \Naturals \union \Naturals'$ (where $\Naturals' = \{ i' \mid i \in \Naturals \}$), 
$\sim_a$ is the symmetric and reflexive closure of $\{ (2i,2i+1), (2i',(2i+1)') \mid i \in \Naturals\}$, 
$\sim_b$ is the symmetric and reflexive closure of $\{ (2i+1,2i+2), ((2i+1)',(2i+2)') \mid i \in \Naturals\} \union \{(0,0')\}$, and $V(p) = \{0,0'\}$.

We define an ordering $\preceq$ over $S\cup\{\omega,\omega'\}$ as follows. This use of the symbol $\preceq$ is different from that for models in the refinement relation and is therefore unambiguous. 
$$x\preceq y\textrm{ iff }\left\{
  \begin{array}{l}
    x\in\Naturals\textrm{ and } y=\omega\\
    x,y\in\Naturals\textrm{ and }x\leq y,\\
    x,y\in\Naturals', x=w',y=z',\textrm{ and }w\geq z,\\
    x=\omega'\textrm{ and } y\in\Naturals', \textrm{ or }\\
    x\in\Naturals'\cup\{\omega'\}\textrm{ and } y\in\Naturals\cup\{\omega\}
  \end{array}
  \right.
$$  
For convenience, in this proof we will denote the set $S_x^y = \{z\in S \mid x\preceq z\preceq y\}$ where $x,y\in\Naturals\cup\{\omega,\omega'\}$.
For $m \in\Naturals$, $M^m_0$ will be used as an abbreviation for the model $(M|S_{m'}^\omega)_0$, as depicted in Figure~\ref{expressivity-s5i-1mm}. 
We will show that no formula of \logicPapal{} can distinguish the set of models $\{M^m_0\mid m\in\Naturals\}$ from the set of models $\{M^m_0\mid m\in\Naturals\} \cup \{ M^\omega_0\}$, while the sets are distinguishable in \logicApal{}. The assumption that such a distinguishing \logicPapal{} formula exists is contradictory, as it must then in particular be true in $M^m_0$ for some $m \in \Naturals$ sufficiently large, in which case we can show that it must also be true in $M^\omega_0$. 

\medskip

Models $M^m_0$ and $M^\omega_0$ in Figure~\ref{expressivity-s5i-1mm} are the same except that in $M^m_0$ the lower leg is cut off at the world named $m'$. 
As $m$ is arbitrary, the final indistinguishability link between $(m-1)'$ and $m'$ could be for $b$ or for $a$. 
In subsequent proofs we assume without loss of generality that it is a $b$-link and that (therefore) $m \geq 2$ is even.

\begin{lemma} \label{lemma.zero}
  The edge state $m'$ of model $M^m$ can be distinguished by an epistemic formula. 
\end{lemma}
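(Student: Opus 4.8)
The plan is to exhibit an explicit epistemic formula $\delta\in\langMl$ that holds at the edge $m'$ and at no other state of $M^m$. Two structural features of $M^m$ do the work. Since $\atomP$ is true only at the two $b$-adjacent worlds $0$ and $0'$, the graph distance of a state to the nearest $\atomP$-world is well defined, and the states $k$ and $k'$ both lie at distance $k$; in particular $m'$ lies at distance $m$. Moreover, under the standing assumption that $m$ is even (so the link between $(m-1)'$ and $m'$ is a $b$-link), the state $m'$ is the \emph{unique} state of $M^m$ whose $a$-class is a singleton: the unprimed leg is infinite and so has no edge, whereas in the primed leg only $m'$ loses its $a$-partner $(m+1)'$ to the cut at $m$.

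First I would capture distance to a $\atomP$-world. Set $P_0:=\atomP$ and, for $k\ge 1$, $P_k:=\atomP\vee\possible[\agentA]P_{k-1}\vee\possible[\agentB]P_{k-1}$. A routine induction on $k$ shows $M^m_t\models P_k$ iff $t$ is within distance $k$ of a $\atomP$-world (the reflexive loops are harmless, as $P_{k-1}\imp P_k$). Writing $\theta:=P_m\et\neg P_{m-1}$, the formula $\theta$ is true exactly at the states at distance precisely $m$, which in $M^m$ are only the unprimed state $m$ and the edge $m'$. I then claim that $\delta:=\necessary[\agentA]\theta$, which is clearly an $\langMl$ formula, distinguishes $m'$.

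For $M^m_{m'}\models\delta$: the $a$-class of $m'$ is the singleton $\{m'\}$ and $M^m_{m'}\models\theta$. For the converse, note that every $a$-class of $M^m$ is either a singleton or a two-element set $\{u,v\}$ joined by an $a$-link, and that along the chain any two $a$-linked states differ in distance to the $\atomP$-worlds by exactly $1$ (the only same-distance link is the central $b$-link between $0$ and $0'$, which is irrelevant). Hence no two-element $a$-class can have both its states at distance $m$, so $\necessary[\agentA]\theta$ can hold only at a singleton $a$-class sitting at distance $m$; as argued above, that state is uniquely $m'$. In particular $\necessary[\agentA]\theta$ fails at the unprimed state $m$, whose $a$-partner $m+1$ lies at distance $m+1$. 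I expect the main obstacle to be precisely these two geometric claims — that $m'$ is the unique singleton $a$-class, and that $a$-linked states always differ in distance by one — which must be read off carefully from the alternation of $a$- and $b$-links and from $\atomP$ being located at the single $b$-edge $\{0,0'\}$. Once $\delta$ is established, it is exactly the seed $\delta_0$ required to apply Lemma~\ref{lemma.dist} to the one-edged infinite chain $M^m$.
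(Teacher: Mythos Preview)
Your proof is correct and follows the same underlying strategy as the paper's: encode ``distance to a $p$-state'' by an $\langMl$ formula, pin down the two states $m$ and $m'$ at distance exactly $m$, and then use a $K_a$ modality to separate $m'$ (whose $a$-class is a singleton) from $m$ (whose $a$-partner $m+1$ lies at distance $m+1$).

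The only real difference is in the encoding of distance. The paper exploits the specific alternation of the chain and uses $L_{ba}^k K_b p$, whose denotation is $\{0,0',\dots,k,k'\}$; its distinguishing formula is then $K_a L_{ba}^m K_b p \land \neg L_{ba}^{m-1} K_b p$, i.e., ``all $a$-neighbours are within distance $m$'' conjoined with ``I am not within distance $m-1$''. Your generic reachability formulas $P_k = p \lor L_a P_{k-1} \lor L_b P_{k-1}$ avoid having to match the diamond alternation to the chain's link pattern, and your final formula $K_a(P_m\land\neg P_{m-1})$ packages the two constraints slightly differently. Both encodings compute the same sets, so the arguments are interchangeable; yours is marginally more portable, the paper's is more tailored to the $a$-$b$-chain setting already in play.
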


\begin{proof}
We show that the state $m'$ in $M^m$ is the unique point satisfying the formula 
  $K_a L_{ba}^m K_b p \et \neg L_{ba}^{m-1} K_b p$ (see Lemma~\ref{lem:distinguishingFormula} for notation).

We can see this as follows. The denotation of $K_b p$ is $\{0,0'\}$. 
  Therefore, the denotation of $L_a K_b p$ is $\{0,0',1,1'\}$, 
  and the denotation of $L_b L_a K_b p$ is $\{0,0',1,1',2,2'\}$, 
  and in general the denotation of any $L_{ba}^k K_b p$ for $k \leq m$ is $\{0,0', \dots, k-1, (k-1)', k,k'\}$. 
  In particular, the denotation of $L_{ba}^m K_b p$ is $\{0,0', \dots, m-1, (m-1)', m,m'\}$.

Then, as $m'$ is an edge state, the denotation of $K_aL_{ba}^m K_b p$ is $\{0,0', \dots, m-1, (m-1)',m'\}$ (so, without state $m$), 
  as in state $m$ agent $a$ considers a state $m+1$ possible (where $L_{ba}^m K_b p$ is false), 
  but in the edge $m'$ agent $a$ does not consider another state possible. 

  Next, the denotation of $\neg L_{ba}^{m-1} K_b p$ is the complement of $\{0,0', \dots, m-1, (m-1)'\}$, i.e., $\{m,m',m+1,m+2,\dots\}$. 
  The denotation of the conjunction \[ \delta_{m'} \ \ := \ \ K_aL_{ba}^m K_b p \et \neg L_{ba}^{m-1} K_b p \] of these two formulas 
  is the intersection of these two sets: $\{0,0', \dots, m-1, (m-1)',m'\} \inter \{m,m',m+1,m+2,\dots\} = \{m'\}$, as required. 

This shows that $m'$ has a distinguishing formula $\delta_{m'}$. 
\end{proof}

\begin{lemma} \label{lemma.one}
$M^m_0 \not \models \Box(K_b K_a p \vel K_b \neg K_a p)$
\end{lemma}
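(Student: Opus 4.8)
The plan is to exhibit a single epistemic announcement witnessing $\Diamond \neg(K_b K_a p \vee K_b \neg K_a p)$ at state $0$. Unfolding the semantics of the quantifier, $M^m_0 \not\models \Box(K_b K_a p \vee K_b \neg K_a p)$ amounts to finding some $\psi \in \langMl$ that is true at $0$ and after whose announcement state $0$ no longer satisfies $K_b K_a p \vee K_b \neg K_a p$.

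First I would record the relevant local structure. In $M^m$ the only $p$-states are $0$ and $0'$, the $b$-class of $0$ is exactly $\{0,0'\}$, and $K_a p$ fails at every state (each of $0,0'$ is $a$-linked to a $\neg p$ state, namely $1$ resp.\ $1'$), so before any announcement $\phi := K_b K_a p \vee K_b \neg K_a p$ holds only trivially, through its second disjunct. The idea is therefore to announce a formula that makes $K_a p$ hold at exactly one of the two $b$-successors $0,0'$ of $0$: then $K_b K_a p$ fails (the other successor is a counterexample) and $K_b \neg K_a p$ fails (the $K_a p$-successor is a counterexample), so $\phi$ fails at $0$.

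Concretely I would delete the top-leg neighbour $1$ of $0$ while retaining the bottom-leg neighbour $1'$. Since $M^m$ is a one-edged infinite $a$-$b$-chain whose edge $m'$ has a distinguishing formula (Lemma~\ref{lemma.zero}), Lemma~\ref{lemma.dist} supplies a distinguishing formula $\delta_1 \in \langMl$ for the finite set $\{1\}$, i.e.\ $\delta_1$ is true only at $1$. Take $\psi := \neg\delta_1 \in \langMl$, which is true everywhere except $1$, in particular at $0$. The restriction $M^m|\neg\delta_1$ removes exactly state $1$ and leaves $0,0',1'$ intact. In this restriction the $a$-class of $0$ collapses to $\{0\}$, so $K_a p$ holds at $0$, whereas the $a$-class of $0'$ is still $\{0',1'\}$, so $K_a p$ fails at $0'$; and the $b$-class of $0$ is unchanged at $\{0,0'\}$. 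Hence $(M^m|\neg\delta_1)_0 \not\models K_b K_a p$ (witnessed by $0'$) and $(M^m|\neg\delta_1)_0 \not\models K_b \neg K_a p$ (witnessed by $0$), so $(M^m|\neg\delta_1)_0 \models \neg\phi$. Therefore $M^m_0 \models \langle \neg\delta_1 \rangle \neg\phi$, i.e.\ $M^m_0 \models \Diamond\neg\phi$, which is exactly $M^m_0 \not\models \Box\phi$.

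The only genuinely delicate point is the asymmetry: we must delete $1$ but keep $1'$. A symmetric announcement such as $p \vee \neg L_a p$ would delete both $1$ and $1'$, making $K_a p$ true at both $0$ and $0'$ and thereby validating $K_b K_a p$ and defeating the goal. Breaking the top/bottom symmetry is possible only because the lower leg is finite and terminates in the edge $m'$; this is exactly what Lemma~\ref{lemma.zero} and Lemma~\ref{lemma.dist} exploit to single out $\{1\}$, so the whole argument hinges on the availability of that distinguishing formula.
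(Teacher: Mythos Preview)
Your proof is correct and follows essentially the same approach as the paper: use Lemma~\ref{lemma.zero} and Lemma~\ref{lemma.dist} to obtain an epistemic announcement that breaks the symmetry between the two legs at $0$ and $0'$. The paper announces a distinguishing formula for the set $\{0,0',1'\}$ directly, whereas you announce the negation of a distinguishing formula for $\{1\}$; in either case the connected component of $0$ after the announcement is exactly $\{0,0',1'\}$, so the verification that $K_b K_a p \vee K_b \neg K_a p$ fails at $0$ is identical.
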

\begin{proof}
As edge state $m'$ of model $M^m$ has a distinguishing formula, it follows from Lemma~\ref{lemma.dist} that any finite subset $T \subseteq S$ of model $M^m$ has a distinguishing formula. 
  In particular, we can therefore distinguish the set $T = \{0,0',1'\}$. 
  Let formula $\delta_T\in\langMl$ be such that $M^m_0 \models \delta_T$ and $\interpretation[M^m]{\delta_T} = \{0,0',1'\}$. 
  Note that $(M^m|\delta_T)_0 \not\models K_b K_a p \vel K_b \neg K_a p$. Therefore,  $M^m_0 \not \models \Box(K_b K_a p \vel K_b \neg K_a p)$.
\end{proof}

As an example of the Lemmas \ref{lemma.zero} and \ref{lemma.one}, consider model $M^2$. 
The distinguishing formula of world $2'$ is $\delta_{2'} = K_aL_bL_a K_b p \et \neg L_a K_b p$, 
and the submodel consisting of domain $\{0,0',1'\}$, using the method of Lemma \ref{lemma.dist}, has distinguishing formula 
$(L_b \delta_{2'} \et \neg \delta_{2'}) 
\vel (L_aL_b \delta_{2'} \et \neg L_b\delta_{2'}) 
\vel (L_bL_aL_b \delta_{2'} \et \neg L_aL_b\delta_{2'})$, 
which is equivalent to $L_bL_aL_b \delta_{2'} \et \neg \delta_{2'}$, i.e., to 
\[ L_bL_aL_b(K_aL_bL_a K_b p \et \neg L_a K_b p) \et \neg (K_aL_bL_a K_b p \et \neg L_a K_b p). \]
Similarly, we thus obtain that $\delta_T$ in the proof of Lemma \ref{lemma.one} is equivalent to the formula 
\[ L_{ab}^{m+2}(K_aL_{ba}^{m+1} K_b p \et \neg L_{ba}^m K_b p) \et \neg L_{ab}^{m-1}(K_aL_{ba}^{m+1} K_b p \et \neg L_{ba}^m K_b p).\]

\begin{lemma} \label{lemma.two}
$M^\omega_0 \models \Box(K_b K_a p \vel K_b \neg K_a p)$
\end{lemma}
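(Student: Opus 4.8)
The plan is to unwind the semantics of $\Box$ and then exploit a symmetry of $M^\omega$. By definition, $M^\omega_0 \models \Box(K_bK_ap \vel K_b\neg K_ap)$ means that for every epistemic formula $\psi \in \langMl$ with $M^\omega_0 \models \psi$, the restriction $(M^\omega|\psi)_0$ satisfies $K_bK_ap \vel K_b\neg K_ap$. I would fix such a $\psi$ and first read off the local structure of $0$ in $M^\omega$. Inspecting the accessibility relations, the only $\sim_b$-neighbours of $0$ are $0$ itself and $0'$ (via the single cross-leg link $0\sim_b 0'$), while the only $\sim_a$-neighbour of $0$ other than itself is $1$, and the only $\sim_a$-neighbour of $0'$ other than itself is $1'$; moreover $p$ holds exactly at $0$ and $0'$. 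Hence in any restriction where $0$ and $0'$ survive, $K_ap$ is true at $0$ iff $1$ has been removed, and $K_ap$ is true at $0'$ iff $1'$ has been removed.

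The key observation is that the prime-swap map $\sigma$ given by $\sigma(i)=i'$ and $\sigma(i')=i$ is an automorphism of $M^\omega$: it preserves $\sim_a$ (matching each $(2i,2i+1)$ with $(2i',(2i+1)')$), preserves $\sim_b$ (matching each $(2i+1,2i+2)$ with $((2i+1)',(2i+2)')$ and fixing the pair $\{0,0'\}$), and preserves the valuation since $\sigma(\{0,0'\}) = \{0,0'\}$. Consequently $M^\omega_x$ and $M^\omega_{\sigma(x)}$ are isomorphic pointed models, hence bisimilar and so modally equivalent by Lemma~\ref{bisimulation-preserves}; thus for every $\chi \in \langMl$ we have $M^\omega_x \models \chi$ iff $M^\omega_{\sigma(x)} \models \chi$. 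Applying this with $x=0$ and $x=1$: from $M^\omega_0 \models \psi$ we obtain $M^\omega_{0'} \models \psi$, so $0'$ survives the announcement and remains a $b$-successor of $0$; and $M^\omega_1 \models \psi$ iff $M^\omega_{1'} \models \psi$, so $1$ survives iff $1'$ survives.

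Putting these together, in $(M^\omega|\psi)$ the $b$-successors of $0$ are exactly $0$ and $0'$, and by the first paragraph $K_ap$ takes the same truth value at $0$ and at $0'$ (both true if $1,1'$ are removed, both false otherwise). Therefore either $K_ap$ holds throughout the $b$-class of $0$, giving $(M^\omega|\psi)_0 \models K_bK_ap$, or it fails throughout, giving $(M^\omega|\psi)_0 \models K_b\neg K_ap$; in either case the disjunction holds. Since $\psi$ was an arbitrary truthful epistemic announcement, this yields $M^\omega_0 \models \Box(K_bK_ap \vel K_b\neg K_ap)$. The only genuinely delicate points are checking that $\sigma$ respects the single cross-leg link $0\sim_b 0'$ and that $0$ and $0'$ have no $a$- or $b$-neighbours besides those listed; once the automorphism is established the symmetry between the two legs does all the work, so I do not expect a serious obstacle.
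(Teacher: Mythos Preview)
Your proof is correct and follows essentially the same idea as the paper's: the paper defines the bisimulation $\bisimulation$ as the symmetric closure of $\{(i,i')\mid i\in\Naturals\}$, remarks that it is in fact an isomorphism (your automorphism $\sigma$), and then argues exactly as you do that any truthful announcement preserves $0'$ along with $0$ and preserves or eliminates $1$ and $1'$ together, forcing $K_bK_ap$ or $K_b\neg K_ap$ at $0$ in the restriction. Your write-up is somewhat more explicit about why $K_ap$ takes the same value at $0$ and $0'$ in the restriction, but the underlying argument is identical.
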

\begin{proof}
Consider the relation $\bisimulation$ on $M^\omega$ defined as the symmetric and reflexive closure of $\{ (i,i') \mid i \in \Naturals \}$. 
  It is obvious that this relation $\bisimulation$ is a bisimulation (and even an isomorphism). 
  Differently said, the $0,1,\dots$ chain is the mirror image of the $0',1',\dots$ chain. Therefore, $M^\omega_0 \models \Box(K_b K_a p \vel K_b \neg K_a p)$: 
  firstly, any announcement must preserve actual state $0$ and therefore also preserves the bisimilar $0'$; 
  secondly, either $1$ and the bisimilar $1'$ are both eliminated by an announcement, after which $K_b K_a p$ is true at $0$, 
  or $1$ and $1'$ are both preserved, after which $K_b \neg K_a p$ is true at $0$.
\end{proof}

We continue by preparing the ground for the result that $M^\omega_0$ and $M^m_0$ cannot be distinguished in $\langPapal$ by a formula of epistemic depth at most $m$.
The main observation required for this result, is that given the very sparse structure of the model $M^\omega$, there are only three meaningful positive announcements. We can show this by looking in detail at the maximal refinements on $M^\omega$ and some of its restrictions.

\begin{lemma}\label{extrafortim}
Consider the relation $\bisimulation$ on $M^\omega$ consisting of: $(0,0)$, $(0,0')$, $(0',0)$, $(0',0')$, and all pairs $(i,j)$, $(i,j')$, $(i',j)$, and $(i',j')$ such that $i,j \in \Naturals$, $i,j >0$, and $i \leq j$. Then $\bisimulation$ is a refinement.
\end{lemma}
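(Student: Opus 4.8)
The plan is to verify directly that the relation $\refinement$ (written $\bisimulation$ in the statement) meets the two defining clauses of a refinement from $M^\omega$ to itself: \textbf{atoms-}$p$ and \textbf{back-}$a$, \textbf{back-}$b$ for every pair in $\refinement$. Since $V(p)=\{0,0'\}$, every pair in $\refinement$ either joins two index-$0$ states (the pairs $(0,0')$ and $(0',0)$, both of whose components satisfy $p$) or joins two states of index $>0$ (none of which satisfy $p$); hence \textbf{atoms-}$p$ is immediate and all the content lies in the \textbf{back} clauses.

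The structural fact I would isolate first is the local shape of the doubly-infinite chain $M^\omega$: for a state of index $k>0$, exactly one agent leads to the neighbour of index $k-1$ and the other to the neighbour of index $k+1$, the assignment depending only on the parity of $k$ (for even $k$ the $a$-neighbour has index $k+1$ and the $b$-neighbour index $k-1$; for odd $k$ the roles swap), and this assignment is identical on the primed and unprimed copies. Calling these the ``up-agent'' and ``down-agent'' of $k$, it follows that for a pair $(s,s')\in\refinement$ with indices $i\le j$ (both $>0$) and a fixed agent $c$, the $c$-successors of the target $s'$ are $s'$ itself (index $j$) together with one moving neighbour of index $j\pm1$, and similarly for the source $s$.

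The \textbf{back} condition then splits into two moves. The successor equal to $s'$ itself I always match with $s$ via the reflexive $c$-loop, and $(s,s')\in\refinement$ holds by hypothesis. For the moving successor of index $j^\ast=j\pm1$: if $j^\ast=j+1$, or if $j^\ast=j-1$ with $i<j$, I may again take the source to be $s$, since then its index $i\le j^\ast$ and the resulting pair lies in $\refinement$ (both indices are positive because $i<j$ forces $j\ge2$). In the remaining case $j^\ast=j-1$ with $i=j$, the states $s,s'$ share parity, so their down-agents coincide with $c$, and I match the two index-$(j-1)$ down-neighbours, which lie in $\refinement$ whenever $j-1>0$. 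Thus all pairs of index $\ge 2$, and all pairs with $i<j$, reduce to routine bookkeeping on the inequality $i\le j$.

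The main obstacle is the boundary at the centre of the chain, i.e.\ the \textbf{back} condition at the pairs whose moving successors land on the $p$-states $0,0'$. This occurs exactly at the index-$1$ states, whose down-agent is $a$ and whose $a$-neighbour is $0$ (respectively $0'$); here the inequality argument above degenerates because the target index drops to $0$, so the matching must instead be routed through the bridge pairs $(0,0')$, $(0',0)$, the $0\sim_b 0'$ link, and the reflexive behaviour at the centre, invoking the mirror isomorphism $i\leftrightarrow i'$ of $M^\omega$ recorded in the proof of Lemma~\ref{lemma.two}. I would therefore dispatch the finitely many pairs of index $\le 1$ (including $(0,0')$ and $(0',0)$ themselves) by explicit inspection, checking with care that every successor of such a low-index target is covered; this central bookkeeping is the delicate step, after which the uniform parity argument handles all pairs of index $\ge 2$.
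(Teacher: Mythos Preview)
Your proposal is correct and follows essentially the same approach as the paper's proof: both verify \textbf{atoms} immediately from $V(p)=\{0,0'\}$ and then handle \textbf{back}-$c$ by a case analysis on the $c$-successors of the target state (reflexive, index $j-1$, index $j+1$), matching via the reflexive loop on the source when $i\le j^\ast$ and via the source's own down-neighbour when $i=j$, with the boundary at index $\le 1$ treated separately. Your explicit isolation of the parity-determined up/down agents is a slightly cleaner organising device than the paper uses, but the underlying case structure is the same.
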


\begin{proof}
The {\bf atoms}-$p$ requirement is satisfied as all pairs in the relation only relate states wherein $p$ is true in both ($(0,0)$, $(0,0')$, $(0',0)$, and $(0',0')$) or wherein $p$ is false in both (all other pairs). 

For {\bf back}-$a$, let $(i,j') \in \bisimulation$ for $i,j \in \Naturals$ with $i,j >0$ and suppose $j' \sim_a k'$. We need to consider several cases: \begin{itemize}
\item if $k=j-1$ and $j= 1$, then: $i=1$ and $k' = 0'$ and as $0 \sim_a 1$ we choose $(0,0') \in \bisimulation$; 
\item if $k = j-1$ and $j\neq 1$, then: if $i \leq j-1$ then $(i,(j-1)') \in \bisimulation$ else $i=j$ and $(i-1) \sim_a i$ so $(i-1,(j-1)') \in \bisimulation$; 
\item if $k = j$, then (choose $i \sim_a i$ and) $(i,j') \in \bisimulation$; \item if $k = j+1$, then $i \leq j$ implies $i \leq j+1$ so $(i,(j+1)') \in \bisimulation$. 
\end{itemize}
The cases where $(i,j), (i',j'), (i',j) \in \bisimulation$ are similar, and the clause {\bf back}-$b$ can also be similarly proved. We further note that $\bisimulation$ is the maximal refinement on $M^\omega$.
\end{proof}

The proof of Lemma~\ref{extrafortim} also holds for certain connected submodels of $M^\omega$ such that $0$ or $0'$ and $1$ or $1'$ are in the connected part $S_x^y$. We recall the order $\preceq$ defined on $\Naturals \union \Naturals' \union \{\omega,\omega'\}$. These submodels are cases in the following Corollary~\ref{extrafortim2} and they are needed in the subsequent Lemma \ref{onlythreeannouncements}. (The submodels only containing $0$ or $0'$, or only excluding $0$ and $0'$, are less of interest, as will become clear in the proof of Lemma \ref{onlythreeannouncements}.)

\begin{corollary}\label{extrafortim2}
Let $\bisimulation'$ be the restriction of $\bisimulation$ to $S^y_x$ for some $x,y \in S \union \{\omega,\omega\}$, where $S = \domain(M^\omega)$. 
\begin{enumerate}
\item if $x \preceq 1'$ and $1 \preceq y$ then $\bisimulation'$ is a refinement on $M^\omega|S^y_x$. \label{extra1}
\item if $x =0$ then $\bisimulation'$ is a refinement on $M^\omega|S^y_x$.
\item if $x = 0'$ and $1 \preceq y$ then $\bisimulation' \setminus \{(0',0)\}$ is a refinement on $M^\omega|S^y_x$. \label{extra2}
\item if $x \preceq 1'$ and $y=0$ then $\bisimulation' \setminus \{(0,0')\}$ is a refinement on $M^\omega|S^y_x$. \label{extra3}
\item if $y =0'$ then $\bisimulation'$ is a refinement on $M^\omega|S^y_x$.
\end{enumerate}
\end{corollary}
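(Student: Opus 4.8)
The plan is to check, in each of the five cases, the two defining clauses of a refinement (Definition~\ref{refinement}): {\bf atoms}-$\atomP$ and {\bf back}-$\agentA$. The {\bf atoms}-$\atomP$ clause is immediate, since it already holds for every pair of $\bisimulation$ on $M^\omega$ by Lemma~\ref{extrafortim}, and passing to a restriction changes neither the valuation nor which pairs survive; likewise deleting a pair cannot destroy it. Non-emptiness is clear because each listed interval retains at least the reflexive pair at one of its states. So everything reduces to verifying {\bf back}-$a$ and {\bf back}-$b$.

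The central observation is that passing from $M^\omega$ to the connected interval $M^\omega|S^y_x$ can only delete successors of a state, hence can only \emph{weaken} each {\bf back} obligation, while the witness discharging an obligation of a pair $(s,s')$ must be an $a$- or $b$-successor $t$ of $s$ that still lies in $S^y_x$. Inspecting the witness assignment in the proof of Lemma~\ref{extrafortim} shows that, for every pair other than the two \emph{crossing} pairs $(0,0')$ and $(0',0)$, each obligation is discharged by a witness sitting at $s$ itself, at an immediate neighbour of $s$ towards the junction, or at a junction pair $(0,0'),(0',0)$ or a reflexive pair at $0,0'$ (these reflexive pairs are present because $\bisimulation$ is the maximal refinement, and maximal refinements are reflexive, the identity being a refinement). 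Such witnesses survive in any interval containing the pair together with the junction. The crossing pairs are the only ones whose {\bf back}-$a$ obligation can be \emph{orphaned}: $(0,0')$ must match the $a$-successor $1'$ of its second coordinate $0'$ via the witness $(1,1')$, and $(0',0)$ must match the $a$-successor $1$ of its second coordinate $0$ via the witness $(1',1)$. Hence $(0,0')$ fails exactly when $1'\in S^y_x$ but $1\notin S^y_x$, and $(0',0)$ fails exactly when $1\in S^y_x$ but $1'\notin S^y_x$.

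This reduces the corollary to reading off which of $1,1'$ lie in $S^y_x$. In case~(1), $x\preceq 1'$ and $1\preceq y$ force both $1$ and $1'$ into the interval, so neither crossing pair is orphaned and $\bisimulation'$ is a refinement. In cases~(2) and~(5) the interval lies entirely in one leg: for $x=0$ no primed state survives and for $y=0'$ no unprimed state survives, so both crossing pairs (and all mixed pairs) drop out of $\bisimulation'$, leaving a one-ended sub-chain on which the surviving pairs are matched exactly as in Lemma~\ref{extrafortim}, the junction state $0$ (resp.\ $0'$) now being an edge that only removes obligations. In case~(3), $x=0'$ gives $1'\notin S^y_x$ while $1\preceq y$ gives $1\in S^y_x$, so $(0',0)$ is orphaned and must be deleted, whereas $(0,0')$ is \emph{not} orphaned (it would need $1'\in S^y_x$); case~(4) is the mirror image, deleting $(0,0')$.

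The one genuinely delicate point — the main obstacle — is the \emph{no-cascade} verification in cases~(3) and~(4): having removed one crossing pair, I must check that no surviving pair had relied on it as a witness. The key is again the reflexive pairs at $0$ and $0'$. In case~(3), after deleting $(0',0)$, the surviving pairs whose {\bf back}-$b$ obligation involves the successor $0$ — namely $(0,0')$, the reflexive $(0,0)$, and the reflexive $(0',0')$ — are each discharged by the reflexive pair $(0,0)$ rather than by the deleted $(0',0)$; all other surviving pairs have both coordinates of positive index (or are unprimed pairs $(i,j)$) and are matched locally as before. Thus $\bisimulation'\setminus\{(0',0)\}$ is a refinement. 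The symmetric argument, using $(0',0')$ in place of $(0,0)$, handles case~(4). Collecting the five cases gives the corollary.
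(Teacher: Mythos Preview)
Your proposal is correct and follows the same case-analysis approach as the paper's own justification, which consists only of the two sentences after the corollary explaining why $(0',0)$ must be dropped in item~3 and $(0,0')$ in item~4. Your treatment is considerably more thorough: you correctly isolate the crossing pairs as the only ones whose {\bf back}-$a$ witness can be orphaned by the restriction, and---more importantly---you address the no-cascade question in items~3 and~4, which the paper does not mention at all.

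One point worth flagging: your no-cascade argument (and your non-emptiness check in the degenerate single-state intervals) relies on the reflexive pairs $(0,0)$ and $(0',0')$ belonging to $\bisimulation$. You justify this via the paper's claim that $\bisimulation$ is the \emph{maximal} refinement on $M^\omega$, hence contains the identity. That reasoning is sound, but note that the explicit enumeration of $\bisimulation$ in Lemma~\ref{extrafortim} omits these two pairs; without them, removing $(0',0)$ in item~3 would indeed cascade, since the {\bf back}-$b$ obligation of $(0,0')$ at the successor $0$ of $0'$ could then only be discharged by $(0',0)$ itself. So you have in fact spotted and repaired a small gap between the paper's explicit definition of $\bisimulation$ and its stated maximality claim.
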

In the third item above the pair $(0',0)$ is now excluded because {\bf back}-$a$ fails, as the link $0 \sim_a 1$ cannot be matched in state $0'$, wherein only $0' \sim_a 0'$. In the fourth item above the pair $(0,0')$ is now excluded because the link $0' \sim_a 1'$ cannot be matched in state $0$. 

\begin{lemma}\label{onlythreeannouncements}
  Let $\phi \in \langMlPlus$, $N$ be any submodel of $M^\omega$ and $t \in \domain(N)$. Then we have either $(N|\phi)_t\simeq (N|p)_t$, or $(N|\phi)_t\simeq (N|\lnot p)_t$, or $(N|\phi)_t\simeq N_t$.
\end{lemma}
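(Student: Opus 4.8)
The plan is to exploit two facts about positive formulas on this very rigid model: their denotation is closed under refinements, and the maximal refinement of $M^\omega$ (and of its connected submodels) is so constrained that only three bisimulation types of restriction can arise. First I would reduce to the case that $N$ is \emph{connected}. Since the truth of any epistemic formula at a state of an $\classS$ model depends only on the connected component of that state, and since a restriction creates no new links, the pointed model $(N|\phi)_t$ is bisimilar to $(\widehat N|\phi)_t$, where $\widehat N$ is the connected component of $t$ in $N$ (the surviving states around $t$ and the value of $\phi$ on them are computed identically in $N$ and in $\widehat N$). Every connected submodel of $M^\omega$ is an interval $S^y_x$ for the order $\preceq$, whose underlying chain is exactly $\omega'\preceq\dots\preceq 1'\preceq 0'\preceq 0\preceq 1\preceq\dots\preceq\omega$. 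So I may assume $N = M^\omega|S^y_x$, and I will assume $N_t\models\phi$ so that $t$ survives.

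Next, by Lemma~\ref{refinement-preserves} the denotation $\interpretation[N]{\phi}$ is closed under refinements in $N$. Using the explicit maximal refinement of Lemma~\ref{extrafortim} and Corollary~\ref{extrafortim2} — together with the separate, easy observation that an interval lying entirely in one leg (avoiding $0,0'$) is a $\neg p$-only chain all of whose states are mutually bisimilar — one sees that every refinement-closed $T\subseteq S^y_x$ is determined by just two data. The $p$-states $0,0'$ go in ``all or nothing'', since $(0,0')$ or $(0',0)$ is a refinement pair whenever both are present; and the $\neg p$-states of $T$ are exactly those of index $\geq k$ on \emph{either} leg, for some threshold $k$, because $(i,j),(i,j'),(i',j),(i',j')$ are refinement pairs precisely when $i\leq j$, which forces the $\neg p$-part to be upward closed in the index.

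The engine of the classification is the bisimulation fact that any connected $\classS$ model on which $p$ has a constant truth value is bisimilar to a single reflexive point of that value; hence $N|p$ collapses to one $p$-point and each $\neg p$-leg of $N|\neg p$ collapses to one $\neg p$-point. I would then finish with a short case analysis of the $t$-component of $N|\phi = N|T$. If it is entirely $\neg p$ — which happens whenever the $p$-states are dropped, or are kept but with $k\geq 2$, since removing $1,1'$ severs $\{0,0'\}$ from the legs — it is a $\neg p$-chain and so $(N|\phi)_t\simeq (N|\neg p)_t$. If it is entirely $p$ (when $t\in\{0,0'\}$ and $k\geq 2$ again disconnects it from the legs), then $(N|\phi)_t\simeq (N|p)_t$. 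The only way the $t$-component can contain both a $p$- and a $\neg p$-state is to contain $1$ or $1'$ adjacent to $0,0'$, which forces $k=1$ and hence $T=S^y_x$, i.e.\ $N|\phi = N$ and $(N|\phi)_t\simeq N_t$.

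The main obstacle I anticipate is the bookkeeping in this last step: one must verify that the threshold description really does make any ``mixed'' component collapse to the full model, and that every interval shape is covered — two-legged intervals straddling $0,0'$, one-legged intervals containing a single $p$-state, and the $\neg p$-only intervals — including the degenerate cases that Corollary~\ref{extrafortim2} excludes by dropping one of $(0,0')$, $(0',0)$. These degenerate cases are tamed by the collapse-to-a-point fact, which renders, for instance, an isolated surviving $p$-state bisimilar to the two-state $p$-component $\{0,0'\}$, so that they still fall into exactly one of the three prescribed classes.
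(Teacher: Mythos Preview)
Your proposal is correct and follows essentially the same route as the paper: reduce to a connected interval of $M^\omega$, use the refinement structure of Lemma~\ref{extrafortim} and Corollary~\ref{extrafortim2} together with preservation of positive formulas under refinement (Lemma~\ref{refinement-preserves}), and collapse constant-valuation subchains to a single point. The only cosmetic difference is that you first replace $N$ by the connected component of $t$ and then classify the refinement-closed subsets $T$ via a threshold, whereas the paper looks directly at the connected component of $t$ in $N|\phi$ and cases on its valuation pattern; both lead to the same three-case analysis and the same use of the refinement relation in the mixed case.
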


\begin{proof}
  For the purposes of bisimulation it is sufficient to consider the connected component of $N|\phi$ containing $t$. 
  As the model $N$ is an $a$-$b$-chain this connected component will be a model $(M^\omega|S_x^y)_t$ for some $x,y\in S \union \{\omega,\omega'\}$ with $x\preceq t \preceq y$. So we have that $(N|\phi)_t \simeq (M^\omega|S_x^y)_t$. 
  Then we have the following cases:
     \begin{enumerate}
     \item If $x=0'$ and $y=0$ (or $x=y=0$, or $x=y=0'$), then $p$ is true everywhere, and the connected component is bisimilar to a singleton model wherein $p$ is true. An announcement of $p$ suffices here, so  $(N|\phi)_t\simeq (N|p)_t$.
     \item If $x,y\in\Naturals\setminus\{0\}$ or $x,y\in\Naturals'\setminus\{0'\}$, then $p$ is false everywhere. An announcement of $\lnot p$ will equally result in a model restriction only containing $\neg p$ states. Both restrictions are bisimilar to a singleton model wherein $p$ is false, so $(N|\phi)_t\simeq (N|\lnot p)_t$.
     \item Finally, as we are in a connected component, if neither of the above cases are true, 
       then we must have in our connected model a state $i\in\{0,0'\}$ and a state $j\in\{1,1'\}$ that were preserved by the announcement of $\phi$. 
       Now for every $k\in\domain(N)$ where $k\in(\Naturals\setminus\{0\})\cup(\Naturals'\setminus\{0'\})$ we have $N_k$ is a refinement of $N_j$ (Corollary \ref{extrafortim2}), 
         so every such $k$ must have been preserved by the announcement of $\phi$ (Lemma~\ref{refinement-preserves}). 
         Further: \begin{itemize} \item if $j = 1$ and $i=0'$ so that $0$ is also in $(N|\phi)$, then $N_{0'}$ is a refinement of $N_0$ (Corollary \ref{extrafortim2}.\ref{extra2}) so that both $0$ and $0'$ are preserved; 
         \item if $i = 0$ and $j=1'$ so that $0'$ is also in $(N|\phi)$, then $N_0$ is a refinement of $N_{0'}$ (Corollary \ref{extrafortim2}.\ref{extra3}) so that again both $0$ and $0'$ are preserved;
\item if $i = 0'$ but $0$ is not in $(N|\phi)$, then $0'$ was preserved by assumption;
\item if $i = 0$ but $0'$ is not in $(N|\phi)$, then $0$ was preserved by assumption.
\end{itemize}
       Therefore every state in the connected component containing $t$ is preserved by $\phi$ and $(N|\phi)_t\simeq N_t$.
     \end{enumerate}
  \end{proof}

\newcommand{\qisim}{\ensuremath{\mathfrak{Q}}}

\begin{lemma} \label{lemma.three}
  Let $M,N$ be submodels of $M^\omega$, $s \in \domain(M)$, $t \in \domain(N)$, and $k \in \Naturals$. If  $M_s \simeq^k N_t$, then  $M_s \equiv^k_\mathit{apal+} N_t$.
\end{lemma}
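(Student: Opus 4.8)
The plan is to reduce the problem to the bisimulation result that is already available for $\langPal$. The only feature of $\langPapal$ that blocks the standard argument ``$\simeq^k$ implies agreement on formulas of epistemic depth at most $k$'' is the quantifier $\allppas$, whose witnessing announcements may have unbounded epistemic depth. The whole point of restricting to submodels of $M^\omega$ is that this unboundedness disappears: by Lemma~\ref{onlythreeannouncements} every positive announcement has, up to bisimulation, one of only three effects. I would exploit this to translate each $\langPapal$ formula into an equivalent (over submodels of $M^\omega$) $\langPal$ formula of the \emph{same} epistemic depth, and then invoke the generalisation of Lemma~\ref{bisimulation-preserves2} to $\langPal$ noted just after that lemma.

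The key step is a local elimination of the quantifier: for any submodel $M \subseteq M^\omega$, any $s \in \domain(M)$, and any $\psi \in \langPapal$,
$$ M_s \models \allppas \psi \quad\Longleftrightarrow\quad M_s \models \psi \et [p]\psi \et [\neg p]\psi . $$
To justify this, note that $\allppas\psi$ quantifies over positive announcements true at $s$, and by Lemma~\ref{onlythreeannouncements} the result of any such announcement is bisimilar to $(M|p)_s$, to $(M|\neg p)_s$, or to $M_s$; moreover each of these three restrictions is realised by an actual true positive announcement ($p$, $\neg p$, and $p\vel\neg p$ respectively), exactly one of the first two of which keeps $s$. Since truth of $\psi$ is bisimulation invariant (Lemma~\ref{bisimulation-preserves-papal}), $\allppas\psi$ holds at $s$ iff $\psi$ survives all reachable restrictions, which is exactly the right-hand side (the conjunct for whichever of $p,\neg p$ is false at $s$ being vacuously true). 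The crucial arithmetic observation is that this replacement preserves epistemic depth: $d(\psi \et [p]\psi \et [\neg p]\psi)=d(\psi)=d(\allppas\psi)$, because $d(p)=d(\neg p)=0$.

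I would then define a translation $t:\langPapal\to\langPal$ by structural recursion, commuting with all the boolean, epistemic, and announcement constructs, and setting $t(\allppas\psi)=t(\psi)\et[p]t(\psi)\et[\neg p]t(\psi)$. A routine induction on formula structure shows that $t(\phi)$ contains no $\allppas$ (so $t(\phi)\in\langPal$), that $d(t(\phi))=d(\phi)$, and that $M_s\models\phi$ iff $M_s\models t(\phi)$ for every submodel $M\subseteq M^\omega$ and every $s$. The $\allppas$ case combines the displayed equivalence with the induction hypothesis (applied also to the restricted submodels $M|p$ and $M|\neg p$, which are themselves submodels of $M^\omega$), and the announcement case uses that $t(\psi_1)$ and $\psi_1$ have the same denotation in every submodel, so that $M|\psi_1=M|t(\psi_1)$ and the hypothesis applies again to the restricted model.

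Finally, given $M_s\simeq^k N_t$ with $M,N\subseteq M^\omega$ and any $\phi\in\langPapal$ with $d(\phi)\le k$, the formula $t(\phi)\in\langPal$ satisfies $d(t(\phi))=d(\phi)\le k$; by the $\langPal$ version of Lemma~\ref{bisimulation-preserves2} we obtain $M_s\models t(\phi)$ iff $N_t\models t(\phi)$, hence $M_s\models\phi$ iff $N_t\models\phi$, and as $\phi$ was arbitrary this is $M_s\equiv^k_\mathit{apal+}N_t$. I expect the main obstacle to be the depth-preservation in the quantifier-elimination step: it is what makes the translation usable against a \emph{bounded}-bisimulation hypothesis, and it is precisely where the three-announcement structure of $M^\omega$ (Lemma~\ref{onlythreeannouncements}) is essential, since for a general model the witnesses could force arbitrarily large depth — indeed the analogous statement fails for $\langApal$ and for arbitrary models, as the discussion after Proposition~\ref{apal-expressiveness} shows.
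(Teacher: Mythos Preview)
Your proposal is correct and takes a genuinely different, more modular route than the paper. The paper proves the lemma by a direct induction on $\phi$, handling the cases $K_a\phi$, $[\psi]\phi$, and $\allppas\phi$ explicitly: the $[\psi]\phi$ case constructs an explicit $d(\phi)$-bisimulation between $(M|\psi)$ and $(N|\psi)$ from the given $k$-bisimulation, and the $\allppas\phi$ case unfolds Lemma~\ref{onlythreeannouncements} into a three-way case split. You instead isolate the insight of Lemma~\ref{onlythreeannouncements} as a depth-preserving syntactic elimination $\allppas\psi\mapsto\psi\wedge[p]\psi\wedge[\neg p]\psi$, obtain a translation $t:\langPapal\to\langPal$ with $d(t(\phi))=d(\phi)$ that is semantics-preserving on all submodels of $M^\omega$, and then delegate the bounded-bisimulation argument entirely to the $\langPal$ generalisation of Lemma~\ref{bisimulation-preserves2}. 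Your approach buys a cleaner separation of concerns (the $[\psi]\phi$ case never has to be re-argued), at the cost of relying on the $\langPal$ generalisation that the paper states without proof; the paper's direct induction is more self-contained but duplicates work. Both arguments use Lemma~\ref{onlythreeannouncements} in exactly the same essential way, and your observation that $d(p)=d(\neg p)=0$ is what makes the depth bookkeeping go through is precisely the point.
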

\begin{proof}
By induction on $\phi$ we show the equivalent proposition:
\begin{quote}
  Let $\phi\in\langPapal$, $M,N$ be submodels of $M^\omega$, $s \in \domain(M)$, $t \in \domain(N)$, and $d(\phi)\leq k$ where $k \in \Naturals$. 
  If $M_s \simeq^k N_t$, then  $M_s \models \phi$ iff $N_t \models \phi$.
\end{quote}
  We only show the relevant cases $K_a\phi$, $[\psi]\phi$, and $\allppas\phi$. As $k$-bisimilarity is a symmetric relation it suffices to show just one direction for each case.
  Let $\bisimulation^0 \supseteq \dots \supseteq \bisimulation^k$ be such that $\bisimulation^0: M_s \simeq^0 N_t$, \dots, $\bisimulation^k: M_s \simeq^k N_t$. 

  {\bf Case $K_a \phi$:} Suppose $d(K_a\phi) \leq k$. 
  We have $M_s \models K_a \phi$ if and only if for all $s'\sim_a s$, $M_{s'} \models \phi$.
  As $\bisimulation^k: M_s\simeq^k N_t$, for all $t'\sim_a t$ there is some $s'\sim_a s$ such that $\bisimulation^{k-1}: M_{s'}\simeq^{k-1}N_{t'}$.
  By the induction hypothesis we have for all $\psi$ where $d(\psi)\leq k-1$, $M_{s'}\models\psi$ implies $N_{t'}\models\psi$.
  As $d(\phi)\leq k-1$, it follows that for all $t'\sim_a t$, $N_{t'}\models\phi$.
  Therefore $N_t\models K_a\phi$. 

%  \bigskip

  {\bf Case $[\psi]\phi$:} Suppose $d([\psi]\phi)\leq k$, and $M_s\models[\psi]\phi$. By the definition of $d$ we may suppose that $d(\psi) = i$ and $d(\phi) = j$ where $i+j\leq k$. As $\bisimulation^k: M_s\simeq^k N_t$, by the induction hypothesis, $M_s\models\psi$ if and only if $N_t\models\psi$.
  Therefore, if $M_s\not\models\psi$ then $N_t\not\models\psi$ and vacuously $N_t\models[\psi]\phi$, as required.
  Suppose now that $M_s\models\psi$, so that also $N_t\models\psi$. 

  We define the following series of relations from $(M|\psi)$ to $(N|\psi)$ for $\ell=0,\hdots,k-i$: 
  $\qisim^\ell = \{(s,t)\in\bisimulation^{\ell+i}\mid M_s\models\psi\}$. Note that as $(s,t)\in\qisim^\ell$ implies $(s,t)\in\bisimulation^{\ell+i}$ for any such pair $(s,t)$, and $d(\psi)=i\leq \ell+i$, it follows by induction that $N_t\models\psi$, so indeed these are relations from $(M|\psi)$ to $(N|\psi)$.
  
We now show that the clauses {\bf atoms}, {\bf forth} and {\bf back} of bounded bisimulation (Definition~\ref{n-bisimulation}) hold for $\qisim^\ell = \qisim^0,\hdots,\qisim^{k-i}$, for any pair $(s,t) \in \qisim^\ell$.

Case $\ell =0$. We show {\bf atoms}-$p$, for $p \in P$. From $(s,t) \in \qisim^0$ it follows that $(s,t) \in \bisimulation^i$. As $\bisimulation^0 \supseteq \bisimulation^i$, it also follows that $(s,t) \in \bisimulation^0$, i.e., $s$ and $t$ satisfy the same atoms. 
% From $\bisimulation^0: M_s \simeq^0 N_t$ and $M_s\models\psi$, and the observation above that $N_t \models \psi$, follows by the definition of $\qisim^0$ that 
Therefore $\qisim^0: (M|\psi)_s \simeq^0 (N|\psi)_t$.
%, i.e., $\bisimulation^i: (M|\psi)_s \simeq^i (N|\psi)_t$.

Case $\ell >0$. We show $\ell$-{\bf forth}-$a$. Let $s \sim_a s'$ and $M_{s'} \models \psi$ (i.e., $s \sim_a s'$ in $(M|\psi)$). From $\bisimulation^{\ell+i}: M_s \simeq^{\ell+i} N_t$ and $s \sim_a s'$ follows that there is a $t' \sim_a t$ such that $\bisimulation^{\ell+i-1}: M_{s'} \simeq^{\ell+i-1} N_{t'}$. As $\ell>0$ and $d(\psi)=i$, $d(\psi)\leq \ell+i-1$. From $\bisimulation^{\ell+i-1}: M_{s'} \simeq^{\ell+i-1} N_{t'}$, $M_{s'} \models \psi$ and  $d(\psi)\leq \ell+i-1$ it follows by the induction hypothesis that $N_{t'} \models \psi$. Therefore $t'$ is in the domain of $N|\psi$. By definition, from $\bisimulation^{\ell+i-1}: M_{s'} \simeq^{\ell+i-1} N_{t'}$ it follows that $\qisim^{\ell-1}: (M|\psi)_{s'} \simeq^{\ell-1} (N|\psi)_{t'}$. Therefore, $t'$ satisfies the requirement for $\ell$-{\bf forth}-$a$ for relation $\qisim^\ell$. The clause $\ell$-{\bf back}-$a$ is shown similarly.

In particular, $(M|\psi)_s\simeq^{k-i}(N|\psi)_t$. From assumptions $M_s \models [\psi]\phi$ and $M_s \models \psi$ it follows that $(M|\psi)_s \models \phi$. Therefore, using that $d(\phi) = j \leq k-i$ and applying the induction hypothesis once again, we obtain that $(N|\psi)_t\models\phi$, which with $N_t \models \psi$ delivers the required $N_t\models[\psi]\phi$.

%  \bigskip

  {\bf Case $\allppas \phi$:} Suppose $d(\allppas\phi)\leq k$, and $M_s\models\allppas\phi$. Then $M_s\models[\psi]\phi$ for all $\psi\in\langMlPlus$.
  Now for any $\psi\in\langMlPlus$, by Corollary~\ref{onlythreeannouncements} we have either: $(N|\psi)_t\simeq(N|p)_t$, $(N|\psi)_t\simeq(N|\lnot p)_t$ or $(N|\psi)_t\simeq N_t$. 
  \begin{enumerate}
    \item In the first case, since $M_s\simeq^k N_t$, $N_t\models p$ implies $M_s\models p$ so both $(M|p)_s$ and $(N|p)_t$
      are bisimilar to the singleton model where $p$ is true. 
      As $M_s\models\allppas\phi$, we have $(M|p)_s\models\phi$ and thus $(N|p)_t\models\phi$ (Lemma~\ref{bisimulation-preserves-papal}).
      Since $(N|p)_t\simeq (N|\psi)_t$ we have $N_u\models[\psi]\phi$.
    \item The second case is similar: if $(N|\psi)_t\simeq (N|\lnot p)_t$, then $N_t\models\lnot p$ implies $M_s\models \lnot p$. 
      We then have that $(M|\lnot p)_s$ and $(N|\lnot p)_t$ are bisimilar to the singleton model where $p$ is false, and thus $(M|p)_s\models\phi$ implies $(N|p)_t\models\phi$.
      It follows that $N_t\models[\psi]\phi$.
    \item Finally, if $(N|\psi)_t\simeq N_t$ then since $M_s\simeq^k N_t$ and $M_s\models[\top]\phi$, 
      we have $M_s\models\phi$ and $N_t\models\phi$ by the induction hypothesis.
      Therefore $(N|\psi)_t\models\phi$ and $N_t\models[\psi]\phi$.
  \end{enumerate}
  Therefore, for every $\psi\in \langMlPlus$, $N_t\models[\psi]\phi$, so $N_t\models\allppas\phi$ as required.
  \end{proof}

As $M^m_0 \simeq^m M^\omega_0$, the following corollary is rather a special case of the previous lemma.
\begin{corollary} \label{final}
Let $\phi \in \langPapal$ such that $d(\phi) \leq m$. Then $M^m_0 \models \phi$ iff  $M^\omega_0 \models \phi$.
\end{corollary}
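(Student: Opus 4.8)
The plan is to obtain the corollary as an immediate instance of Lemma~\ref{lemma.three}, supplying the single fact flagged in the sentence that precedes it, namely $M^m_0 \simeq^m M^\omega_0$. First I would record that the lemma applies: by construction $M^m = M^\omega | S_{m'}^\omega$ is a restriction of $M^\omega$, hence a submodel of $M^\omega$, and $M^\omega$ is a submodel of itself; moreover $0 \in \domain(M^m)$ and $0 \in \domain(M^\omega)$. Taking $M := M^m$, $N := M^\omega$, $s := t := 0$ and $k := m$, Lemma~\ref{lemma.three} then gives $M^m_0 \equiv^m_\mathit{apal+} M^\omega_0$, and unfolding the definition of $\equiv^m_\mathit{apal+}$ yields precisely the statement: $M^m_0 \models \phi$ iff $M^\omega_0 \models \phi$ for every $\phi \in \langPapal$ with $d(\phi) \leq m$.

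The one step requiring work is verifying $M^m_0 \simeq^m M^\omega_0$, which I would read off from the chain geometry. The two models coincide except that the lower leg of $M^m$ stops at its edge $m'$, whereas in $M^\omega$ the state $m'$ has a further $b$-neighbour $(m+1)'$. Counting links along the unique path $0 \sim_b 0' \sim_a 1' \sim_b \cdots \sim_b m'$ (using the standing convention, recorded before Lemma~\ref{lemma.zero}, that the last lower-leg link is a $b$-link and $m \geq 2$ is even), the divergent state $m'$ sits at distance $m+1$ from the designated state $0$. Consequently every state reachable from $0$ in at most $m$ steps has the same atoms and the same $\sim_a$- and $\sim_b$-neighbours in both models. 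I would make this precise by letting $\bisimulation^n$, for $0 \leq n \leq m$, be the identity on the set of states within distance $n$ of $0$, and checking that the nested chain $\bisimulation^0 \supseteq \cdots \supseteq \bisimulation^m$ meets the atoms, forth and back clauses of Definition~\ref{n-bisimulation}: any $\sim_a$- or $\sim_b$-move from a state at distance $n \leq m-1$ lands at distance at most $n+1 \leq m$, and within that ball the two models offer identical neighbours, the discrepancy at $m'$ being reachable only at distance $m+1$.

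The main obstacle is nothing more than this bookkeeping --- confirming that the single edge where the models diverge lies strictly beyond the bisimulation depth, so that the back-and-forth game cannot exploit the truncation within $m$ rounds. This rests entirely on the distance computation above; once it is in place, the back-and-forth verification is routine, and the corollary follows by the one-line application of Lemma~\ref{lemma.three} described above.
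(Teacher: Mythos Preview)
Your proposal is correct and follows exactly the paper's approach: the paper justifies the corollary in a single line, ``As $M^m_0 \simeq^m M^\omega_0$, the following corollary is rather a special case of the previous lemma,'' and you simply spell this out in more detail.

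One small slip in your bookkeeping: your layered relations are indexed the wrong way. You set $\bisimulation^n$ to be the identity on states within distance $n$ of $0$, which gives $\bisimulation^0 \subseteq \cdots \subseteq \bisimulation^m$, contradicting both Definition~\ref{n-bisimulation} and the containment $\bisimulation^0 \supseteq \cdots \supseteq \bisimulation^m$ you yourself write. With your indexing, a pair such as $((m-1)',(m-1)') \in \bisimulation^m$ would have to send its neighbour $m'$ into $\bisimulation^{m-1}$, which contains only states at distance $\leq m-1$. The fix is to reverse the indices: let $\bisimulation^n$ be the identity on states at distance $\leq m-n$ from $0$. Then $(0,0) \in \bisimulation^m$, a move from a state at distance $\leq m-n-1$ lands at distance $\leq m-n$ and hence in $\bisimulation^n$, and the divergent state $m'$ at distance $m+1$ is never reached. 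The geometric content of your argument is unaffected.
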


\begin{theorem}\label{expressivity-s5}
    \logicPapal{} is not at least as expressive as \logicApal{} for multiple agents.
\end{theorem}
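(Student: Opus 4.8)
The plan is to exploit the \langApal{} formula $\Box(K_b K_a p \vel K_b \neg K_a p)$ as the separating formula and to derive a contradiction from the assumption that it has a \langPapal{} equivalent. First I would record what the preceding lemmas already give us about this formula on the two families of models: by Lemma~\ref{lemma.two} it is true in the infinite chain $M^\omega_0$, whereas by Lemma~\ref{lemma.one} it is false in every truncated chain $M^m_0$. Thus this single \langApal{} formula separates $M^\omega_0$ from each $M^m_0$, which is precisely the distinguishing power we need in the source language.

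Next I would argue by contradiction. Suppose some $\psi \in \langPapal$ were equivalent (on \classS) to $\Box(K_b K_a p \vel K_b \neg K_a p)$. Then $M^\omega_0 \models \psi$ and $M^m_0 \not\models \psi$ for every $m \in \Naturals$. Let $d = d(\psi)$ be the epistemic depth of $\psi$, a fixed finite number. I would then choose $m$ large enough --- concretely, an even $m \geq \max\{2, d\}$, so that the standing convention on the parity of the truncation point is respected --- and invoke Corollary~\ref{final}, which states that for any \langPapal{} formula of epistemic depth at most $m$ the two models $M^m_0$ and $M^\omega_0$ agree. Since $d(\psi) \leq m$, this yields $M^m_0 \models \psi$ iff $M^\omega_0 \models \psi$. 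But $M^\omega_0 \models \psi$, so $M^m_0 \models \psi$, contradicting $M^m_0 \not\models \psi$. Hence no such $\psi$ exists, and \logicPapal{} is not at least as expressive as \logicApal{}.

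From the point of view of assembling the theorem itself this is a short argument, because all the genuine work has been front-loaded into the lemmas. The real obstacle is not here but in establishing Corollary~\ref{final}, that is, Lemma~\ref{lemma.three}: that $k$-bisimilarity of submodels of $M^\omega$ implies agreement on \langPapal{} formulas of epistemic depth at most $k$. The delicate step there is the $\allppas$ case, where one must control quantification over \emph{all} positive announcements; this is exactly what Lemma~\ref{onlythreeannouncements} makes tractable, by showing that on $M^\omega$ every positive announcement has, up to bisimulation, one of only three effects (restricting to a $p$-state, to a $\neg p$-state, or leaving the component unchanged). Consequently, the only things I would need to check carefully at the level of the theorem are that the depth bound $m$ can always be met --- immediate, since $d(\psi)$ is finite --- and that the separating formula genuinely lies in \langApal{} while the assumed equivalent lies in \langPapal{}, so that the contradiction bears on the intended expressivity comparison.
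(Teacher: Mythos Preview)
Your proposal is correct and follows essentially the same route as the paper: you use the \langApal{} formula $\Box(K_b K_a p \vel K_b \neg K_a p)$, invoke Lemmas~\ref{lemma.one} and~\ref{lemma.two} to see that it distinguishes $M^\omega_0$ from each $M^m_0$, and then apply Corollary~\ref{final} to derive a contradiction from the existence of a \langPapal{} equivalent of bounded epistemic depth. The only cosmetic difference is that the paper sets $m = d(\phi)$ directly rather than choosing $m \geq d(\psi)$, but this is immaterial.
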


\begin{proof}
  Recall the $\langApal$ formula $\Box(K_b K_a p \vel K_b \neg K_a p)$ from Lemmas~\ref{lemma.one}~and~\ref{lemma.two}. 
  Let us assume that there is an equivalent formula $\phi \in \langPapal$ and the epistemic depth of this formula is $d(\phi) = m$. 
  We recall that the epistemic depth counts the number of stacked knowledge modalities, but ignores the arbitrary (positive) announcement modalities.
  
  Consider again the models $M^m_0$ and $M^\omega_0$ of Figure \ref{expressivity-s5i-1mm}. We have shown the following:
  \begin{enumerate}
    \item $M^m_0 \not \models \Box(K_b K_a p \vel K_b \neg K_a p)$ (Lemma \ref{lemma.one})
    \item $M^\omega_0 \models \Box(K_b K_a p \vel K_b \neg K_a p)$ (Lemma \ref{lemma.two})
    \item $M^m_0 \models \phi$ iff  $M^\omega_0 \models \phi$ (Corollary \ref{final})
  \end{enumerate} 
  The assumption that $\Box(K_b K_a p \vel K_b \neg K_a p)$ is equivalent to $\phi$ is in contradiction with these results. Therefore, no such equivalent $\phi$ exists.
\end{proof}

\subsection{APAL is not at least as expressive as APAL$^+$}

We wish to establish incomparability of \logicApal\ and \logicPapal, so it remains to show that \logicApal{} is not at least as expressive as \logicPapal{} for multiple agents. This we will do in the following Theorem \ref{expressivity-s52}, by, again, the standard method of providing two pointed epistemic models and a formula (in \langPapal) such that the models can be distinguished by that formula but cannot be distinguished by any formula in the other language (in \langApal). Before that theorem we will introduce the models used in its proof, present an intuitive example to illustrate the proof method, and introduce some lemmas to be used in that proof.

Consider models $M^l_0$ and $N^r_0$ in Figure \ref{fig.grappig}. Both are $a$-$b$-chains, and such that a variable $p$ is false in the evaluation point $0$ and the values of $p$ are swapped in adjoining states. However, the model $M^l$ terminates on the $a$-link side of the designated state $0$ in state $l$ (for \emph{left}) and is infinite on $b$-link side of $0$, whereas the model $N^r$ terminates on the $b$-link side of $0$ in state $r$ (for \emph{right}) and is infinite on the left.

Formally, let $l$ be a negative odd integer and let $r$ be a positive even integer, then the domain of $M^l$ is $\{ i \mid i \in \integers, i \geq l\}$. Relation $R_a$ in $M^l$ is the symmetric and reflexive closure of $\{ ({2i-1},{2i}) \mid i \in \integers, 2i-1 \geq l\}$, whereas $R_b$ is the symmetric and reflexive closure of $\{ ({2i}, {2i+1}) \mid i \in \integers, 2i > l\}$, and $V(p) = \{ {2i+1} \mid i \in \integers, 2i+1 \geq l \}$. Then, the domain of $N^r$ is $\{ i \mid i \in \integers, i \leq r\}$; the relations and valuation in $N^r$ are similarly defined as in $M^l$. We recall that $M^l$ and $N^r$ are $a$-$b$-chains. Both have a single edge.

\begin{figure}[ht]
 \begin{tikzpicture}[>=stealth',shorten >=1pt,auto,node distance=5em,thick]

          \node (s) {$\{p\}$};
          \node (sr) [right of=s] {$\{\}$};
          \node (srr) [right of=sr] {\color{white}$\{\}$};
          \node (sl) [label=above:{$0$}] [left of=s] {\underline{$\{\}$}};
           \node (sll) [left of=sl] {$\{\}$};
          \node (slll)  [label=above:{$l$}] [left of=sll] {$\{p\}$};
         \draw (slll) edge node {$\agentA$} (sll);
         \draw[dashed]  (sll) edge (sl);  % node {$> d(\phi)$} 
           \draw  (sl) edge node {$\agentB$} (s);
           \draw  (s) edge node {$a$}  (sr);
           \draw[dashed]  (sr) edge (srr);

          \node (bs) [below of=s,node distance=3em]{$\{\}$};
          \node (bsr)  [right of=bs] {$\{p\}$};
          \node (bsrr) [label=below:{$r$}] [right of=bsr] {$\{\}$};
          \node (bsl) [label=below:{$0$}] [left of=bs] {\underline{$\{\}$}};
          \node (bsll) [left of=bsl] {$\{p\}$};
          \node (bslll) [left of=bsll] {\color{white} $\{p\}$};
         \draw[dashed] (bslll) edge (bsll);
         \draw  (bsll) edge  node {$a$} (bsl);
           \draw[dashed]  (bsl) edge (bs);  % node {$> d(\phi)$} 
           \draw  (bs) edge node {$\agentB$} (bsr);
           \draw  (bsr) edge node {$\agentA$} (bsrr);
 %         \draw[dashed]  (bsrr) edge (bsrrr);
 
\node (ll) [left of=slll] {\color{white} $\{p\}$};
%\node (rr) [right of=srr] {\color{white} $\{p\}$};
\node (bll) [left of=bslll] {\color{white} $\{p\}$};
\node (brr) [right of=bsrr] {\color{white} $\{p\}$};
\node (m) [left of=slll] {$M_0^l$:};
\node (n) [left of=bslll] {$N_0^r$:};

%\draw[dashed] (srr) edge (rr);
%\draw[dashed] (bll) edge (bslll);
       \end{tikzpicture}
\caption{Models used in the proof of Theorem \ref{expressivity-s52}}
\label{fig.grappig}
\end{figure}
 
\begin{figure}[ht]
 \begin{tikzpicture}[>=stealth',shorten >=1pt,auto,node distance=5em,thick]

          \node (s) {$\{p\}$};
          \node (sr) [right of=s] {$\{\}$};
          \node (srr) [right of=sr] {\color{white} $\{p\}$};
          \node (sl) [label=above:{$0$}] [left of=s] {\underline{$\{\}$}};
          \node (sll) [label=above:{$-1$}] [left of=sl] {$\{p\}$};
          \node (slll) [left of=sll] {\color{white} $\{\}$};
 %        \draw[dashed] (slll) edge (sll);
         \draw  (sll) edge node {$\agentA$} (sl);
           \draw  (sl) edge node {$\agentB$} (s);
           \draw  (s) edge node {$\agentA$} (sr);
           \draw[dashed]  (sr) edge (srr);

          \node (bs) [below of=s,node distance=3em]{$\{p\}$};
          \node (bsr) [label=below:{$2$}] [right of=bs] {$\{\}$};
          \node (bsrr) [right of=bsr] {\color{white} $\{p\}$};
          \node (bsl) [label=below:{$0$}] [left of=bs] {\underline{$\{\}$}};
          \node (bsll) [left of=bsl] {$\{p\}$};
          \node (bslll) [left of=bsll] {\color{white} $\{\}$};
         \draw[dashed] (bslll) edge (bsll);
         \draw  (bsll) edge node {$\agentA$} (bsl);
           \draw  (bsl) edge node {$\agentB$} (bs);
           \draw  (bs) edge node {$\agentA$} (bsr);
 %          \draw  (bsr) edge node {$\agentB$} (bsrr);
 %         \draw[dashed]  (bsrr) edge (bsrrr);
\node (corr) [left of=slll] {\color{white} $\{p\}$};
\node (bcorr) [left of=bslll] {\color{white} $\{p\}$};
\node (m) [left of=slll] {$M^{-1}_0$:};
\node (n) [left of=bslll] {$N^2_0$:};
       \end{tikzpicture}
\caption{An example for $l=-1$ and $r=2$}
\label{fig.grappig2}
\end{figure}

In order to informally explain the method in the subsequent proof, first consider models $M^{-1}_0$ and $N^{2}_0$ in Figure \ref{fig.grappig2}. In $M^{-1}$ but not in $N^2$, from the evaluation point $0$, the $a$-link is closer to the edge than the $b$-link. The formula $\allppas (L_b p \imp L_a p)$ formalizes this property in \langPapal. 

In $M^{-1}$, the prefixes of this chain are defined by the (positive) formulas: $K_b p$ (for $\{-1\}$), $K_a (\neg p \vel K_b p)$ (for $\{-1,0\}$), $K_b (p \vel K_a (\neg p \vel K_b p))$ (for $\{-1,0,1\}$), etc. As we build these prefixes from the left, the $a$-link from $0$ is included before the $b$-link from $0$ is included. There are yet other positively definable subsets containing $0$, such as the $\neg p$-states. But that cuts off both links. Differently said, if the $b$-link from state $0$ to state $1$ is included then the $a$-link from state $0$ to state $-1$ is included. And both have a different value of $p$ than in $0$. This gives us $L_b p \imp L_a p$. And therefore, $M^{-1}_0 \models \allppas (L_b p \imp L_a p)$. 

Now look at $N^2$. There, similar reasoning makes us conclude that the $b$-link is always included before the $a$-link. So we can make a positive announcement, namely $K_b (\neg p \vel K_a (p \vel K_b \neg p))$, resulting in the restriction to $\{0,1,2\}$, after which $L_b p$ is true but $L_a p$ is false. So $N^2_0 \not \models \allppas (L_b p \imp L_a p)$. 

Of course the models $M^{-1}_0$ and $N^2_0$ can be easily distinguished in \langApal{} too. They can even be distinguished in \langMl, without \logicApal{} quantifiers, for example by a formula expressing that the distance to the edge is $1$ in $M^{-1}_0$ but more than $1$ in $N^2_t$. As $K_b p$ distinguishes state $-1$ in $M^{-1}$, this formula is $L_a K_b p$. We note that $M^{-1}_0 \models L_a K_b p$ whereas $N^2_0 \not\models L_a K_b p$. But, tellingly, you need to have that distance explicitly in the formula, unlike in the \langPapal{} formula. And $d(L_a K_b p) = 2$, larger than $d(\allppas (L_b p \imp L_a p)) = 1$. 

Having prepared the ground for the proof, we now present Theorem \ref{expressivity-s52} (at the end of this section) and preceding lemmas.

\begin{lemma} \label{lemma2.zero}
The positively definable restrictions of $M^l$ are: all states, the $p$-states, the $\neg p$-states, any finite prefix of the $a$-$b$-chain $M^l$, and the union or intersection of any of the previous.
\end{lemma}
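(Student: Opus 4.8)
The plan is to prove the two inclusions separately: first that each of the listed sets is the denotation of a positive formula (and that the whole family is closed under union and intersection), and then that these are the \emph{only} positively definable restrictions.

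For the easy inclusion, note that $\langMlPlus$ is closed under $\land$ and $\lor$, so the family of positively definable sets is automatically closed under $\cap$ and $\cup$; it therefore suffices to define the generators. The set of all states is defined by $p \lor \neg p$, the $p$-states by $p$, and the $\neg p$-states by $\neg p$. The crux is the finite prefixes. Since $M^l$ has a single edge at $l$ and the links emanating from it alternate $a,b,a,\dots$, I would define prefix formulas $\phi_0, \phi_1, \dots$ inductively by $\phi_0 := K_b p$ and $\phi_{n+1} := K_{c}(\lambda \lor \phi_n)$, where the modality $c$ is $a$ when $n$ is even and $b$ when $n$ is odd (matching the $(n{+}1)$-th link of the chain), and the literal $\lambda$ is $p$ or $\neg p$ according to the $p$-value of the state $l+n+1$. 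Each $\phi_n \in \langMlPlus$, and one shows by induction that $\interpretation[M^l]{\phi_n} = \{l, l+1, \dots, l+n\}$: the base case uses that the edge $l$ is a $\sim_b$-singleton at which $p$ holds, so $K_b p$ is satisfied exactly at $l$; the inductive step uses that to ``reach past'' an even-distance ($p$-true) state one genuinely needs $\phi_n$, since $\neg p$ does not cover those states. An arbitrary finite prefix $\{l,\dots,l+n\}$ is then $\interpretation[M^l]{\phi_n}$, and intersecting with $p$ or $\neg p$ recovers the finite one-parity sets, so every member of the stated family is positively definable.

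For the converse I would argue via refinement. By Lemma~\ref{positive-announcements-refinement-closed} the denotation $\interpretation[M^l]{\phi}$ of any $\phi \in \langMlPlus$ is closed under refinements. The key structural fact, to be proved in the style of Lemma~\ref{extrafortim}, is that the relation $\refinement = \{(s,t) : s \text{ and } t \text{ have the same } p\text{-value and } \mathrm{dist}(t) \le \mathrm{dist}(s)\}$, where $\mathrm{dist}(x)$ is the distance of $x$ from the edge $l$, is a refinement on $M^l$; equivalently, $M_s \succeq M_t$ holds precisely for states of equal parity with $t$ no farther from the edge than $s$. Granting this, closure under refinements forces $\interpretation[M^l]{\phi}$ to be downward closed within each of the two parity ($p$-value) classes: if $s \in \interpretation[M^l]{\phi}$ then every same-parity state at least as close to the edge also lies in it. A set that is downward closed within each parity class is exactly the union of a (finite, empty, or full) prefix of the $p$-states with such a prefix of the $\neg p$-states, and this is visibly a union and intersection of all-states, the $p$-states, the $\neg p$-states, and finite prefixes. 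Hence $\interpretation[M^l]{\phi}$ belongs to the stated family, completing the proof.

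I expect the main obstacle to be the structural refinement lemma. Verifying that $\refinement$ satisfies the \textbf{back}-$a$ and \textbf{back}-$b$ conditions requires a case analysis on the parity of the distance and on whether a neighbour lies toward or away from the edge, with the genuinely delicate cases occurring at and immediately next to the edge $l$ (where the $b$-class is a singleton and a forward $a$-neighbour must be matched using the extra ``room'' supplied by $\mathrm{dist}(s) \ge \mathrm{dist}(t)$). A secondary bookkeeping point is to confirm that the combinatorial description ``downward closed within each parity class'' coincides exactly with the family generated by the four listed types of sets under $\cup$ and $\cap$, taking care that restrictions are non-empty so that the empty intersection of the $p$-states with the $\neg p$-states is excluded.
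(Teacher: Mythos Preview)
Your proposal is correct and follows essentially the same approach as the paper. The refinement relation you describe, $\{(s,t): \text{same }p\text{-value},\ \mathrm{dist}(t)\le\mathrm{dist}(s)\}$, is exactly the paper's $\{(i+2j,i)\mid j\in\Naturals,\ i\ge l\}$ in different notation, and your prefix formulas $\phi_n$ coincide with the paper's $\delta^l_i$; the paper also argues via refinement-closure together with explicit positive definitions of the prefixes, although it organises the two inclusions less explicitly than you do. One small remark: for the converse inclusion you only need that your $\refinement$ \emph{is} a refinement (so that refinement-closed implies $\refinement$-closed, hence downward closed within each parity class); the maximality claim (``precisely'') is not required for the argument to go through, though the paper states and briefly justifies it as well.
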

\begin{proof}
The relation $\bisimulation := \{({i+2j},i) \mid j \in \Naturals, i \in \integers, i \geq l \}$ is the maximal refinement on $M^l$. It is a refinement because $M^l_{{i+2j}} \succeq M^l_{i}$ iff $M^l_{i}$ is isomorphic to a submodel of $M^l_{{i+2j}}$. A submodel is the most typical example of the structural loss represented by a refinement. The relation $\bisimulation$ is also maximal. We cannot pair a $p$-state to a larger $p$-state, such as in $(l, l+2)$: {\bf back}-$b$ would then fail: from $l+2$ we can reach a $\neg p$-state via $l+1 \sim_b l+2$, but we cannot reach a $\neg p$-state by a $b$-link from state $l$. Similarly we cannot have any other pair where the second argument is a state named with a larger number than the first argument, by iterating {\bf back} steps. 

Given $\bisimulation$, the subsets of the domain of $M^l$ that are closed under refinement are: all states, the $p$-states, the $\neg p$-states, and the finite prefixes of the chain $M^l$. To this we further add the union or intersection of any of the previous, where we note that the union of two prefixes is the longer prefix and the intersection of two prefixes is the smaller prefix. This means that also closed under refinement are: the $p$-states of any finite prefix, the $\neg p$-states of any finite prefix, and the union of a prefix of the chain with the set of $p$-states, or $\neg p$-states, of a larger prefix (such as the set $\{-1,0,1,2,3,5,7,9\}$).

We now show that all refinement closed subsets of the domain of $M^l$ are positively definable. This is not evident, as the domain of $M^l$ is not finite (so Lemma~\ref{refinement-closed-positive-announcements} does not apply). We define: $\delta^l_l := K_b p$, $\delta^l_{i+1} := K_a (\neg p \vel \delta^l_i)$ for $i$ an odd natural number, and $\delta^l_{i+1} := K_b (p \vel \delta^l_i)$ for $i$ an even natural number. The other positive formulas defining refinement closed subsets are conjunctions or disjunctions of the previous; none of those however will play a role in the continuation.
\end{proof}

The argument is the same for the model $N^r$. In this case relation $\bisimulation' := \{({i-2j},i) \mid j \in \Naturals, i \in \integers, i \leq r \}$ is the maximal refinement on $N^r$, and any $N^r_{i}$ is isomorphic to a submodel of $N^r_{{i-2j}}$. The positive formulas defining the prefixes are now defined as: $\delta^r_r := K_b p$, $\delta^r_{i-1} := K_a (\neg p \vel \delta^r_i)$ for $i$ an even natural number, and $\delta^r_{i-1} := K_b (p \vel \delta^r_i)$ for $i$ an odd natural number. 

\begin{corollary} \label{lemma2.zero2}
The  positively definable restrictions of $N^r$ are: all states, the $p$-states, the $\neg p$-states, any finite prefix of the $a$-$b$-chain $N^r$, and the union or intersection of any of the previous.
\end{corollary}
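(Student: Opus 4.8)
The plan is to run the proof of Lemma~\ref{lemma2.zero} essentially verbatim, exploiting that $N^r$ is the reflection of the chain $M^l$ about the evaluation point, with the two ends exchanged. The only point requiring genuine care is that the reflection flips the parity of the edge and hence the roles of the literals $p$ and $\neg p$. Concretely I would proceed in the same three movements: first pin down the maximal refinement, then read off the refinement-closed subsets, and finally exhibit a positive formula for each such subset.

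For the first movement I would verify that $\bisimulation' := \{(i-2j,i) \mid j\in\Naturals,\ i\in\integers,\ i\leq r\}$ is the maximal refinement on $N^r$, just as $\bisimulation$ was for $M^l$. Each $N^r_i$ is isomorphic to a submodel of $N^r_{i-2j}$ (truncate $2j$ states off the open, downward end), and a shift by $2j$ preserves parity and therefore the value of $p$, so the \textbf{atoms} and \textbf{back} clauses of Definition~\ref{refinement} hold for every pair in $\bisimulation'$. Maximality is the same \textbf{back}-iteration argument as in Lemma~\ref{lemma2.zero}: one cannot pair a state to a same-$p$ state nearer the open end, since the step to an adjacent opposite-$p$ neighbour that is available from the nearer state cannot be matched from the farther one.

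For the second and third movements, closure under $\bisimulation'$ forces that whenever a state is kept, every $\bisimulation'$-larger (more refined) state is kept too; reading this off yields that the refinement-closed subsets are exactly all states, the $p$-states, the $\neg p$-states, the finite prefixes of $N^r$ (the initial segments containing the edge $r$), and arbitrary unions and intersections of these, as claimed. The only substantive work, exactly as in Lemma~\ref{lemma2.zero}, is positive definability, which cannot be reduced to Lemma~\ref{refinement-closed-positive-announcements} because the domain of $N^r$ is infinite. Here the reflection matters: since $r$ is a positive \emph{even} integer the edge $r$ is a $\neg p$-state (unlike the $p$-state edge $l$ of $M^l$), so the base witness isolating $\{r\}$ is $\delta^r_r := K_b \neg p$ (true only at $r$, whose sole $b$-successor is itself, every other state having a $b$-partner of opposite $p$-value), and the inward recursion must alternate the literal starting from $p$, namely $\delta^r_{i-1} := K_a(p \vel \delta^r_i)$ when $i$ is even and $\delta^r_{i-1} := K_b(\neg p \vel \delta^r_i)$ when $i$ is odd. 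A denotation computation in the style of Lemma~\ref{lemma.zero} then confirms that $\delta^r_i$ defines the prefix from $r$ down to $i$, with each $\delta^r_i$ positive by Definition~\ref{positive-formula}; finite conjunctions and disjunctions of these together with $p$ and $\neg p$ capture every remaining refinement-closed set.

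Combining this with Lemma~\ref{positive-announcements-refinement-closed}, which gives the converse that every positively definable restriction is refinement-closed, the two collections coincide and the statement follows. I expect the parity bookkeeping to be the main (indeed the only) obstacle: choosing the edge literal correctly, namely $K_b\neg p$ rather than the formula one would naively transcribe from $M^l$, and keeping the alternation of $p$ and $\neg p$ in phase with the chain as the prefix grows toward the open end. Everything else is a mechanical mirror image of Lemma~\ref{lemma2.zero}.
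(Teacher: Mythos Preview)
Your proposal is correct and follows exactly the route the paper intends: the corollary is obtained by mirroring the argument for $M^l$, replacing the refinement $\bisimulation$ by $\bisimulation' := \{(i-2j,i)\mid j\in\Naturals,\ i\leq r\}$ and re-deriving the prefix formulas from the new edge.

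In fact your parity bookkeeping is sharper than the paper's own. The paper's displayed recursion just before the corollary sets $\delta^r_r := K_b p$ with $\delta^r_{i-1} := K_a(\neg p \vee \delta^r_i)$ for $i$ even and $\delta^r_{i-1} := K_b(p \vee \delta^r_i)$ for $i$ odd; but since $r$ is even and hence a $\neg p$-state with singleton $b$-class, $K_b p$ is false at $r$ (indeed everywhere in $N^r$), so that base case cannot be right. Your choice $\delta^r_r := K_b\neg p$ together with the swapped literals in the recursion is the correct one, and it is precisely what the paper itself uses in its worked example for $N^2$, namely $K_b(\neg p \vee K_a(p \vee K_b\neg p))$, a few paragraphs earlier. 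So the approach is the same; you have simply carried out the ``only obstacle'' you anticipated more carefully than the source.
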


\begin{lemma} \label{lemma2.one}
$M^l_0 \models \allppas (L_b p \imp L_a p)$
\end{lemma}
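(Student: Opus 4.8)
The plan is to unfold the semantics of $\allppas$ and then reduce the whole claim to a single refinement fact about the maximal refinement on $M^l$ that was computed in the proof of Lemma~\ref{lemma2.zero}. By the semantics of $\allppas$, it suffices to show that for every positive formula $\psi \in \langMlPlus$ with $M^l_0 \models \psi$, the restriction satisfies $(M^l|\psi)_0 \models L_b p \imp L_a p$. By Lemma~\ref{positive-announcements-refinement-closed}, the denotation $\interpretation[M^l]{\psi}$ is closed under refinements, and it contains the state $0$ since $\psi$ is true there.

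Next I would read off the relevant one-step structure at state $0$. The $b$-equivalence class of $0$ is $\{0,1\}$ and its $a$-equivalence class is $\{-1,0\}$, while $p$ is false at $0$ and true at both $-1$ and $1$. Consequently, in the restriction $M^l|\psi$ we have $(M^l|\psi)_0 \models L_b p$ precisely when $1 \in \interpretation[M^l]{\psi}$, and $(M^l|\psi)_0 \models L_a p$ precisely when $-1 \in \interpretation[M^l]{\psi}$ (the reflexive witness $0$ itself carries $\neg p$, so it is irrelevant). Thus the implication $L_b p \imp L_a p$ reduces to the purely set-theoretic statement: if the announcement preserves $1$, then it also preserves $-1$.

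The crucial step is to observe that $-1$ is a refinement of $1$ in $M^l$. Indeed, taking $i = -1$ and $j = 1$ in the maximal refinement $\bisimulation = \{(i+2j, i) \mid j \in \Naturals, \ i \in \integers, \ i \geq l\}$ of Lemma~\ref{lemma2.zero}, and using that $-1 \geq l$ (since $l$ is a negative odd integer, so $l \leq -1$), we obtain $(1,-1) \in \bisimulation$, i.e.\ $M^l_1 \simulates M^l_{-1}$. Since $\interpretation[M^l]{\psi}$ is closed under refinements and $1 \in \interpretation[M^l]{\psi}$, closure immediately yields $-1 \in \interpretation[M^l]{\psi}$, which completes the argument. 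One could alternatively enumerate the positively definable restrictions listed in Lemma~\ref{lemma2.zero} and check each, but invoking refinement-closure directly avoids that case analysis. I expect the only delicate points to be getting the direction of the refinement right---$-1$, being closer to the left edge, is the refinement of the larger-indexed $1$, not the other way round---and confirming that the matching $a$- and $b$-neighbours of $0$ both carry $p$, so that $L_a p$ and $L_b p$ genuinely track the preservation of $-1$ and $1$ respectively; the parity and index bookkeeping is where a slip is most likely.
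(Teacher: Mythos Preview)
Your argument is correct. It differs slightly from the paper's own proof in how Lemma~\ref{lemma2.zero} is used. The paper invokes the \emph{classification} of positively definable restrictions established there and then does a case split: either the connected component of $0$ in $M^l|T$ is a prefix containing $0$ (in which case the $a$-neighbour $-1$ is included whenever the $b$-neighbour $1$ is, because prefixes grow from the left edge), or $0$ sits in a singleton component (where both $L_ap$ and $L_bp$ fail). You instead bypass the enumeration and use directly the \emph{maximal refinement} computed inside the proof of Lemma~\ref{lemma2.zero}, together with Lemma~\ref{positive-announcements-refinement-closed}, to conclude $1\in T \Rightarrow -1\in T$. Both routes rest on the same underlying structural fact, but yours isolates exactly the single pair $(1,-1)\in\bisimulation$ that is needed and avoids the case analysis; the paper's version, on the other hand, makes the geometric picture (prefixes versus disconnected pieces) more explicit. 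Your care about the direction of the refinement and the valuation at $-1,0,1$ is well placed and handled correctly.
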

\begin{proof}
Let $T \subseteq \domain(M)$ be positively definable and such that $0 \in T$. Then either $M^l|T$ is a prefix of $M^l$ containing $0$, or $M^l|T$ consists of disconnected parts of which $M^l|\{0\}$ is a singleton part. In the second case, from $(M^l|\{0\})_0 \models \neg L_a p$ and $(M^l|\{0\})_0 \models \neg L_b p$ follows $(M^l|\{0\})_0 \models L_a p \imp L_b p$. In the first case, as $M^l|T$ is a prefix of $M$ containing $0$, the $a$-link to $-1$ (where $-1$ may be $l$) must always be included in that restriction if the $b$-link to $1$ is included. Therefore $(M^l|T)_0 \models L_b p \imp L_a p$. From $(M^l|T)_0 \models L_b p \imp L_a p$ for all $T$ containing $0$, and the observation that all such $T$ are positively definable (Lemma~\ref{lemma2.zero}), it follows that $M^l_0 \models \allppas (L_b p \imp L_a p)$.
\end{proof}

\begin{lemma} \label{lemma2.two}
$N^r_0 \not\models\allppas (L_b p \imp L_a p)$
\end{lemma}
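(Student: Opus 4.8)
The plan is to prove the equivalent statement $N^r_0 \models \someppas(L_b p \et \neg L_a p)$ by producing a single positive announcement that witnesses the quantifier $\someppas$. The guiding observation is that at the designated state $0$ (where $p$ is false) the $a$-equivalence class is $\{-1,0\}$ and the $b$-equivalence class is $\{0,1\}$, with $p$ true at both $-1$ and $1$; so in $N^r_0$ both $L_a p$ and $L_b p$ hold, and I want a positive announcement whose restriction keeps the $b$-neighbour $1$ but removes the $a$-neighbour $-1$, so that afterwards $L_b p$ survives while $L_a p$ is destroyed.

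Concretely I would announce the formula defining the smallest finite prefix of the chain $N^r$ that contains $0$, namely $T=\{0,1,\dots,r\}$; this prefix contains the edge $r$ and, as $r\geq 2$, also the state $1$. By Corollary~\ref{lemma2.zero2} every finite prefix of $N^r$ is positively definable, so there is a $\psi\in\langMlPlus$ with $\interpretation[N^r]{\psi}=T$ (for $r=2$ this is exactly the formula $K_b(\neg p\vel K_a(p\vel K_b\neg p))$ used in the informal discussion above). Since $0\in T$ we have $N^r_0\models\psi$, so the announcement is truthful and $N^r|\psi = N^r|T$. In the restriction $(N^r|T)_0$ the only surviving $a$-successor of $0$ is $0$ itself (because $-1\notin T$), and $p$ is false there, so $(N^r|T)_0\not\models L_a p$; whereas the $b$-successor $1$ survives, carrying $p$, so $(N^r|T)_0\models L_b p$. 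Hence $(N^r|T)_0\models L_b p\et\neg L_a p$.

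Putting this together, $N^r_0\models\psi$ and $(N^r|\psi)_0\not\models L_b p\imp L_a p$ give $N^r_0\models\announceE{\psi}\neg(L_b p\imp L_a p)$, and therefore $N^r_0\models\someppas\neg(L_b p\imp L_a p)$, i.e.\ $N^r_0\not\models\allppas(L_b p\imp L_a p)$, as required. The one step carrying the real content is the choice of $\psi$: the submodel restriction to $T$ must be realisable by a \emph{positive} announcement and not merely by an arbitrary restriction, which is precisely what Corollary~\ref{lemma2.zero2} supplies through the refinement analysis underlying Lemma~\ref{lemma2.zero}. Once $\psi$ is in hand, the remainder reduces to the two one-line computations about the $a$- and $b$-classes of $0$, so I expect no genuine obstacle.
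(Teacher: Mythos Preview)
Your proposal is correct and follows essentially the same argument as the paper: restrict to the prefix $T=\{0,1,\dots,r\}$ by a positive formula, observe that this kills the $a$-neighbour $-1$ of $0$ while keeping the $b$-neighbour $1$, and conclude $\someppas(L_b p\et\neg L_a p)$. The only cosmetic difference is that the paper names the witnessing positive formula explicitly as $\delta^r_0$ (built by the recursion given just after Lemma~\ref{lemma2.zero}) rather than appealing to Corollary~\ref{lemma2.zero2} for its existence.
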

\begin{proof}
The prefix $T = \{0, \dots, r \}$ of $N^r$ is positively definable by $\delta^r_0\in\langMlPlus$ (see above). We now have that $(N^r|\delta^r_0)_0 \models L_b p$, because $0 \sim_b 1$ and $(N^r|\delta^r_0)_{1} \models p$, but $(N^r|\delta^r_0)_0 \not\models L_a p$, because state ${-1}$ (and any other state $i < -1$) has been eliminated by the announcement of $\delta^r_0$. Therefore, $(N^r|\delta^r_0)_0 \models L_b p \et\neg L_a p$. From that and $N^r_0 \models \delta^r_0$ (as $0 \in T$) it follows that $N^r_0 \models \dia{\delta^r_0} (L_b p \et \neg L_a p)$. Therefore $N^r_0 \models \someppas (L_b p \et \neg L_a p)$, i.e., $N^r_0 \not\models\allppas (L_b p \imp L_a p)$.
\end{proof}

The following lemma is very crucial. Note that the restrictions below can be for any subset of the domain, not necessarily positively definable.

\begin{lemma} \label{lemma2.twofifty}
Given are restricted models $M$ of $M^l$ and $N$ of $N^r$, and $i,j \in \Naturals$ with $i \in \domain(M)$ and $j \in \domain(N)$. If $M_i \simeq^n N_j$, then for all $\psi\in\langMl$ such that $M_i \models \psi$ there is a $\psi'\in\langMl$ such that $(M|\psi)_i \simeq^n (N|\psi')_j$, and for all $\psi'\in\langMl$ such that $N_j \models \psi'$ there is a $\psi\in\langMl$ such that $(M|\psi)_i \simeq^n (N|\psi')_j$.
\end{lemma}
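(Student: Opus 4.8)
The plan is to show that any $\langMl$-announcement made in one chain can be matched by an $\langMl$-announcement in the other so that the two restrictions remain $n$-bisimilar, exploiting that $n$-bisimilarity only ``sees'' a submodel up to modal distance $n$ from the evaluation point. I will prove the first implication in detail; the second is entirely symmetric.

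First I would localise the problem. Given the $n$-bisimulation $\bisimulation^0 \supseteq \dots \supseteq \bisimulation^n$ witnessing $M_i \simeq^n N_j$, I observe (from Definition~\ref{n-bisimulation}) that a state removed at distance $d$ from $i$ only affects the forth/back clause at its neighbour one step closer, which is examined at level $n-d+1$; this is vacuous once $d>n$, so states at distance greater than $n$ are irrelevant to $n$-bisimilarity. Since $M$ is a restriction of the $a$-$b$-chain $M^l$, the connected component of $i$ in $M|\psi$ is again a sub-chain, and $M_i \models \psi$ guarantees that $i$ survives; intersected with the distance-$n$ ball around $i$ this component is an interval $[i-k_-,\,i+k_+]$ with $k_\pm \le n$, each of whose two boundaries is either an edge of $M$ or a point where $\psi$ cuts the chain. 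This interval, together with the $p$-values and link-types along it, is the only data about $M|\psi$ that matters.

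Next I would transport this data to $N$ through $\bisimulation^n$. In an $a$-$b$-chain the clauses \textbf{forth}-$a$/\textbf{back}-$a$ (per agent) force $\bisimulation^n$ to match the state at distance $d\le n$ from $i$ in a given $a$/$b$-direction with the state at the same distance from $j$ in the corresponding direction, preserving the value of $p$; in particular $M$ has a dead end (an edge, or a first $\psi$-removed state) at distance $d\le n$ in some direction exactly when $N$ has a matching state or dead end there. I then define $\psi'$ to realise the image interval: I take $\psi'$ to be the distinguishing formula of the \emph{finite} set of states of $N$ corresponding to $[i-k_-,\,i+k_+]$. Such a $\psi' \in \langMl$ exists because $N$ is a restriction of $N^r$, which is bounded above by $r$, so the component of $j$ in $N$ always retains a top edge; that edge is named by a formula of the shape $K_b p$ (as in Corollary~\ref{lemma2.zero2}, cf.\ Lemma~\ref{lemma2.zero}), whence Lemma~\ref{lem:distinguishingFormula} (finite component) or Lemma~\ref{lemma.dist} (one-edged infinite component) makes every finite subset distinguishable. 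Announcing $\psi'$ keeps exactly the matched states and so cuts $N$ at precisely the distances at which $\psi$ cuts $M$. Finally I verify $(M|\psi)_i \simeq^n (N|\psi')_j$ by restricting each $\bisimulation^m$ to the surviving states: the forth/back clauses transfer because a neighbour survives in $M|\psi$ iff the matched neighbour survives in $N|\psi'$, and the base clause needs only atom agreement, which $\bisimulation^0$ already supplies.

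I expect the main obstacle to be the definability step: guaranteeing, uniformly over \emph{arbitrary} restrictions $M,N$, that the cut dictated by $\psi$ can actually be named by an $\langMl$ formula in $N$. The delicate points are (i) that the relevant component always retains an edge carrying a distinguishing formula --- here boundedness of $M^l$ below and of $N^r$ above is essential, since a restriction may destroy the original edge $l$ or $r$ yet the component stays one-edged --- and (ii) that the distance-and-parity correspondence supplied by $\bisimulation^n$ is faithful enough that cutting ``at distance $k_++1$'' in $N$ produces exactly the matched dead end, including the borderline case $d=n$ where the affected forth/back sits at level $1$ and its target is checked only for atoms. Once these are secured, confirming that the restricted relations form an $n$-bisimulation is routine, and the converse direction follows by interchanging the roles of $M^l$ and $N^r$.
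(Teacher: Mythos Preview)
Your proposal is correct and is essentially the same argument as the paper's: trim $(M|\psi)_i$ to its $n$-neighbourhood, observe this is a finite sub-chain of length at most $2n+1$, transport it along the assumed $n$-bisimulation into $N$, and carve out the matching interval by an $\langMl$ distinguishing formula obtained from the (necessarily present) edge of the relevant component of $N$, invoking Lemma~\ref{lemma.dist}; then conclude by transitivity of $\simeq^n$. Your treatment of the definability step is in fact more careful than the paper's one-line appeal to Lemma~\ref{lemma.dist}: you correctly note that the restriction $N$ need not contain the original edge $r$, but that any connected component of $N$ is still bounded above and hence one-edged, so a distinguishing formula for the new edge (of shape $K_b p$ or $K_b\neg p$ or similar) still exists and Lemma~\ref{lemma.dist} applies.
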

\begin{proof}

Given $\psi\in\langMl$ with $M_i \models \psi$, let $M'_i$ be obtained by restricting $(M|\psi)_i$ to states at most $n$ steps, on either side, from $i$, and omitting components disconnected from $i$. We then have that $M'_i \simeq^n (M|\psi)_i$, and that $M'_i$ is a finite chain of length at most $2n+1$. We recall that any finite subset in $N^r$ is distinguishable in $\langMl$, using the distance from endpoint $r$ (see Lemma~\ref{lemma.dist}). Similarly, any finite subset in a connected part of $N$ is distinguishable in $\langMl$ from its complement in that part (which again follows from  Lemma~\ref{lemma.dist} or otherwise from Lemma~\ref{lem:distinguishingFormula}). So, as $M' \subseteq M$ and $M_i \simeq^n N_j$, there is a $\psi'\in\langMl$ and a finite $N' \subseteq N$ such that $N'_j \simeq (N|\psi')_j$ (i.e., unbounded) and $M'_i \simeq^n N'_j$. From that and $M'_i \simeq^n (M|\psi)_i$ it follows that $(M|\psi)_i \simeq^n (N|\psi')_j$. The proof in the other direction, assuming a $\psi'\in\langMl$ such that $N_j \models \psi'$, is similar.
\end{proof}
It is important to note that in the above proof the epistemic depths $d(\psi)$ and $d(\psi')$ are not related to $n$: they are arbitrary and therefore can be larger than $n$.

\begin{lemma} \label{lemma2.three}
Let $M \subseteq M^l$, $N \subseteq N^r$, $i,j \in \Naturals$ with $i \in \domain(M)$ and $j \in \domain(N)$, and $n \in \Naturals$: if $M_i \simeq^n N_j$, then $M_i \equiv^n_{\mathit apal} N_j$.
\end{lemma}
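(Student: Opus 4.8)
The plan is to prove the equivalent proposition, stated with the formula declared before the models (exactly as in the proof of Lemma~\ref{lemma.three}): for every $\phi \in \langApal$, for all $M \subseteq M^l$, $N \subseteq N^r$, $i \in \domain(M)$, $j \in \domain(N)$ and $n \in \Naturals$ with $d(\phi) \leq n$, if $M_i \simeq^n N_j$ then $M_i \models \phi$ iff $N_j \models \phi$. The argument is by induction on the structure of $\phi$. The cases for atoms, $\neg$ and $\wedge$ are routine (the atomic case uses that $M_i \simeq^n N_j$ implies $M_i \simeq^0 N_j$, hence the {\bf atoms} clause), and the case $K_a\phi$ is identical to the corresponding case in Lemma~\ref{lemma.three}: from $d(K_a\phi) \leq n$ we get $d(\phi) \leq n-1$, and the bounded forth/back clauses together with the induction hypothesis at depth $n-1$ give the biconditional. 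Throughout, I would note that restrictions of submodels of $M^l$ (resp.\ $N^r$) are again submodels, so that the induction hypothesis always applies to the auxiliary pointed models that arise.

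For the explicit announcement $[\psi]\phi$, where now $\psi \in \langApal$, I reuse the relation-building construction from the $[\psi]\phi$ case of Lemma~\ref{lemma.three}. Writing $d(\psi) = a$ and $d(\phi) = b$ with $a+b \leq n$, the induction hypothesis applied to $\psi$ at depth $a$ (valid since $M_i \simeq^n N_j$ implies $M_i \simeq^a N_j$) gives $M_i \models \psi$ iff $N_j \models \psi$; if neither holds then $[\psi]\phi$ is vacuously true in both models. If both hold, I build the chain $\qisim^\ell = \{(s',t') \in \bisimulation^{n-\ell} \mid M_{s'}\models\psi\}$ to witness $(M|\psi)_i \simeq^b (N|\psi)_j$, and then conclude via the induction hypothesis on $\phi$ (as this bisimilarity is symmetric, it yields the full biconditional for $[\psi]\phi$). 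The one difference from Lemma~\ref{lemma.three} is that preservation of $\psi$ across a matched pair $M_{s'} \simeq^a N_{t'}$ must be obtained from the induction hypothesis applied to $\psi$ (again $M_{s'}, N_{t'}$ are submodels of $M^l, N^r$), rather than from bisimulation invariance (Lemma~\ref{bisimulation-preserves2}), which only applies to $\langMl$ and not to the $\langApal$ formula $\psi$.

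The genuinely new and main case is the arbitrary announcement $\Box\phi$, with $d(\Box\phi) = d(\phi) \leq n$; this is exactly where Lemma~\ref{lemma2.twofifty} is needed, and where both directions must be run separately using its two halves. For $M_i \models \Box\phi \Rightarrow N_j \models \Box\phi$, take an arbitrary $\psi' \in \langMl$: if $N_j \not\models \psi'$ then $N_j \models [\psi']\phi$ vacuously; otherwise Lemma~\ref{lemma2.twofifty} supplies a $\psi \in \langMl$ with $(M|\psi)_i \simeq^n (N|\psi')_j$ (in particular $M_i \models \psi$), and from $M_i \models \Box\phi$ we get $(M|\psi)_i \models \phi$, so the induction hypothesis on $\phi$ (with $d(\phi) \leq n$, both models being submodels of $M^l, N^r$) yields $(N|\psi')_j \models \phi$, i.e.\ $N_j \models [\psi']\phi$. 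The converse direction is symmetric, using the other half of Lemma~\ref{lemma2.twofifty} to match an arbitrary announcement $\psi$ with $M_i \models \psi$ by some $\psi'$ with $N_j \models \psi'$ and $(M|\psi)_i \simeq^n (N|\psi')_j$.

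The main obstacle, and the whole reason Lemma~\ref{lemma2.twofifty} was isolated, is precisely this $\Box\phi$ case: the $\langApal$ quantifier ranges over epistemic formulas of arbitrarily large depth, so $n$-bisimilarity cannot by itself transfer the truth of $\Box\phi$ (the phenomenon recorded by $N_t \simeq^2 N'_{t'}$ but $N_t \not\equiv^2_\mathit{apal} N'_{t'}$ in Subsection~\ref{sec.expal}). What rescues the argument here is the special structure of the chains $M^l$ and $N^r$, via Lemma~\ref{lemma2.twofifty}: any announcement on one side, however deep, can be matched by an announcement on the other side that preserves $n$-bisimilarity of the designated points, after which the depth-$n$ induction hypothesis closes the case. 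I expect the delicate points to be the correct bookkeeping of epistemic depths in the $[\psi]\phi$ construction (ensuring the induction hypothesis on $\psi$ is always invoked at an admissible depth) and the careful, non-symmetric treatment of the two directions in the $\Box\phi$ case, since $M^l$ and $N^r$ are distinct models and no blanket appeal to symmetry is available.
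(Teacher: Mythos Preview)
Your proposal is correct and follows essentially the same approach as the paper: an induction on the structure of $\phi$, with the $K_a$ and $[\psi]\phi$ cases handled exactly as in Lemma~\ref{lemma.three} (your remark that preservation of $\psi$ must come from the induction hypothesis rather than Lemma~\ref{bisimulation-preserves2} is apt, and the paper does the same), and the $\Box\phi$ case handled via Lemma~\ref{lemma2.twofifty}. The only cosmetic difference is that the paper presents the quantifier case in the dual $\Dia$ form, while you run both directions of the $\Box$ case explicitly; the content is identical.
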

\begin{proof}
  We show the equivalent formulation:
  \begin{quote}
  For all $\phi\in\langApal$, $M \subseteq M^l$, $N \subseteq N^r$, $i,j \in \Naturals$ with $i \in \domain(M)$ and $j \in \domain(N)$, and $n \in \Naturals$: 
    if $M_i \simeq^n N_j$ and $d(\phi) \leq n$, then $M_i \models \phi$ iff $N_j \models \phi$.
  \end{quote}
  The proof is by induction on the structure of $\phi$. 
  The cases of interest are $K_a \phi$, $[\psi]\phi$, and $\Box \phi$. 
  The first two cases are similar to those shown in Lemma \ref{lemma.three}, and therefore shown in less detail. 
  As $n$-bisimilarity is a symmetric relation, it suffices to show only one direction of the equivalence.

  \bigskip

  {\bf Case $K_a \phi$:} Suppose $d(K_a\phi) \leq n$. 
  We have $M_i \models K_a \phi$ if and only if for all $i'\sim_a i$, $M_{i'} \models \phi$.
  As $M_i\simeq^n N_j$, for all $j'\sim_a j$ there is some $i'\sim_a i$ such that $M_{i'}\simeq^{n-1}N_{j'}$.
  By the induction hypothesis, given $d(\phi)\leq n-1$, we have for all $j'\sim_a j$, $N_{j'}\models\phi$.
  Therefore $N_j\models K_a\phi$. 

  \bigskip

  {\bf Case $[\psi]\phi$:} Suppose $d([\psi]\phi)\leq n$, and $M_i\models[\psi]\phi$. 
  By the definition of $d$ we may suppose that $d(\psi) = x$ and $d(\phi) = y$ where $x+y\leq n$.
  Let $\bisimulation^0 \supseteq \dots \supseteq \bisimulation^n$ be such that $\bisimulation^0: M_i \simeq^0 N_j$, \dots, $\bisimulation^n: M_i \simeq^n N_j$. 
  For all $(i',j')\in \bisimulation^x$, we have $M_{i'}\simeq^xN_{j'}$, so by the induction hypothesis, $M_{i'}\models\psi$ if and only if $N_{j'}\models\psi$.
  Therefore, if $M_i\not\models\psi$ then $N_j\not\models\psi$ and vacuously $N_j\models[\psi]\phi$, as required.
  Suppose now that $M_i\models\psi$. 
  We define the series of relations from $(M|\psi)$ to $(N|\psi)$ for $z=0,\hdots,y$: 
  $\qisim^z = \{(i',j')\in\bisimulation^{n-z}\mid M_{i'}\models\psi\}$. 
  The conditions {\bf atoms}, {\bf forth} and {\bf back} for the bounded bisimulation of Definition~\ref{n-bisimulation} hold for $\qisim^0,\hdots,\qisim^y$,
  and so $(M|\psi)_i\simeq^y(N|\psi)_j$.
  Applying the induction hypothesis once again, we have $(M|\psi)_i\models\phi$ implies $(N|\psi)_j\models\phi$, and so $N_j\models[\psi]\phi$.

  \bigskip

{\bf Case $\allpas \phi$:}

To match the previous lemma, we show the dual diamond form.

\bigskip

\noindent $
M_i \models \Dia \phi \\ \Eq \\ 
\text{there is } \psi \in \langMl, M_i \models \dia{\psi}\phi \\ \Eq \\ 
\text{there is  } \psi \in \langMl, M_i \models \psi \text{ and } (M|\psi)_i \models \phi \\ \Eq \hfill \text{Lemma \ref{lemma2.twofifty}} \\ 
\text{there is  } \psi' \in \langMl, N_j \models \psi' \text{ and } (N|\psi')_j \models \phi \\ \Eq \\ 
\text{there is } \psi' \in \langMl, N_j \models \dia{\psi'}\phi \\ \Eq \\ 
N_j \models \Dia\psi$
\end{proof}

\begin{corollary} \label{lemma2.four}
Let $\phi \in \langApal$, $l < -d(\phi)$ and $r > d(\phi)$. Then $M^l_0 \models\phi$ iff $N^r_0 \models \phi$.
\end{corollary}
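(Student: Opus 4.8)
The plan is to reduce the statement to a single $n$-bisimilarity and then invoke Lemma~\ref{lemma2.three}. Set $n = d(\phi)$. Applying Lemma~\ref{lemma2.three} with $M = M^l$ and $N = N^r$ (each model is trivially a restriction of itself) and with $i = j = 0$, we learn that $M^l_0 \simeq^n N^r_0$ implies $M^l_0 \equiv^n_{\mathit{apal}} N^r_0$; since $d(\phi) = n$, the latter yields $M^l_0 \models \phi$ iff $N^r_0 \models \phi$, which is exactly the corollary. So everything follows once we establish $M^l_0 \simeq^{n} N^r_0$.

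To prove $M^l_0 \simeq^{n} N^r_0$ I would exploit the fact that the two chains are \emph{locally identical} around the evaluation point. On the set of states $\{i \in \integers : |i| \le n\}$ both $M^l$ and $N^r$ carry exactly the same $a$-links, $b$-links, and valuation: in each model $0$ is a $\neg p$-state, its $a$-neighbour is $-1$ and its $b$-neighbour is $1$, and $p$ holds precisely at the odd-numbered states. The hypotheses $l < -n$ and $r > n$ are precisely what guarantees that all of these states lie in both domains and, moreover, are interior, since the left edge $l$ of $M^l$ and the right edge $r$ of $N^r$ are strictly more than $n$ steps away from $0$, so no edge effect is visible within radius $n$.

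Concretely, I would define the descending chain of relations $\bisimulation^0 \supseteq \bisimulation^1 \supseteq \dots \supseteq \bisimulation^n$ by $\bisimulation^k = \{(i,i) : i \in \integers,\ |i| \le n-k\}$, and prove by induction on $k$ that $\bisimulation^k$ is a $k$-bisimulation in the sense of Definition~\ref{n-bisimulation}. The base case $k=0$ is immediate, since each pair relates a state to itself, so {\bf atoms}-$p$ holds. For the step, take $(i,i) \in \bisimulation^k$ with $k \ge 1$, so that $|i| \le n-k \le n-1$; the two neighbours $i-1$ and $i+1$ of $i$ satisfy $|i \pm 1| \le n-(k-1)$ and hence $(i\pm1,i\pm1) \in \bisimulation^{k-1}$, and the reflexive pair $(i,i)$ also lies in $\bisimulation^{k-1}$. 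Because the local link and valuation structure is identical in the two models, the {\bf forth}-$a$, {\bf back}-$a$, {\bf forth}-$b$ and {\bf back}-$b$ witnesses are obtained by matching each state to itself. This gives $(0,0) \in \bisimulation^n$, i.e.\ $M^l_0 \simeq^n N^r_0$.

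The only point requiring genuine care is the bookkeeping in the inductive step: one must check that every state reachable within $n$ moves from $0$ is interior in \emph{both} models, so that its unique $a$-successor and unique $b$-successor exist and agree under the identity matching. This is exactly where the strict inequalities $l < -n$ and $r > n$ are used (to keep both edges outside radius $n$). Everything else is routine, and no real difficulty arises here, because the hard work—transferring $n$-bisimilarity to $n$-indistinguishability in \langApal—has already been carried out in Lemma~\ref{lemma2.three}.
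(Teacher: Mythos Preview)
Your proposal is correct and follows exactly the approach the paper intends: the corollary is an immediate consequence of Lemma~\ref{lemma2.three} once one observes that $M^l_0 \simeq^{d(\phi)} N^r_0$, and the paper simply leaves that observation implicit. Your explicit construction of the descending chain $\bisimulation^k = \{(i,i) : |i| \le n-k\}$ is a clean way to make the $n$-bisimilarity precise, and your remark that the strict inequalities $l < -n$, $r > n$ keep both edges outside radius $n$ is exactly the point.
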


\begin{theorem}\label{expressivity-s52}
    \logicApal{} is not at least as expressive as \logicPapal{} for multiple agents.
\end{theorem}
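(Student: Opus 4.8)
The plan is to mirror the structure of the proof of Theorem~\ref{expressivity-s5}: exhibit a single \langPapal{} formula that separates the two families of models $M^l_0$ and $N^r_0$, and then argue that no \langApal{} formula can reproduce that separation, because any fixed \langApal{} formula has bounded epistemic depth and is therefore unable to detect on which side of the designated state $0$ the chain is truncated once that truncation lies beyond its depth.

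Concretely, I would take the separating formula to be $\allppas (L_b p \imp L_a p)$ and argue by contradiction. Suppose some $\phi \in \langApal$ is equivalent to $\allppas (L_b p \imp L_a p)$ on all pointed \classS{} models, and let $n = d(\phi)$ be its epistemic depth. Now choose a negative odd integer $l$ with $l < -n$ and a positive even integer $r$ with $r > n$, so that in $M^l$ the $a$-side edge at $l$ lies more than $n$ steps from $0$, and in $N^r$ the $b$-side edge at $r$ lies more than $n$ steps from $0$.

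Next I would invoke the facts already established for these models. By Lemma~\ref{lemma2.one}, $M^l_0 \models \allppas (L_b p \imp L_a p)$, and by Lemma~\ref{lemma2.two}, $N^r_0 \not\models \allppas (L_b p \imp L_a p)$. Combining each with the assumed equivalence gives $M^l_0 \models \phi$ while $N^r_0 \not\models \phi$. On the other hand, since $l < -d(\phi)$ and $r > d(\phi)$, Corollary~\ref{lemma2.four} yields that $M^l_0 \models \phi$ if and only if $N^r_0 \models \phi$. These two conclusions contradict each other, so no such $\phi$ exists, which is exactly the assertion that \logicApal{} is not at least as expressive as \logicPapal{} for multiple agents.

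Finally, I would flag where the real weight of the argument sits: the theorem itself is a routine assembly, whereas the substantive obstacle has already been cleared in the supporting lemmas, in particular Corollary~\ref{lemma2.four}. The crux is the depth-insensitivity captured by Lemma~\ref{lemma2.three} (via Lemma~\ref{lemma2.twofifty}): an \logicApal{} quantifier ranges over announcements of \emph{arbitrarily large} epistemic depth, and yet one must still show that bounded bisimilarity $M^l_0 \simeq^n N^r_0$ is preserved inside the bounded-modal-equivalence induction. That is precisely the point where a distinguishing announcement on one chain must be matched, up to $n$-bisimilarity, by some announcement on the other — the content of Lemma~\ref{lemma2.twofifty} — and it is there, rather than in the present theorem, that the technical difficulty actually lies.
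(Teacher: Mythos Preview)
Your proposal is correct and follows essentially the same route as the paper: the same separating formula $\allppas(L_b p \imp L_a p)$, the same contradiction argument against a putative \langApal{} equivalent $\phi$ of depth $d(\phi)$, and the same three ingredients (Lemmas~\ref{lemma2.one}, \ref{lemma2.two}, Corollary~\ref{lemma2.four}) assembled in the same order. Your concluding remark that the technical weight lies in Lemma~\ref{lemma2.three} via Lemma~\ref{lemma2.twofifty} is also on target.
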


\begin{proof}
Consider the formula $\allppas (L_b p \imp L_a p)$. Let us suppose that $\allppas (L_b p \imp L_a p)$ is equivalent to a $\langApal$ formula $\phi$. The epistemic depth of this formula is $d(\phi)$. Let $M^l_0$ and $N^r_0$ be such that $|l|,r > d(\phi)$. Then:
\begin{enumerate}
\item  $M^l_0 \models \allppas (L_b p \imp L_a p)$ (Lemma \ref{lemma2.one});
\item $N^r_0 \not\models\allppas (L_b p \imp L_a p)$ (Lemma \ref{lemma2.two});
\item $M^l_0 \models\phi$ iff $N^r_0 \models \phi$ (Corollary \ref{lemma2.four}).
\end{enumerate}
This is a contradiction. Therefore, no such equivalent $\langApal$ formula exists.
\end{proof}

\begin{corollary}
\logicApal{} and \logicPapal{} have incomparable expressivity.
\end{corollary}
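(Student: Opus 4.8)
The plan is to observe that this corollary is an immediate consequence of the two preceding theorems together with the definition of \emph{incomparable} expressivity given at the start of Subsection~\ref{sec.expal}. Recall that definition: two languages $\lang$ and $\lang'$ are incomparable precisely when $\lang$ is not at least as expressive as $\lang'$ \emph{and} $\lang'$ is not at least as expressive as $\lang$. So the only task is to supply both non-inclusions, each of which has already been established.

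First I would invoke Theorem~\ref{expressivity-s5}, which states that \logicPapal{} is not at least as expressive as \logicApal{} for multiple agents; this is witnessed by the \langApal{} formula $\Box(K_b K_a p \vel K_b \neg K_a p)$, which separates the models $M^\omega_0$ and (for each sufficiently large $m$) $M^m_0$ that no \langPapal{} formula can separate. Next I would invoke Theorem~\ref{expressivity-s52}, giving the symmetric fact that \logicApal{} is not at least as expressive as \logicPapal{}, witnessed by the \langPapal{} formula $\allppas(L_b p \imp L_a p)$, which separates $M^l_0$ and $N^r_0$ that no \langApal{} formula of bounded depth can separate. Conjoining these two statements and appealing directly to the definition yields that \logicApal{} and \logicPapal{} are incomparable.

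There is no genuine obstacle at this final step: all the mathematical substance resides in Theorems~\ref{expressivity-s5} and~\ref{expressivity-s52}, and the corollary is a purely definitional repackaging. The only thing to be careful about is to state the quantifier restriction faithfully, namely that incomparability is asserted \emph{for multiple agents}; for a single agent both logics collapse to \logicS{} and are therefore equally expressive, so the scope of the corollary should inherit the multi-agent hypothesis of the two theorems it combines.
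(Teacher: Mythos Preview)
Your proposal is correct and matches the paper's own proof, which simply cites Theorems~\ref{expressivity-s5} and~\ref{expressivity-s52}. Your additional remark that the multi-agent hypothesis must be inherited is a welcome clarification that the paper leaves implicit.
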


\begin{proof}
From Theorem \ref{expressivity-s5} and Theorem \ref{expressivity-s52}.
\end{proof}

The relative expressivity of \logicPapal{} to group announcement logic and coalition announcement logic, mentioned in the introduction, has recently been addressed in \cite{frenchetal:2019,Galimullin19}. It is shown that {\it GAL} is not at least as expressive as {\it CAL} and that \logicApal{} is not at least as expressive as {\it CAL}, with chain models for three agents instead of the two agent $a$-$b$-chains in our contribution. Whether {\it CAL} is not at least as expressive as {\it GAL} is an open question.

\section{Axiomatisation}\label{axiomatisation}

In this section we provide a sound and complete axiomatisation for arbitrary
positive announcement logic. It is as the (infinitary) axiomatisation for arbitrary public announcement logic given by
Balbiani {\em et al.}~\cite{balbianietal:2008,balbianietal:2015}, but with
restrictions to positive announcements in appropriate axioms.

\begin{definition}
    Consider a new symbol $\sharp$. The {\em necessity forms} are defined inductively as:
    $$\psi(\sharp) ::= \sharp \mid (\phi \implies \psi(\sharp)) \mid \announceA{\phi} \psi(\sharp) \mid \knows[\agentA] \psi(\sharp)$$
    where $\phi \in \langPapal$ and $\agentA \in \agents$.
\end{definition}
A necessity form contains a unique occurrence of the symbol $\sharp$. If $\psi(\sharp)$ is a necessity form and $\phi \in \langPapal$, then $\psi(\phi) \in \langPapal$, where $\psi(\phi)$ stands for the substitution of the unique occurrence of $\sharp$ in $\psi(\sharp)$ by $\phi$. We also call $\psi(\phi)$ an {\em instantiation} of $\psi(\sharp)$.

%A possibility form is the dual of a necessity form. Necessity and possibility forms contain  a unique occurrence of the symbol $\sharp$. If $\psi(\sharp)$ is a necessity or possibility form and $\phi \in \langPapal$, then $\psi(\phi)$ is $\psi(\sharp)[\phi/\sharp]$ (which stands for the substitution of the unique occurrence of $\sharp$ in $\psi(\sharp)$ by $\phi$), and $\psi(\phi) \in \langPapal$.

The axiomatisation \axiomPapal{} is given below. A formula is a {\em theorem} if it belongs to the least set of formulas containing all axioms and closed under the derivation rules. %If $\phi$ is a theorem, we write $ \phi$.

\begin{definition}
    The axiomatisation \axiomPapal{} consists of the following axioms and rules. In the rule ${\bf R+^\omega}$, the expressions $\chi(\announceA{\psi} \phi)$ and $\chi(\allppas \phi)$ are instantiations of a necessity form $\chi(\sharp)$.
    $$
    \begin{array}{llll}
        {\bf P} & \text{All propositional tautologies} &
        {\bf K} &  \knows[\agentA] (\phi \implies \psi) \implies (\knows[\agentA] \phi \implies \knows[\agentA] \psi)\\
        {\bf T} &  \knows[\agentA] \phi \implies \phi&
        {\bf 4} &  \knows[\agentA] \phi \implies \knows[\agentA] \knows[\agentA] \phi\\
        {\bf 5} &  \neg \knows[\agentA] \phi \implies \knows[\agentA] \neg \knows[\agentA] \phi&
        {\bf AP} &  \announceA{\phi} \atomP \iff (\phi \implies \atomP)\\
        {\bf AN} &  \announceA{\phi} \neg \psi \iff (\phi \implies \neg \announceA{\phi} \psi)&
        {\bf AC} &  \announceA{\phi} (\psi \land \chi) \iff (\announceA{\phi} \psi \land \announceA{\phi} \chi)\\
        {\bf AK} &  \announceA{\phi} \knows[\agentA] \psi \iff (\phi \implies \knows[\agentA] \announceA{\phi} \psi)&
        {\bf AA} &  \announceA{\phi} \announceA{\psi} \chi \iff \announceA{\phi \land \announceA{\phi} \psi} \chi\\
        {\bf A+} &  \allppas \phi \implies \announceA{\psi} \phi \text{ where } \psi \in \langMlPlus&
        {\bf MP} & \text{From }  \phi \text{ and }  \phi \implies \psi \text{ infer }  \psi\\
        {\bf NecK} & \text{From }  \phi \text{ infer }  \knows[\agentA] \phi&
        {\bf NecA} & \text{From }  \phi \text{ infer }  \announceA{\psi} \phi\\
        &&{\bf R+^\omega} &\text{From }  \chi(\announceA{\psi} \phi) \text{ for every } \psi \in \langMlPlus \text{ infer } \chi(\allppas \phi)    \end{array}
    $$
\end{definition}

The axiomatisation \axiomPapal{} is identical to the axiomatisation $\axiomApal^\omega$ in \cite{balbianietal:2008} and to the axiomatisation \axiomApal\ in \cite{balbianietal:2015}, except for the replacement of the \logicApal\ $\Box$ by the \logicPapal\ $\allppas$ on two occasions, resulting in the axiom {\bf A+}
    and the rule ${\bf R+^\omega}$. Other, non-essential differences are the different names for axioms and rules, for example the axiom we call {\bf K} they call $A1$, the axiom we call {\bf T} they call $A4$, and so on; and the presence of additional, known to be derivable, axioms in \cite{balbianietal:2015}.

Note that the proof of completeness of \axiomApal\ given in \cite{balbianietal:2008} was wrong and that a correct proof of completeness has been given in \cite{balbianietal:2015,philippe.corrected:2015}.

\begin{theorem}
    The infinitary axiomatisation \axiomPapal{} is sound and complete for the logic \logicPapal{}.
\end{theorem}

\begin{proof}
    The soundness of the axiomatisation is evident as the axiom {\bf A+}
    and the rule ${\bf R+^\omega}$ follow the semantics of the $\allppas$
    operator (just as their non-positive counterparts followed the semantics of the $\Box$ operator), and all remaining axioms and rules are, as well-known, standard from
    epistemic logic and public announcement logic.

    The completeness proof proceeds exactly as in~\cite{balbianietal:2015}, with appropriate restrictions from epistemic announcements to
    positive announcements in the cases of {\bf A+} and ${\bf R+^\omega}$.

    More precisely, the
    positive arbitrary announcement operator $\allppas$ only features in
    the subinductive case $\announceA{\psi} \allppas \chi$ and in the
    inductive case $\allppas \psi$ of the proof of the Truth Lemma. The Truth Lemma for \logicApal\ is proved by a complexity measure wherein $[\psi]\phi$ is less complex than $\Box\phi$ for any $\psi\in\langMl$. Similarly, $[\psi]\phi$ is less complex than $\allppas\phi$ for any $\psi\in\langMlPlus$. This justifies that substituting `epistemic' for `positive' in appropriate places is sufficient.
    
No other changes are required.   
    \end{proof}
 
    \weg{   
    \begin{proof}
    For
    clarity we show these cases are indeed correct.
    
    The completeness proof is with a standard canonical model
    technique. The canonical model $M^c$ consists of the set $S^c$ of
    maximally consistent sets of \langPapal{} formulas.  The accessibility
    relations $\sim_\agentA^c$ are defined in the usual way, in terms of
    epistemic formulas $\knows[\agentA] \phi$ (two maximally consistent sets are indistinguishable if they contain the same formulas $\knows[\agentA] \phi$). No modification of the accessibility relation is required to account for the 
    $\allppas$ operators. The valuation is also defined in the usual way, namely as: $p \in V^c(s)$ if and only if $p \in s$. 

To show completeness we must show that the Truth
    Lemma holds: \begin{quote} For every maximally consistent set of formulas
    $\stateS \in \states^c$ and every $\phi \in \langPapal$, $\phi \in
    \stateS$ if and only if $M^c_s \entails \phi$.\end{quote}  
    The Truth Lemma is shown by induction on the formula $\phi$. This proof has two special features. In the first place, the inductive proof uses a subinduction for the case of `public announcement', i.e., a subinduction for the case where $\phi = [\psi]\psi'$, namely on the structure of $\psi'$. So we have cases $[\psi]p$, $[\psi](\psi_1\et\psi_2)$, etc. In the second place, to make the induction and the subinduction work, it uses a complexity measure $<^\textit{Size}_d$ on formulas~\cite{balbianietal:2015}, that we
    defined {\em identically} on \langPapal{} as on \langApal{}. Heaviest in the measure $<^\textit{Size}_d$ count the number of $\allppas$ boxes in a formula: more boxes is more complex. Then, for a given a number of boxes we compare formulas by a measure that {\em refines} the subformula complexity. For example, take $[\phi]\neg\psi$. It now is not only that case the $\phi$ and $\neg\psi$ are less complex than $[\phi]\neg\psi$ (they are subformulas), but it additionally states that $\neg [\phi]\psi$ is less complex than $[\phi]\neg\psi$ (not a subformula). Similarly, $K_a [\phi]\psi$ is less complex than $[\phi]K_a\psi$, etc. Such refinements are motivated by the axioms in the proof system, in these two cases: {\bf AN} $\announceA{\phi} \neg \psi \iff (\phi \implies \neg \announceA{\phi} \psi)$ and {\bf AK} $\announceA{\phi} \knows[\agentA] \psi \iff (\phi \implies \knows[\agentA] \announceA{\phi} \psi)$. On the assumption of that the left-hand part of these equivalences is in a maximally consistent set, and given the closure of maximally consistent sets under derivability, we can apply some proof steps and conclude that the right-hand part of these equivalences is in the maximally consistent set, and then apply induction as its complexity is lower. Examples are given below.

The
    positive arbitrary announcement operator $\allppas$ only features in
    the subinductive case $\announceA{\psi} \allppas \chi$ and in the
    inductive case $\allppas \psi$ of the proof of the Truth Lemma. The only change with respect to~\cite{balbianietal:2015} is the restriction to positive
    formulas in the appropriate places. The revised proofs for these
    cases are as follows. 

\bigskip

    {\bf Case $\phi = \announceA{\psi} \allppas \chi$}.
    The following conditions are equivalent:
    \begin{enumerate}
        \item $\announceA{\psi} \allppas \chi \in s$.
        \item For every $\theta \in \langMlPlus$: $\announceA{\psi} \announceA{\theta} \chi \in \stateS$.
        \item For every $\theta \in \langMlPlus$: $\pointedModel{\stateS} \entails \announceA{\psi} \announceA{\theta} \chi$.
        \item $\pointedModel{\stateS} \entails \announceA{\psi} \allppas \chi$.
    \end{enumerate}
    From step 2 to step 1 we use the derivation rule ${\bf R+^\omega}$ on the formula $\announceA{\psi} \announceA{\theta} \chi$ (instantiating the necessity form $\announceA{\psi}\sharp$) and the closure of maximally consistent sets under ${\bf R+^\omega}$. 
    From step 1 to step 2 we use the axiom {\bf A+}, and propositional reasoning and necessitation for announcement {\bf NecA}. 
    The equivalence between step 2 and step 3 is justified by applying, for each $\theta \in \langMlPlus$, the inductive hypothesis. The inductive hypothesis can be used because $\announceA{\psi} \allppas \chi$ contains one more $\allppas$ operator than $\announceA{\psi} \announceA{\theta} \chi$, and in the complexity measure $<^\textit{Size}_d$, more $\allppas$ boxes means more complex. The equivalence between step 3 to 4 follows from the semantics of the $\allppas$ operator.

    Therefore $\announceA{\psi} \allppas \chi \in \stateS$
    if and only if $\pointedModel{\stateS} \entails \announceA{\psi} \allppas \chi$.

\bigskip

    {\bf Case $\phi = \allppas \psi$}.
    The following conditions are equivalent:
    \begin{enumerate}
        \item $\allppas \psi \in \stateS$.
        \item For every $\theta \in \langMlPlus$: $\announceA{\theta} \psi \in \stateS$.
        \item For every $\theta \in \langMlPlus$: $\pointedModel{\stateS} \entails \announceA{\theta} \psi$.
        \item $\pointedModel{\stateS} \entails \allppas \psi$.
    \end{enumerate}
The equivalence between step 1 and step 2 is similarly justified as in the case $\phi = \announceA{\psi} \allppas \chi$, only instead of the necessity form $[\psi]\sharp$ we now use the basic necessity form $\sharp$. And, also similarly, the equivalence between 2 and 3 follows from applying the inductive hypothesis, which is justified because for every positive epistemic formula $\theta$, $\announceA{\theta} \psi$ contains one less $\allppas$-box than $\allppas \psi$, i.e., $\announceA{\theta} \psi <^\textit{Size}_d \allppas \psi$. The equivalence between step 3 to 4 is as above.

    Therefore $\allppas \psi \in \stateS$ if and only if $\pointedModel{\stateS} \entails \allppas \psi$.
\end{proof}

\bigskip

For all remaining matters of the completeness of the axiomatisation, such as technical details of the canonical model construction, the exact definition of the complexity measure, and other inductive cases of the Truth Lemma, we refer to \cite{balbianietal:2015}.
}

We note that \axiomPapal{} is an infinitary axiomatisation, as the rule 
${\bf R+^\omega}$ requires an infinite number of premises. Just as for the infinitary axiomatisation of the logic \logicApal, it is unknown if a finitary axiomatisation exists.

\section{Conclusion}\label{future-work}

We presented a
    variant of arbitrary public announcement logic called {\em positive arbitrary
    public announcement logic}, \logicPapal, which restricts arbitrary public
    announcements to announcement of {\em positive formulas}. We showed that the model checking complexity of \logicPapal{} is PSPACE-complete, that \logicPapal{} is more
    expressive than public announcement logic \logicPal, that it is
    incomparable with \logicApal, and we provided a sound and complete infinitary axiomatisation. 
The proof of the decidability of \logicPapal{} is reported in a companion paper \cite{papaldec:2018}.

\bibliographystyle{abbrv}
\bibliography{biblio2020}
%\bibliography{qupal}

\end{document}